\newcommand{\Max}{\mbox{max}}
\newcommand{\Min}{\mbox{min}}
\newcommand{\problemtitle}[1]{\gdef\@problemtitle{#1}}
\newcommand{\probleminput}[1]{\gdef\@probleminput{#1}}
\newcommand{\problemquestion}[1]{\gdef\@problemquestion{#1}}
  \par\addvspace{.4\baselineskip}
  \par\addvspace{.4\baselineskip}
  \par\addvspace{.4\baselineskip}
  \par\addvspace{.4\baselineskip}
\newcommand{\type}{\mbox{\textit{type}}}
\newcommand{\cX}{\mathcal{X}}
\newcommand{\cY}{\mathcal{Y}}
\newcommand{\nop}[1]{}
\newcommand{\Sat}{\textsc{Sat}}
\newcommand{\powerset}[1]{\mathcal{P}(#1)}
\newcommand{\pow}{\mathcal{P}(X) \setminus \{\emptyset\}}
\newcommand{\LetX}{Let $X$ be a set, $\leq$ a linear order on $X$ and 
$\cX \subseteq \pow$}
\newtheorem{Thm}{Theorem}
\newtheorem{Prop}[Thm]{Proposition}
\newtheorem{Cly}[Thm]{Corollary}
\newtheorem{Lem}[Thm]{Lemma}
 \newtheoremstyle{Empty}
      {\topsep}{\topsep}              
      {\itshape}                      
      {0pt}                              
      {\bfseries}                     
      {. }                             
      {  }                             
      {\thmnote{\bfseries #3}}
\theoremstyle{Empty}
\theoremstyle{definition}
\newtheorem{Def}[Thm]{Definition}
\newtheorem{Exp}[Thm]{Example}
\newtheorem{Obs}[Thm]{Observation}
\newtheorem{Axm}[Thm]{Axiom}
\begin{document}

\title{Ranking Sets of Objects: The Complexity of \\Avoiding Impossibility Results}

\author{Jan Maly \\ j.f.maly@uva.nl \\
       Institute for Logic, Language and Computation,\\ University of Amsterdam, Amsterdam, The Netherlands}

\maketitle

\begin{abstract}
The problem of lifting a preference order on a set of objects
to a preference order on a family of \emph{subsets} of this set
is a fundamental problem with a wide variety of applications in AI.
The process
is often guided by axioms postulating 
properties the lifted order
should have. 
Well-known impossibility results 
by Kannai and Peleg and by Barber{\`a} and Pattanaik 
tell us that some desirable axioms 
-- namely dominance and (strict) independence --
are not jointly satisfiable for any linear order on the objects
if \emph{all} non-empty sets of objects
are to be ordered.
On the other hand, if \emph{not all} non-empty sets of objects are to be ordered,
the axioms are jointly satisfiable
for all linear orders on the objects for some families of sets.
Such families are very important for applications as
they allow for the use of lifted orders, for example, in combinatorial voting.
In this paper, we determine the computational complexity of 
recognizing such families. 
We show that it is 
$\Pi_2^p$-complete to decide for a given family of subsets whether 
dominance and independence or dominance and strict independence
are jointly satisfiable \emph{for all} linear orders on the objects
if the lifted order needs to be total.
Furthermore, we show that the problem remains \coNP-complete 
if the lifted order can be incomplete.
Additionally, we show that the complexity of these problems 
can increase exponentially if the family of sets is not given explicitly 
but via a succinct domain restriction.
Finally, we show that it is \NP-complete to decide
for a family of subsets whether 
dominance and independence or dominance and strict independence
are jointly satisfiable for \emph{at least one} linear order on the objects.
\end{abstract}

\section{Introduction}
Modeling preferences over alternatives is a major challenge  
in many areas of AI, for example in knowledge representation 
and, especially, in computational social choice. 
If the number of alternatives is small enough,
preferences are most often modeled 
as a total order.
However, in many applications the alternatives are `combinatorial',
for example bundles of objects in packing or allocation problems \citep{BouveretCM16},
or committees in voting \citep{lang2016,faliszewski2017}.
In such situations,
the number of alternatives 
grows exponentially with the number of objects
which makes it unfeasible for agents to specify a full 
preference relation over all alternatives.

Different approaches to solving this problem have been discussed in the literature.
One approach -- which we will consider in this paper --
is inferring an order on sets of alternatives from
an order on the alternatives. 
This is also called lifting an order from objects to sets of objects.
This approach occurs frequently in different applications in
computer science, for example in the fair allocation of indivisible goods 
\citep{BouveretCM16}, in the study of strategic behaviors in voting 
with a tie-breaking mechanism \citep{Fishburn1972,Barbera1977,BrandtB18},
in decision making when there is uncertainty about the consequences
of an action \citep{LarbiKM10} or in structured argumentation, if we want to
compare arguments according to the strengthen of the (defeasible) assumptions
they are based on \citep{BeirlaenHPS18,AAAI21}.

One particularly interesting setting in which sets of objects need to be
ranked is combinatorial voting \citep{lang2016}, where 
many approaches for generating a ranking on sets of objects
have been studied, none of which can be considered completely convincing
if the preferences are not seperable (see Related Work).
Here, the order lifting approach could offer a new and promising approach 
to dealing with the intractable number of alternatives. Indeed,
a single winner (ordinal) voting rule combined with
an order lifting procedure would immediately yield a combinatorial voting 
method, either by applying the lifting procedure to the input of the voting rule
or to its output (see Figure~\ref{fig:1}).

\begin{figure}
\begin{minipage}{0.49\textwidth}
\begin{center}
\begin{tikzpicture}[font={\footnotesize},node distance=3ex and 4ex,
]
\begin{scope}[every node/.style={draw}]
\node[rounded corners,  minimum width = 27ex] at (0,0) (1) {Voting Rule};
\node[rounded corners,  minimum width = 27ex] at (0,-1.5) (2) {Lifting Procedure};
\node[rounded corners, ] at (0,-3) (4) {Voter 2};
\node[rounded corners] at (-1.5,-3) (3) {Voter 1};
\node[rounded corners] at (1.5,-3) (5) {Voter 3};
\end{scope}

\path[<-]
(1) edge (2)
(-1.5,-1.76) edge (3)
(1.5,-1.76) edge (5)
(2) edge (4)
(-1.5,-0.26) edge (-1.5,-1.26)
(1.5,-0.26) edge (1.5,-1.26)
(1.5,-1.76) edge (5)
;
\node[fill = white,draw = none] at (0,-2.25) (6) {Preferences on objects};
\node[fill = white, draw = none] at (0,-0.75) (6) {Preferences on sets};
\end{tikzpicture}
\end{center}
\end{minipage}
\begin{minipage}{0.49\textwidth}
\begin{center}
\begin{tikzpicture}[font={\footnotesize},node distance=3ex and 4ex,
]
\begin{scope}[every node/.style={draw}]
\node[rounded corners,  minimum width = 27ex] at (0,0) (1) {Lifting Procedure};
\node[rounded corners,  minimum width = 27ex] at (0,-1.5) (2) {Voting Rule};
\node[rounded corners, ] at (0,-3) (4) {Voter 2};
\node[rounded corners] at (-1.5,-3) (3) {Voter 1};
\node[rounded corners] at (1.5,-3) (5) {Voter 3};
\end{scope}

\path[<-]
(1) edge (2)
(-1.5,-1.76) edge (3)
(1.5,-1.76) edge (5)
(2) edge (4)
(-1.5,-0.26) edge (-1.5,-1.26)
(1.5,-0.26) edge (1.5,-1.26)
(1.5,-1.76) edge (5)
;
\node[fill = white,draw = none] at (0,-2.25) (6) {Preferences on objects};
\node[fill = white, draw = none] at (0,-0.75) (6) {Preferences on objects};
\end{tikzpicture}
\end{center}
\end{minipage}
\caption{The use of lifted orders in voting}
\label{fig:1}
\end{figure}
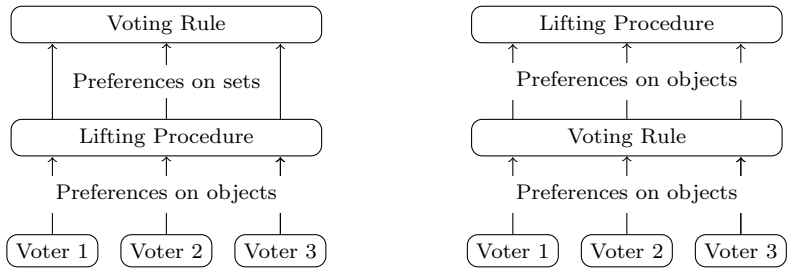

However it is not clear how such a lifted order on sets can be defined.
\citet{barbera2004} give an excellent survey on the 
progress that has been made to solve this question.
Essentially, there are two ways to study the order lifting approach.
First, one can analyze specific methods to infer an order 
on sets from an order on the objects \citep[see e.g.]{MorettiT12}.
Second, one can analyze which properties an optimal order on the sets 
should have in a given setting.
These desirable properties can then be formulated as axioms
and one can try to identify orders that satisfy these axioms. 
In some settings, it is possible to find orders that satisfy all properties deemed desirable
in the setting \citep[see e.g.]{pattanaik1990}.
Unfortunately, in many other settings 
no order can satisfy all desirable properties at once.
One of the most important of these so-called impossibility results
states that two axioms called \emph{Dominance} and \emph{Independence} are in general incompatible
\citep{kannai1984}.
Dominance states, intuitively,
that removing the least preferred element from a set improves the set
and removing the most preferred element worsens the set.
Independence states, roughly, that if a set is preferred to another set 
and the same element is added to both sets, this preference cannot be reversed.

\citet{kannai1984} showed that
dominance and independence are incompatible
if one wants to order all subsets of a set with at least six elements.
In the same year \citet{barbera1984} showed that 
dominance and a strengthening of independence called strict independence
are incompatible already if one wants to order all subsets of a three element set.
These results assume that all subsets of a given set need to be ordered,
but in many applications only some of these subsets 
are possible alternatives. 
For example, it is very common in multiwinner or combinatorial voting to have some
form of domain restrictions \citep{lang2016,Kilgour16}.
Now, it is possible to construct arbitrary large families of sets
-- for example families of disjoint sets -- that
can be ordered with an order satisfying dominance and (strict) 
independence.\footnote{We write `(strict) independence' as a shorthand for
`independence or strict independence'.}
Motivated by this observation, \citet{Maly2017} have shown that it is
\NP-complete to decide whether a given order on elements
can be lifted to an order satisfying dominance and (strict) independence.

However, for applications in voting or other social choice problems,
it is necessary to fix a voting method 
before the ballots are collected.
Therefore, it is more important to know 
for a given family of sets
if dominance and (strict) independence are compatible
for any preference order the agents may report.\footnote{Observe that the hardness of this problem
does not follow from the hardness of the aforementioned problem treated by
\citet{Maly2017}.} 
Following \citet{JAIR} 
we call families of sets for which any possible order on the elements
can be lifted to an order satisfying dominance and (strict) independence
\emph{strongly} orderable with respect to
dominance and (strict) independence.
\citet{JAIR} studied this concept
for a specific class of families of sets,
namely for families of sets that can be represented
as the family of all sets of vertices that induce a connected subgraphs in a given graph.
One of their main results is a classification result that
implies that strong orderability with respect
to dominance and strict independence
can be decided in polynomial time
for families in this restricted class.

In this paper, we show that this result cannot be generalized
to arbitrary families of sets.
We show that it is in general
$\Pi_2^p$-complete to decide whether a family of sets is
strongly orderable with respect to dominance and strict independence.
This result also holds if we replace strict independence by independence
or additionally require a natural axiom called the extension rule.
These results assume that we require the order on the family of sets to be total.
However, some authors argue that it is more sensible 
to only require incomplete preferences in combinatorial domains \citep{boutilier2016}.
Voting rules that facilitate the aggregation of partial orders or 
even weaker preference models exist \citep{xia2011,endriss2019}.
Therefore, we investigate how much the complexity of the studied problems can be reduced by
dropping the requirement that the lifted order needs to be total.
In particular, we show that it is \coNP-complete to decide
whether a family of sets is strongly orderable
if we require the order on the family to be a partial order.

These results assume that the family of sets is given explicitly.
However, in many applications, the family of sets is only given implicitly
due to its size which necessitates the use of lifted orders in the first place.
For example, the domain in combinatorial voting is often given as
a condition that has to be satisfied by the admissible sets.
Such conditions can for example be formulated as propositional formulas \citep{lang2016}.
These formulas are normally exponentially smaller than the 
actual family of sets, which can increase the complexity 
of deciding if the family is 
strongly orderable.
On the other hand, families of sets 
must have some internal structure to be succinctly represented.
This internal structure may decrease the complexity of the problem,
as is the case for the domain restrictions considered by \citet{JAIR}.
We show -- for boolean circuits, a succinct representation that is well studied 
in the literature -- that succinct representation can lead to a massive blow up in complexity.
It turns out that it can be \NEXP-hard to
decide whether a succinctly represented family is
strongly orderable with respect to dominance and strict independence for 
total orders and \coNEXP-complete for partial orders.
The first result also holds if only independence is required instead of strict independence.

Finally, in some applications, for example when determining the most preferred 
models in a knowledge representation formalism \citep{BrewkaNT03}, we may want to use lifted rankings when possible and
only require more information from the users if they submitted a ranking that cannot be lifted.
In this case, it is valuable to know if there is no chance 
that a submitted ranking can be lifted such that we can ask for more information immediately.
Therefore, we also study the complexity of deciding whether a family of sets 
is \emph{weakly} orderable, that is if there is at least one order on the elements that
can be lifted to an order satisfying dominance and (strict) independence.
In particular, we show that it is \NP-complete to decide whether a family
of sets is weakly orderable with respect to dominance and strict independence.
This result also holds if the extension rule is additionally required and if
only a partial order is required.

Together, our results give a nearly complete picture of the complexity
of determining whether dominance and (strict) independence are compatible
on a given family of sets.
To conclude the paper, we investigate the robustness of the obtained 
hardness results to modifications of the dominance axiom.
For example, one could define a variation of dominance where several elements can be
added at once.
We show that deciding whether strict independence is jointly satisfiable
with any dominance-like axiom is NP-complete.

\subsection{Related Work}

The order lifting problem has been extensively studied by
authors from a wide range of scientific disciplines,
including economists,
mathematicians, 
and computer scientists.
It is beyond the scope of this paper to give a complete overview over
this literature. Instead, we give only a very short overview over the history of 
the order lifting problem.

The idea of ranking sets of objects based on a ranking of the objects is very old.
For example, since antiquity humans order words lexicographically 
based on an order on the letters of the alphabet. \citet{Daly67}
gives an interesting account of the development of this 
technique by ancient scholars. Other orders, like ordering 
by maximal elements are most probably as old. 
A greater academic interest in the order lifting problem was sparked by the 
famous result by \citet{Gibbard73}
and \citet{Satterthwaite75} that no resolute voting rule can be strategyproof.
Studying the strategyproofness of irresolute
voting rules necessitated the definition of rankings of sets of candidates 
based on rankings of candidates. Noteworthy attempts were made by 
\citet{Fishburn1972}, \citet{gardenfors1976} and \citet{Kelly77}.
Additionally, lottery based rankings were often considered \citep{DugganS00}.
These are still the most widely used rankings for studying the strategyproofness 
of irresolute voting rules \citep{Handbook-Tournament,barbera2011strategyproof}.

The axiomatic approach to the order lifting problem considered in this paper
is significantly younger. First works were published from the 
fifties onward, for example by \citet{kraft1959} and \citet{Kim80}. 
The seminal result that sparked a lot of interest in this 
area was Kannai and Peleg's famous impossibility result \citep{kannai1984}.
We will discuss this result, that is one of the key motivations of this paper,
in Section~\ref{sec:impos} in detail. In the following years, a
significant number of papers more or less directly inspired 
by Kannai and Peleg's result were published. 
Important examples include papers by \citet{barbera1984}, \citet{Holzman84}, \citet{BarberaBP1984},
\citet{fishburn}, \citet{bandyopadhyay}, \citet{bossert1995}, \citet{Kranich96},
\citet{bossert2000} and \citet{Dutta05}.
For a complete overview over this line of research, we refer the reader again
to the great survey by \citet{barbera2004}.

More recent developments include characterization results for 
decision making under complete uncertainty \citep{LarbiKM10,bossert2012}
as well as novel approaches for ranking sets of interacting elements
\citep{MorettiT12,lucchetti2015ranking}.
Furthermore, \citet{GeistE11} managed to show several new impossibility 
results with computer generated proofs using \Sat-solving methods.
Finally, the social ranking problem of ranking elements based on a 
ranking of subsets, which could be seen as the dual 
of the order lifting problem, has received a lot of attention
lately. The problem was first introduced by \citet{MorettiO17}
and considerable progress has been made since then
\citep{Adrian18,Hossein19,Bernardi19,AlloucheEMO20}.
However, to our knowledge,  all of the previous 
works on the axiomatic approach to the order lifting problem
required a ranking of all possible subsets,
with the notable exception of \citet{bossert1995},
who studied rankings of subsets with fixed cardinality,
and \citet{Maly2017} as well as \citet{JAIR} on whose work we 
explicitly build.
In this sense, the work presented here is clearly different
from the existing literature on the order lifting problem. 

More generally, the work presented in this paper
can also be seen as a contribution to an area of 
research in AI concerned with models of preferences for \emph{combinatorial 
domains}. 
One successful approach to dealing with combinatorial domains that has
been extensively studied is the use of concise \emph{implicit} preference models 
\citep{Domshlak2011,Kaci2011}. 
Our research can be seen as belonging to this line of research. 
Whenever there is a (set of) lifted orders satisfying a collection of axioms,
then the original order on elements of a set
can serve as a concise representation of the lifted order on the much larger
domain of subsets of that set. In general, for any succinct representation,
it is crucial to know the complexity of reasoning about lifted orders based on the ``ground'' information
about preferences on elements. In this sense, our research can be seen as a first step 
in studying of the use of lifted orders as a tool for implicit representations.
In contrast to the lifted order approach, most implicit preference models in the literature
either build on logical languages \citep{DuboisP91,BrewkaBB04,BrewkaNT03} 
or employ intuitive graphical representations such as lexicographic trees 
\citep{Booth2010,brauning2012,LiuT2015}, CP-nets \citep{bbdh03} or CI-nets \citep{BouveretEL09}.
To this date, it is unclear whether the rankings obtained by such formalisms 
satisfy desirable properties that are formalized in the axiomatic approach.

Finally, many other approaches to combinatorial voting have been considered 
in the literature. The easiest solution is often to vote on each candidate separately.
However, this approach only works well if the voters have separable preferences,
i.e., if the preference on having a candidate in the committee
is independent on who else is in the committee.
Another option is eliciting the top ranked committee.
Then one can, for example, infer a preference order on the committees via
a distance measure like the Hamming distance.
This approach minimizes the communication cost 
but only takes very little of the agent's full preferences into account.
Alternatively, one can ask the agents to specify their preferences 
using a CP-net or similar representations.
This can be very effective but requires the agents to learn
a non-trivial preference representation.
In many cases this is an unacceptable requirement. 
Finally, there are some voting rules that select 
a winning committee directly from a preferences over candidates.
This approach has been mainly explored for committees of fixed size.
For more information on all of these approaches, we refer the 
reader to the excellent survey by \citet{lang2016}.

\subsection{Organization of the Paper}

In the next section, we will introduce the necessary technical background for
our results. First, we discuss some general preliminaries
(Subsection~\ref{genprelim}). Then, we introduce the
four axioms that are the main focus of the paper (Subsection~\ref{sec:4axm})
and discuss two famous impossibility results regarding these axioms 
(Subsection~\ref{sec:impos}). We define 
the three possible degrees to which a family can be considered orderable
(Subsection~\ref{sec:orderability}).
To conclude the section, we introduce two variants of our main axioms
and their relation to the original axioms
(Subsection~\ref{sec:additional}).
These will be useful to prove our complexity results.
In Section~\ref{chap:Complexity}, we discuss our complexity results, focusing first
on strong orderability with respect to total orders (Subsection~\ref{sec:leqstrongord}) and
partial orders (Subsection~\ref{sec:partial}).
Then, we discuss the effect of succinct representation
on our complexity results in Subsection~\ref{sec:succ}. Subsequently, we consider 
weak orderability (Subsection~\ref{sec:weak}) and close the section by studying
the effect of strengthening dominance on our complexity results (Subsection~\ref{sec:strengdom}).
Finally, in Section~\ref{sec:diss} we summarize and discuss our results.

\section{Background}

In the following, we will introduce the order lifting problem, the axioms that we consider
as well as the famous impossibility results regarding these axioms.

\subsection{General Preliminaries}\label{genprelim}

All sets we consider in the paper are finite. A binary relation is called 
a \emph{preorder} if it is reflexive and transitive and is called a
 \emph{weak order} if it is reflexive, transitive and total.\footnote{Weak orders 
are also called total preorders or just orders.} A preorder is called a \emph{partial
order} if it is also antisymmetric and a weak order is called
\emph{linear} if it is also antisymmetric. 
If $\preceq$ is a weak order on a set $X$, the corresponding \emph{strict} order 
$\prec$ on $X$ is defined by $x \prec y$ if $x \preceq y$ and $y \not\preceq 
x$, where $x,y$ are arbitrary elements of $X$; 
the corresponding 
\emph{equivalence} or \emph{indifference} relation $\sim$ is defined by 
$x \sim y$ if $x \preceq y$ and $y \preceq x$. If $\preceq$ is 
linear 
then $x \sim y$ 
holds
only if $x=y$.
We call the linear order $1 < 2 < 3 < \dots$ the \emph{natural
linear order} on the natural numbers. If objects are identified with the natural
numbers then we also call this order the natural order on these objects.

For a linear order $\preceq$ on a set $A$, we write $\max_\preceq (A)$ 
for the maximal element of $A$ with respect to $\preceq$. 
Similarly, we write $\min_\preceq (A)$ for the minimal 
element of $A$ 
with respect to $\preceq$. 
If no ambiguity arises, we drop 
the reference to the relation from the notation.

Given a set $X$ and a linear order $\leq$ on $X$, the \emph{order lifting} 
problem consists of deriving from $\leq$ an order $\preceq$ on a family 
$\mathcal{X}\subseteq \powerset{X} \setminus \{\emptyset\}$ of non-empty 
\emph{subsets} of $X$, guided by axioms formalizing some natural desiderata
for such lifted orders. Generally, these axioms take $\leq$ into account.
Therefore, an order $\preceq$ on $\cX$ in general only satisfies an axiom 
with respect to a specific linear order $\leq$ on $X$.

\begin{center}
\begin{mdframed}[style=mystyle,frametitle=The Order Lifting Problem for a set of axioms $\mathcal{A}$]
\begin{minipage}{0.99\textwidth}
\begin{OLproblem}
\end{OLproblem}
\end{minipage}
\end{mdframed}
\end{center}

In most applications, 
the set of objects either represents a 
collection of objects that the agent receives as a whole (e.g.\ fair division)
or a set of possible outcomes from which one will be selected (e.g.\ in 
voting with tie-breaking).
The first case is often called the conjunctive interpretation
and the second case the disjunctive interpretation.
Moreover, observe that we did not specify what kind of order $\preceq$ should be. 
In the following, we will consider the problem of lifting to a variety of different 
types of orders. The most common case will be lifting to a weak order,
which is also the problem considered most often in the literature. 
However, we will also consider lifting to preorders, partial orders 
and linear orders. 
Furthermore, we note that we assume that $\cX$ only contains \emph{nonempty}
sets because one of our axioms, namely dominance, immediately
leads to a contradiction if the empty set is contained in $\cX$.

For the uniformity of notation, we will stick to the following conventions:
In any instance of the order lifting problem, we will use uppercase letters 
to denote the set of objects or ground set, for example, $X$ or $Y$, and lowercase letters or natural numbers to denote its elements.
We use calligraphic letters for the family of subsets, for example, $\cX$ or $\cY$
and uppercase letters at the beginning of the alphabet for its elements,
i.e., for subsets of the ground set.
Similarly, we use $\leq$ for the linear order on the ground set, possible with an index for uniqueness,
and the calligraphic $\preceq$ for any order on the family of subsets, also possibly with an index.

\subsection{The Main Axioms}\label{sec:4axm}

In this section we introduce the four axioms that will be the main focus of
our investigation, namely dominance, independence, strict independence
and the extension rule.
The definitions of these axioms are not entirely consistent in the literature.
We will essentially follow \citet{barbera2004} with our definitions.
However, we need to a add a condition that states that the
axiom is only applicable if a set is in the family of sets that we want to lift to.
As \citet{barbera2004} only consider the case $\cX = \pow$,
such a condition is not needed in their version of the axioms.\footnote{Observe
that another way of adapting the axioms to the setting $\cX \neq \pow$
would be, to demand that the lifted order $\preceq$ is a binary relation 
on $\pow$ that satisfies all axioms and the restriction of $\preceq$ to 
$\cX$ is a weak order. A study of this more restrictive approach is left 
for future work.} 

Throughout this section, we will demonstrate the effect of the axioms 
on the following toy example:
Let $S_{oy} = \{1,2,3,4\}$ and let $\leq$ be the natural linear order on $S_{oy}$.
Furthermore, let 
\[\mathcal{T}_{oy} = \{\{2\}, \{4\}, \{2,4\},\{3,4\}, \{1,2,4\}, \{1,4\}\}.
\]
Now, let us introduce our main axioms. We begin with the so-called dominance axiom.

\begin{center}
\begin{mdframed}[style=mystyle,frametitle=Dominance]
\begin{minipage}{0.9\textwidth}
For all $A \in \mathcal{X}$
and all $x\in X$, such that
$A \cup \{x\} \in \mathcal{X}$:
\[
y<x \text{ for all } y \in A \text{ implies } A \prec A \cup \{x\};
\]
\[
x < y \text{ for all } y \in A \text{ implies } A \cup \{x\} \prec A.
\]
\end{minipage}
\end{mdframed}
\end{center}

Any relation $\preceq$ on $\mathcal{T}_{oy}$ that satisfies
dominance with respect to $\leq$ must set $\{2\} \prec \{2,4\}$,
$\{2,4\} \prec \{4\}$, $\{3,4\} \prec \{4\}$, $\{1,2,4\} \prec \{2,4\}$ and $\{1,4\} \prec \{4\}$.

Dominance is often also called G\"ardenfors' principle after Peter G\"ardenfors
who introduced a version of the axiom 
\citep{gardenfors1976}. It states that 
adding an element to a set that is better than all elements already in the set
increases the quality of the set and,
similarly, adding a element worse than all elements in the set decreases the quality of the set.

Dominance is often desirable if the order $\preceq$ should reflect the average quality of sets.
If sets represent possible outcomes, they can often be ranked by 
expected utility, which equals the average quality of the elements
if elements are sampled with uniform probability.
Such a ranking naturally satisfies dominance.
Therefore, dominance is often desirable under the disjunctive interpretation of sets,
where the sets represent incompatible alternatives from which one is chosen randomly.
Consider, for example, an election where a voter 
knows that depending on his vote different sets of candidates 
will be tied for the first place. Furthermore, he knows that 
the final winner will be chosen from the tied candidates randomly.
In such situations, dominance is a natural desideratum~\citep{can2009}.

Observe that there are different formulations of dominance in the literature.
The two most important ones are the definition given here and a version
where several elements can be added at the same time (see Definition~\ref{SetDom}
for a formal statement of this version) that could be called set-dominance.
Dominance and set-dominance, are equivalent if
$\mathcal{X} = \mathcal{P}(X) \setminus \{\emptyset\}$
and hence are used interchangeably in the literature. However, if
$\mathcal{X} \neq \mathcal{P}(X) \setminus \{\emptyset\}$, then the two version 
constitute different axioms, as do many other possible formulations of the axiom.
Given that dominance and independence are already hard to jointly satisfy, we will focus on the weakest possible formulation
of the axioms for the main part of the paper.
However, other definitions also hold intuitive appeal and there is not necessarily
a single best formulation of dominance if $\mathcal{X} \neq \mathcal{P}(X) \setminus \{\emptyset\}$.
Therefore, we will studied the effect of using other, stronger formulations of dominance in Section~\ref{sec:strengdom}.
Fortunately, our results indicate that changing the formulation of dominance does not seem to influence
the complexity of the studied problems.

Several commonly used orders on families
of sets satisfy dominance.
For example, the following  maxmin-based\footnote{Formally,
we call an order $\preceq$ a maxmin-based order if there exists an
order $\preceq^*$ on $X \times X$ such that $A \preceq B$ holds if and only if $(\min(A) ,\max(A)) \preceq^* (\min(B), \max(B))$
holds \cite[p.13]{barbera2004}.}
order satisfies dominance on all families of sets.

\begin{Exp}\label{Exp:precmm}
\LetX.
Then, we can define a weak order $\preceq_{mm}$ on $\cX$ by $A \preceq_{mm} B$
for $A,B \in \cX$ if
\begin{itemize}
\item $\max(A) < \max(B)$ or
\item $\max(A) = \max(B)$ and $\min(A) \leq \min(B)$.
\end{itemize}
Observe that this is the maxmin-based order defined by the lexicographic order on $X \times X$.
It is straightforward to check that this is a weak order.
Furthermore, $\preceq_{mm}$ satisfies dominance:
Let $x \in X$ and $A, A \cup \{x\} \in \cX$.
Assume $\max (A) < x$. Then, $\max(A) < \max( A \cup \{x\}) = x$
and hence $A \prec_{mm} A \cup \{x\}$.
On the other hand, if $x< \min(A)$, then
$\max( A \cup \{x\}) = \max(A)$
and $\min( A \cup \{x\}) = x < \min(A)$
and hence $A \cup \{x\} \prec_{mm} A$.
\end{Exp}

Other well-known examples of orders that satisfy dominance are the 
lifted orders proposed by \citet{Fishburn1972} and 
\citet{gardenfors1976}. Both are frequently used 
in the context of strategyproofness 
in elections with tie-breaking \citep{Handbook-Tournament,barbera2011strategyproof}.

\begin{Exp}\label{FishGExt}
\LetX.
Then, the so-called Fishburn extension $\preceq_f$ is defined by $A \preceq_f B$ if
all of the following conditions hold:
\begin{itemize}
\item $x < y$ for all $x \in A \setminus B$ and $y \in A \cap B$,
\item $y < z$ for all $y \in A \cap B$ and $z \in B \setminus A$,
\item $x < z$ for all $x \in A \setminus B$ and $z \in B \setminus A$.
\end{itemize}
Observe that the first two conditions imply the third unless $A \cap B = \emptyset$.
On $\mathcal{T}_{oy}$ Fishburn's extension looks as follows:
\begin{multline*}
\{2\} \prec_f \{2,4\} \prec_f \{4\}; \{2\} \prec_f \{3,4\} \prec_f \{4\};\\
\{1,2,4\} \prec_f \{4\}; \{1,4\} \prec_f \{4\};
\{1,2,4\} \prec_f \{2,4\}.
\end{multline*}
We claim that $\preceq_f$ satisfies dominance.
Assume $A,A\cup \{x\} \in \cX$ and $\max(A) < x$. Then, $A \setminus (A \cup \{x\}) = \emptyset$
and $y < x$ for all $y \in A \cap (A \cup \{x\}) = A$. Hence $A \prec_f A \cup \{x\}$.
The case $x < \min (A)$ is analogous. 

The so-called G\"ardenfors' extension $\preceq_g$ is defined by $A \preceq_g B$ if
one of the following holds
\begin{itemize}
\item $A \subseteq B$ and $x < y$ for all $x \in A$ and $y \in B \setminus A$,
\item $B \subseteq A$ and $x < y$ for all $x \in A \setminus B$ and $y \in B$, 
\item Neither $A \subseteq B$ nor $B \subseteq A$ and $x < y$ for all $x \in A \setminus B$
and $y \in B \setminus A$.
\end{itemize}
It is known that G\"ardenfors' extension is a superset of Fishburn's extension.
This means, $A \prec_f B$ implies $A \prec_g B $ for every $A,B \in \cX$.
Therefore, it follows directly that G\"ardenfors' extension satisfies dominance.
On $\mathcal{T}_{oy}$ G\"ardenfors' extension adds to Fishburn's extension
the following preferences:
\[\{1,4\} \prec_g \{2,4\} \prec_g \{3,4\}; \{1,2,4\} \prec_g \{3,4\}\]
\end{Exp}

The second axiom that we consider is called independence.

\begin{center}
\begin{mdframed}[style=mystyle,frametitle=Independence]
\begin{minipage}{0.9\textwidth}
For all $A, B \in \mathcal{X}$ and all $x \in X \setminus (A \cup B)$,
such that $A \cup \{x\}, B \cup \{x\} \in \mathcal{X}$:
\[A \prec B \text{ implies } A \cup \{x\} \preceq B \cup \{x\}.\]
\end{minipage}
\end{mdframed}
\end{center}

A relation $\preceq$ on $\mathcal{T}_{oy}$ that satisfies
independence must set $\{1,2,4\} \preceq \{1,4\}$ if it contains $\{2,4\} \prec \{4\}$
and $\{1,4\} \preceq \{1,2,4\}$ if it contains $\{4\} \prec \{2,4\}$.
As dominance implies $\{2,4\} \prec \{4\}$, dominance and independence together imply 
$\{1,2,4\} \preceq \{1,4\}$.

\emph{Independence} is a natural monotonicity axiom that states that
if we add the same element $x$ to two sets $A$ and $B$
where $B$ is strictly preferred to $A$, then $B \cup \{x\}$
must be at least weakly preferred to $A \cup \{x\}$.
Observe that in contrast to dominance a `set-based' version 
of independence would be much stronger that the studied version even if
$\mathcal{X} = \mathcal{P}(X) \setminus \{\emptyset\}$.
Indeed the definition of independence explicitly includes
the possibility that adding one element may already be enough
to equalize $A$ and $B$, in which case independence can not be applied again.
Therefore, we will not considering `set-independence'.

Independence is often a very desirable property under the conjunctive interpretation,
for example if sets are bundles of objects that are compared according to their overall quality
according to some additive utility~\citep{kraft1959}.
Indeed, if we define an order based on the sums of utilities, 
that order satisfies independence.

In this sense there is some tension between the motivations for dominance
and independence.
Nevertheless, there are cases where both axioms are natural 
desiderata.
These cases are often characterized by the fact that all
elements may influence the quality of a set
but the extent of this influence is unknown or unknowable.
An example under the disjunctive interpretation
for such a situation is choice under complete uncertainty:

\begin{Exp}
Consider a situation where an agent can perform actions $a_1, \dots, a_k$
for which he knows the (set of) possible outcomes
but he is not able or not willing to determine the (approximate) probability of 
each outcome. Such a situation can be modeled as a family of outcomes $X = \{o_1, o_2, \dots, o_l\}$
and a function $O:\{a_1, \dots ,a_k\} \to \powerset{X}\setminus\{\emptyset\}$
that maps every action to the set of possible outcomes of that action. 
If we assume that the agent has preferences over the set of possible outcomes $X$
that can be modeled as a linear order, the problem of ranking the different actions
can be modeled as an order lifting problem. Under this interpretation
the extension rule (see below), dominance and independence are usually considered natural desiderata
\citep{bossert2000,barbera2004}.
\hfill$\Box$
\end{Exp}

In voting, a comparable situation appears if ties are broken by an 
unknown chairman.\footnote{Which axioms or order are most appropriate in this setting
also depends on the risk tolerance of the voters.
Very risk averse and undeceive voters might be modeled using Fishburn's extension
\citep{BrandtB18}. For agents with other risk profiles $\preceq_{pmm}$,
 a lexicographic order or something else might be more appropriate.}
Similarly, situations exist under the conjunctive interpretation in which
it is unclear how much each object contributes to the 
quality of the set,
for example, in voting when
sets represent an elected committee in which each member has an
(a priori) unknown influence.

In contrast to dominance, independence on its own does not require
any preferences. In other words, the empty preference relation
always satisfies independence.
On the other hand, the weak order defined in Example~\ref{Exp:precmm}
does not satisfy independence.\footnote{\citet{barbera2004},
following \citet{bossert2000}, falsely claim that the order defined
in Example~\ref{Exp:precmm} can be characterized by simple dominance, independence
and two other axioms. \citet{Arlegi2003} was the first to point out
that this is not the case, because $\preceq_{mm}$ does not satisfy
independence. He also provided a different axiomatic characterization
of this order.}

\begin{Exp}
Let $\preceq_{mm}$ be the weak order defined in Example~\ref{Exp:precmm}.
Consider $X = \{1,2,3,4\}$, $\cX = \pow$ and let $\leq$ be the natural linear order on $X$.
Then, $\{2\} \prec_{mm} \{1,3\}$ but $\{1,3,4\} \prec_{mm} \{2,4\}$.
Therefore, $\preceq_{mm}$ does not satisfy independence.
\end{Exp}

Furthermore, neither Fishburn's nor G\"ardenfors' extension satisfy 
independence. For example on $\mathcal{T}_{oy}$ both extensions set $\{2,4\} \prec \{4\}$
but for both $\{1,2,4\}$ and $\{1,4\}$ are incomparable. This violates independence.
However, it is possible to define a maxmin-based preorder that satisfies dominance and independence together.

\begin{Exp}\label{Exp:precmm2}
It can be checked that the following maxmin-based preorder satisfies
dominance and independence on every family of sets.
We define $\preceq_{pmm}$ by $A \preceq_{pmm} B$ for $A,B \in \cX$ if
\[\max(A) \leq \max(B) \text{ and } \min(A) \leq \min(B).\]
This relation is obviously reflexive. Furthermore, because 
$\leq$ is transitive, $\preceq_{pmm}$ is also transitive.
Therefore, $\preceq_{pmm}$ is a preorder.
We leave it to the reader to check that $\prec_{pmm}$ additionally
satisfies dominance and independence.

\end{Exp}

It turns out that for $\cX = \pow$ this is the minimal preorder that satisfies 
dominance and independence, i.e., every preoder on $\pow$ that satisfies both axioms
is an extension of $\preceq_{pmm}$.
We observe that $\preceq_{pmm}$ is not total as, for example, $\{1,3\}$ and $\{2\}$ are incomparable.

This raises the question if it is also possible to define a weak order that satisfies both dominance and independence.
In 1984 Yakar Kannai and Bezalel Peleg proved in a seminal paper that this is, in general, not possible \citep{kannai1984}.
To be more precise, they showed that there is no weak order that satisfies both axioms 
for $\cX = \pow$ if $|X| \geq 6$. We will deal with this result in detail in Section~\ref{sec:impos}.
Before, we introduce a strengthening of independence called strict independence
 that requires that adding the same element to two sets
does not change a strict preference. 

\begin{center}
\begin{mdframed}[style=mystyle,frametitle=Strict Independence]
\begin{minipage}{0.9\textwidth}
For all $A, B \in \mathcal{X}$ and for all $x \in X \setminus (A \cup B)$,
such that $A \cup \{x\}, B \cup \{x\} \in \mathcal{X}$:
\[A \prec B \text{ implies } A \cup \{x\} \prec B \cup \{x\}.\]
\end{minipage}
\end{mdframed}
\end{center}

The effect of strict independence on $\mathcal{T}_{oy}$ is very similar to the effect of independence.
A relation $\preceq$ on $\mathcal{T}_{oy}$ that satisfies
strict independence must set $\{1,2,4\} \prec \{1,4\}$ if it contains $\{2,4\} \prec \{4\}$
and $\{1,4\} \prec \{1,2,4\}$ if it contains $\{4\} \prec \{2,4\}$.
Dominance implies $\{2,4\} \prec \{4\}$, hence dominance
and strict independence together imply $\{1,2,4\} \prec \{1,4\}$.

Clearly, any relation that satisfies strict independence also satisfies independence.
Furthermore, any antisymmetric relation that satisfies independence automatically satisfies
strict independence.
We observe that, like in the case of dominance, a `set-based' formulation of
strict independence would coincide with the given definition if $\mathcal{X} = \mathcal{P}(X) \setminus \{\emptyset\}$.
However, strict independence is already a very strong axiom if defined as above. Moreover,
it seems natural to define independence and strict independence similarly. 
Therefore, we will focus on the formulation of strict independence given above and
leave a study of `set-strict-independence' to future work.

Similarly to independence, this is a desirable property, for example, whenever sets should be ranked according to some additive utility.
However, it is a significantly stronger axiom and is not satisfied by the preorder $\preceq_{pmm}$
defined in Example~\ref{Exp:precmm2}.

\begin{Exp}
Let $\preceq_{pmm}$ be the preorder defined in Example~\ref{Exp:precmm2}.
Furthermore, let $X = \{1,2,3\}$, let $\leq$ be the natural order on $X$ and $\cX = \pow$.
Then, $\preceq_{pmm}$ does not satisfy strict independence with respect to $\leq$.
For example $\{1\} \prec_{pmm} \{1,2\}$ but $\{1,3\} \not\prec_{pmm} \{1,2,3\}$.
\end{Exp}

Indeed, Salvador Barber{\`a} and Prasanta Pattanaik have shown that no preorder 
can satisfy dominance and strict independence if $|X| \geq 3$ \citep{barbera1984}.
We will discuss this result in more detail in Section~\ref{sec:impos}.
There are, however, important examples of lifted orders that always satisfy 
strict independence, like the very well-known lexicographic order.
This order is a generalization of 
the way that words are ordered in a lexicon based on the
alphabetical order of the letters \citep{FishburnLex}.

One main axiom remains, the so-called extension rule.

\begin{center}
\begin{mdframed}[style=mystyle,frametitle=The Extension Rule]
\begin{minipage}{0.9\textwidth}
For all $x, y\in X$, such that $\{x\}, \{y\} \in \mathcal{X}$:
\[
x < y \text{ implies }  \{x\} \prec \{y\}.
\]

\end{minipage}
\end{mdframed}
\end{center}

In $\mathcal{T}_{oy}$ the extension rule implies only $\{2\} \prec \{4\}$.
In some sense, the extension rule (or just extension for short) is the most basic axiom
considered in this paper. It states that the singleton sets in $\cX$ need to be ordered 
the same way as the elements of $X$. In most scenarios, this is a necessary requirement 
for the lifted order to be acceptable. However, there are exceptions.
For example under one interpretation called ``freedom of choice''\footnote{For an explanation 
of this interpretation see either \citet{pattanaik1990} or \citet{barbera2004}.}
it could be argued that all singletons 
should be rated equally \citep[see e.g.]{pattanaik1990}.
If one has to rank the whole powerset i.e., if we assume $\cX = \pow$,
then the extension rule is implied by dominance for every transitive relation
as $\{x\} \prec \{x,y\} \prec \{y\}$ is implied by dominance for all $x,y \in X$
such that $x < y$. However, if we drop the assumption that $\cX = \pow$
then there are families of sets on which we can define
an order that satisfies dominance and strict independence without 
satisfying the extension rule. 
Indeed, \citet{JAIR} have shown that there are families 
of sets where dominance and independence can be jointly satisfied with respect 
to a linear order $\leq$ but dominance, independence and the extension rule 
are incompatible with respect to $\leq$.

We observe that the four main axioms, if they are compatible,
do not necessarily characterize a unique order.
For example both of the following linear orders on $\mathcal{T}_{oy}$
satisfy all our main axioms.
\[\{2\} \prec \{1,2,4\} \prec \{1,4\} \prec  \{2,4\} \prec \{3,4\} \prec \{4\},\] 
\[\{1,2,4\}  \prec \{2\} \prec  \{2,4\} \prec \{3,4\}\prec \{1,4\} \prec \{4\}.\]

We finish this section with two important observations about our main axioms
that follows directly from their definition.

\begin{Obs}\label{Obs:Subset}
\LetX.
Furthermore, let $\preceq$ be a relation on $\cX$,
let $\cY \subseteq \cX$ be a subset of $\cX$
and let $\preceq_\cY$ be the restriction of $\preceq$
to $\cY$. Then, if $\preceq$ satisfies any of our main axioms 
with respect to $\leq$ then $\preceq_\cY$ must satisfy the same 
axioms with respect to $\leq$.
\end{Obs}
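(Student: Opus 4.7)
The plan is to verify the statement one axiom at a time, exploiting the fact that all four main axioms share the same logical shape: a universally quantified implication whose premises already include membership conditions for certain sets in the underlying family. The key observation is that replacing $\cX$ by a subset $\cY$ only strengthens these premises, so fewer tuples of sets need to be checked, while the conclusion of each instance involves only sets that already lie in $\cY$ and hence is identical for $\preceq$ and $\preceq_\cY$.

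First I would record the trivial but crucial link that, since $\preceq_\cY$ is by definition the restriction of $\preceq$ to $\cY$, we have $A \preceq_\cY B$ if and only if $A \preceq B$ for all $A,B \in \cY$, and consequently $A \prec_\cY B$ iff $A \prec B$ whenever $A,B \in \cY$. With this in hand, the four axioms can be dispatched in parallel. For dominance, suppose $A \in \cY$ and $x \in X$ with $A \cup \{x\} \in \cY$ and $y < x$ for all $y \in A$; since $\cY \subseteq \cX$, both $A$ and $A \cup \{x\}$ lie in $\cX$, so dominance of $\preceq$ yields $A \prec A \cup \{x\}$, and this transfers to $\preceq_\cY$ by the link; the opposite clause is symmetric. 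For (strict) independence, given $A, B \in \cY$ and $x \in X \setminus (A \cup B)$ with $A \cup \{x\}, B \cup \{x\} \in \cY$, all four sets lie in $\cX$, so $A \prec_\cY B$ implies $A \prec B$, which in turn gives $A \cup \{x\} \preceq B \cup \{x\}$ (respectively $\prec$), and this carries back to $\preceq_\cY$. For the extension rule, the premises $\{x\}, \{y\} \in \cY$ force $\{x\}, \{y\} \in \cX$, so $x < y$ gives $\{x\} \prec \{y\}$ under $\preceq$ and hence under $\preceq_\cY$.

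There is essentially no obstacle here: the whole argument is bookkeeping about definitions. The only minor pitfall is to keep the direction of the implication straight, namely that one derives each $\cY$-instance of an axiom from the corresponding $\cX$-instance and not the other way around, and to recall that both weak and strict preferences commute with restriction. Because of this uniform structure I would present the proof as a single short argument parameterised by the axiom rather than four nearly identical cases.
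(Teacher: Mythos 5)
Your proposal is correct and rests on exactly the same idea as the paper's proof, which simply notes that all four axioms are universal statements and hence hold in any submodel $(\cY,\preceq_\cY)$ of $(\cX,\preceq)$; your axiom-by-axiom verification is just an explicit unpacking of that one-line observation. Nothing is missing.
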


\begin{proof}
All four axioms are universal statements about the ordered set $(\cX,\preceq)$.
Hence, if they are true for a model $(\cX,\preceq)$ they are also true for all its 
submodels $(\cY, \preceq_\cY)$, i.e., for all tuples $(\cY, \preceq_\cY)$ such that
$\cY \subseteq \cX$ and $x \preceq_\cY y$ if and only if $x \prec y$ for all $x,y \in \cY$.
\end{proof}

Furthermore, we observe that all our main axioms are symmetric 
in the following sense.

\begin{Lem}\label{Lem:inverse}
\LetX.
Furthermore, let $\prec$ be a order on $\cX$.
Then, $\prec$ satisfies dominance with respect to a linear order $\leq$
if and only if $\prec^{-1}$ satisfies dominance with respect to $\leq^{-1}$.
Here, $R^{-1}$ denotes the inverse of a relation $R$.
The same holds for the extension rule, independence and strict independence.
\end{Lem}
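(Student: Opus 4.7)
The plan is to verify the claim axiom by axiom, in each case performing a direct syntactic check that simultaneously inverting $\prec$ and $\leq$ sends the axiom to an equivalent statement. The underlying reason is simple: each axiom is invariant under a natural "dualization", either because its content comes in a pair of mirrored clauses (dominance) or because inverting the relations merely relabels the variables (the other three axioms).

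For dominance, I would unfold the definition of $\prec^{-1}$ with respect to $\leq^{-1}$. The clause "$y <^{-1} x$ for all $y \in A$ implies $A \prec^{-1} A \cup \{x\}$" is, by definition of the inverses, the statement "$x < y$ for all $y \in A$ implies $A \cup \{x\} \prec A$", which is exactly the second clause of dominance for $\prec$ with respect to $\leq$. Similarly, the second clause for $(\prec^{-1}, \leq^{-1})$ recovers the first clause for $(\prec,\leq)$. Hence $\prec$ satisfies dominance with respect to $\leq$ if and only if $\prec^{-1}$ satisfies dominance with respect to $\leq^{-1}$, where both implications go through because $(\prec^{-1})^{-1} = \prec$ and $(\leq^{-1})^{-1} = \leq$.

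For the extension rule, the same procedure yields "$x <^{-1} y$ implies $\{x\} \prec^{-1} \{y\}$", i.e., "$y < x$ implies $\{y\} \prec \{x\}$", which is just the original axiom after renaming $x$ and $y$. For (strict) independence the axiom does not reference $\leq$, so the dualization on the linear order is vacuous; it suffices to observe that under $\prec \mapsto \prec^{-1}$ (with $\preceq$ the reflexive closure of $\prec$, inverted accordingly), the statement "$A \prec^{-1} B$ implies $A \cup \{x\} \preceq^{-1} B \cup \{x\}$" reads "$B \prec A$ implies $B \cup \{x\} \preceq A \cup \{x\}$", again the original axiom with $A$ and $B$ swapped. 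The identical computation works for strict independence.

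I do not expect any real obstacle: the lemma is a symmetry observation, and the proof is essentially bookkeeping. The only subtlety worth being careful about is dominance, where one must notice that inversion does \emph{not} preserve each of the two clauses individually but \emph{swaps} them; and, for (strict) independence, that the $\preceq$ appearing on the right-hand side must also be inverted consistently with $\prec$.
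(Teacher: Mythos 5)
Your proposal is correct and follows essentially the same route as the paper: unfold the inverses $\prec^{-1}$ and $\leq^{-1}$ and observe that each axiom is mapped to itself, with the two dominance clauses swapped and with $A,B$ (resp.\ $x,y$) swapped for (strict) independence and the extension rule, the converse direction following from $(\prec^{-1})^{-1}=\prec$ and $(\leq^{-1})^{-1}=\leq$. One small imprecision: for a weak order with nontrivial indifference, $\preceq$ is not the reflexive closure of $\prec$; what you actually need (and implicitly use correctly) is that the strict part of $\preceq^{-1}$ is $\prec^{-1}$, so the inversion of the weak and strict relations is consistent.
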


\begin{proof}
Let $\preceq$ be an order on $\cX$ that satisfies dominance
with respect to $\leq$. We claim that $\preceq^{-1}$ satisfies 
dominance respect to $\leq^{-1}$.
Assume $A,A\cup \{x\} \in \cX$, then \[\forall y \in A(y <^{-1} x)\] implies 
\[\forall y \in A(y > x),\] which,  by assumption, implies $A \succ A \cup \{x\}$ and
hence $A \prec^{-1} A \cup \{x\}$.
Similarly, \[\forall y \in A(x <^{-1} y)\] implies $A \cup \{x\} \prec^{-1} A$.
The argument for the extension rule is similar.

Now, assume $\preceq$ is an order on $\cX$ that satisfies strict independence
with respect to $\leq$. Then, we claim that $\preceq^{-1}$ satisfies 
strict independence with respect to $\leq^{-1}$.
Assume $A,B,A\cup\{x\},B\cup\{x\} \in \cX$ and $A \prec^{-1} B$.
Then, $A \succ B$ and hence, as $\prec$ satisfies strict independence
by assumption, $A \cup \{x\} \succ B \cup \{x\}$
which implies $A \cup \{x\} \prec^{-1} B \cup \{x\}$.
The argument for independence is the same.
\end{proof}

\subsection{Impossibility Results}\label{sec:impos}

As it turns out, it is impossible to jointly satisfy our main axioms, as 
was shown in two impossibility results by \citet{kannai1984}
and \citet{barbera1984}.
Let us first discuss Kannai and Peleg's result in more detail.

\begin{center}
\begin{mdframed}[style=mystyle,frametitle=Theorem (\citeauthor{kannai1984})]
\begin{minipage}{0.9\textwidth}
Let $X$ be a set such that $|X| \geq 6$. Furthermore, let $\leq$ be a linear order on $X$ and $\cX = \pow$. 
Then, there is no weak order on $\cX$ that satisfies dominance and independence with respect to $\leq$.
\end{minipage}
\end{mdframed}
\end{center}

This result is, in several ways, tight. First of all,
we cannot drop the requirement that $\preceq$ is total, as we have seen in
Example~\ref{Exp:precmm2} that it is always possible to define a preorder that 
satisfies dominance and independence.

Furthermore, the assumption $\cX = \pow$ is also, in some sense, tight if $|X| = 6$. We only have to remove one set 
from $\cX$ to make dominance and independence compatible, though it has to be the right one.
For example, the proof of Kannai and Peleg's theorem never mentions any set containing $\max(X)$ and 
$\min(X)$ at the same time. Therefore, removing such a set will not make dominance and independence compatible.
However, removing the set containing the smallest and the second smallest element of $X$
suffices to make dominance, independence and additionally the extension rule jointly satisfiable.
We formulate the result w.l.o.g.\ for $X = \{1,2,3,4,5,6\}$.

\begin{Prop}[\citeauthor{JAIR}]\label{Prop:setminus}
Let $X$ be $\{1,2,3,4,5,6\}$. Furthermore, let $\leq$ be the natural linear order on $X$
and $\cX = \powerset{X} \setminus \{\emptyset, \{1,2\}\}$. 
Then, there is a weak order on $\cX$ that satisfies dominance, independence
and the extension rule with respect to $\leq$.
\end{Prop}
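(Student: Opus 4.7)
The plan is to exhibit an explicit weak order $\preceq$ on $\cX$ and verify the three axioms directly. A natural starting point is the preorder $\preceq_{pmm}$ from Example~\ref{Exp:precmm2}, which (by Observation~\ref{Obs:Subset}) still satisfies dominance and independence when restricted to $\cX$, and is consistent with the extension rule. The goal is to refine it to a \emph{total} weak order without destroying these three properties.

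The structural reason this refinement can succeed is that the $(\max,\min)=(2,1)$ equivalence class under $\preceq_{pmm}$ consists of the single set $\{1,2\}$, which is exactly what has been excised from $\cX$. Kannai and Peleg's impossibility proof on $\pow$ relies on a chain of dominance- and independence-derived inequalities that forces $\{1,2\}$ into a position strictly between $\{1\}$ and $\{2\}$ in a way that ultimately closes into a contradictory cycle; once $\{1,2\}$ is removed, this particular cycle of constraints is no longer forced.

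Concretely, I would specify $\preceq$ through a ranking function $f:\cX\to\mathbb{R}$ with $A\preceq B\iff f(A)\leq f(B)$, so that totality is automatic. The extension rule requires $f(\{i\})$ to be strictly increasing in $i$; dominance requires $f(A\cup\{x\})>f(A)$ whenever $x>\max(A)$ and $f(A\cup\{x\})<f(A)$ whenever $x<\min(A)$; and independence requires $f(A)<f(B)\Rightarrow f(A\cup\{x\})\leq f(B\cup\{x\})$ for all admissible $x$. Neither a purely additive $f$ (additive weights would need to be simultaneously positive and negative at the elements $3,4,5$, which can be both new maxima and new minima in $\cX$) nor a purely maxmin-based $f=p(\max(A),\min(A))$ (adding a common new maximum collapses both sides to the same max, so any comparison originally forced by the mins is reversed) is sufficient on its own. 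The rank $f$ will therefore need to combine the maximum, the minimum, and some additional statistic such as average or size, with coefficients tuned so that the absence of $\{1,2\}$ exactly removes the violations.

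The main obstacle is verifying independence. It reduces to a finite case analysis over all pairs $A\prec B$ in $\cX$ and all $x\in X\setminus(A\cup B)$ with $A\cup\{x\},B\cup\{x\}\in\cX$. The delicate cases are those in which $x$ becomes a new maximum or minimum for one side but not the other, which are exactly the cases exploited in the Kannai--Peleg argument. The crux will be to show that each such problematic case either cannot arise in $\cX$, because the chain of inequalities witnessing it would have to pass through $\{1,2\}$, or is tamed by the tuned choice of $f$. Once independence is checked, dominance and the extension rule follow immediately from the monotonicity properties built into $f$.
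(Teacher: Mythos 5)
There is a genuine gap: your text is a plan for a proof rather than a proof. The proposition is a pure existence claim over a concrete finite family (62 sets), so the entire content of a proof is either an explicit weak order on $\cX$ together with a verification of dominance, independence and the extension rule, or a complete argument that such an order exists. You supply neither. You correctly observe that $\preceq_{pmm}$ restricted to $\cX$ satisfies dominance, independence and extension (Observation~\ref{Obs:Subset} gives the restriction step), and that the real task is to complete it to a \emph{total} weak order; you also correctly rule out a purely additive and a purely maxmin-based ranking function. But the step that would actually establish the result --- the ``tuned'' ranking $f$ combining $\max$, $\min$ and some further statistic --- is never written down, and the assertion that each problematic independence instance ``either cannot arise in $\cX$\ldots or is tamed by the tuned choice of $f$'' is exactly the statement to be proved, not an argument for it. It is not even clear that any scoring function of the functional form you propose exists; that would itself require a nontrivial verification, which you defer. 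Note also that the obstruction is not only in the independence check you flag: the completion must be chosen so that the strict preferences forced by dominance, propagated through independence and transitivity, never close into a cycle, and this global consistency is precisely what the Kannai--Peleg argument exploits; pointing out that the excised set $\{1,2\}$ is the unique member of its $(\max,\min)$ class does not by itself show the remaining constraint system is acyclic.

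For comparison, the paper does not prove this proposition at all; it is imported as a known result of the cited JAIR paper, where it is established by exhibiting a concrete order and checking the (finite) list of applicable axiom instances. So your intended route --- construct an explicit order and verify --- is the standard one, but as written your proposal stops short of the construction and of the case analysis, both of which are the substance of the claim. To repair it you would need either to display a concrete ranking function (or a concrete weak order) on all of $\cX$ and carry out the finite verification, or to give a structural argument (e.g., showing the transitive closure of the forced strict relation on $\cX$ is acyclic and extends to a weak order compatible with independence), neither of which is sketched in enough detail to be checked.
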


Finally, the requirement $|X| \geq 6$ is tight as dominance and independence
can be jointly satisfied if $|X| \leq 5$.
This follows directly from Proposition~\ref{Prop:setminus} but
was already proven much earlier by \citet{bandyopadhyay}.

\begin{Prop}[\citeauthor{bandyopadhyay}]\label{Cly:DomIndless5}
Let $X$ be a set such that $|X| \leq 5$. Furthermore, let $\leq$ be a linear order on $X$ and $\cX = \pow$. 
Then, there is always a weak order on $\cX$ that satisfies dominance and independence with respect to $\leq$.
\end{Prop}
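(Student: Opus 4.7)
The plan is to deduce this immediately from Proposition~\ref{Prop:setminus} via restriction. That proposition supplies, for the six-element set $X' = \{1,2,3,4,5,6\}$ with the natural linear order, a weak order $\preceq'$ on $\powerset{X'} \setminus \{\emptyset, \{1,2\}\}$ that satisfies dominance and independence. The key observation is that for any $X$ with $|X| \leq 5$ and any linear order $\leq$ on $X$, one can choose an order-preserving embedding of $X$ into $\{2,3,4,5,6\} \subseteq X'$. Under such an embedding, every non-empty subset of $X$ corresponds to a non-empty subset of $X'$ that cannot equal $\{1,2\}$, since $1$ lies outside the image; hence $\pow$ is identified with a subfamily $\cY$ of $\powerset{X'} \setminus \{\emptyset, \{1,2\}\}$.

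Once this identification is set up, I would apply Observation~\ref{Obs:Subset}: the restriction of $\preceq'$ to $\cY$ still satisfies dominance and independence with respect to the induced linear order. Because restriction preserves reflexivity, transitivity, and totality, this restriction is itself a weak order. Pulling it back along the embedding yields the desired weak order on $\pow$ satisfying dominance and independence with respect to $\leq$.

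The only step that requires any justification beyond Proposition~\ref{Prop:setminus} and Observation~\ref{Obs:Subset} is the invariance of the two axioms under order-preserving relabelings, but this is essentially a tautology: both axioms are stated purely in terms of the linear order on the ground set and set-theoretic operations, with no reference to the identity of the underlying elements. The case $|X| < 5$ is handled identically by embedding into any $|X|$-element subset of $\{2,3,4,5,6\}$. So there is no real obstacle here, and the proposition follows as a direct corollary of Proposition~\ref{Prop:setminus} combined with Observation~\ref{Obs:Subset}.
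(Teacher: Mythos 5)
Your proposal is correct and follows exactly the route the paper intends: the paper gives no separate proof, merely noting that the claim "follows directly from Proposition~\ref{Prop:setminus}", and your embedding of $(X,\leq)$ order-preservingly into $\{2,\dots,6\}$ (so that $\{1,2\}$ can never arise) combined with Observation~\ref{Obs:Subset} is precisely the missing derivation. The restriction/relabeling step is handled soundly, since all quantified conditions in the axioms involving elements outside the image are vacuous for the restricted family.
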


While dominance and independence are incompatible for total orders, it turns out that dominance 
and strict independence are incompatible already for partial orders. This was proven by 
Salvador Barber{\`a} and Prasanta Pattanaik shortly after Kannai and Peleg published their result 
\citep{barbera1984}. Additionally, Barber{\`a} and Pattanaik's result requires only 
a smaller set of elements.
Finally, their result also holds if dominance is replaced by a weaker axiom called simple dominance.

\begin{Axm}[Simple Dominance]
For all $x, y\in X$, such that $\{x\}, \{y\}, \{x,y\} \in \mathcal{X}$ and $x < y $:
\[\{x\} \prec \{x,y\} \prec \{y\}.\]
\end{Axm}

Clearly, dominance implies simple dominance on all families of sets.
Furthermore, \citet{barbera1984} have shown that simple dominance and independence
can be jointly satisfied. In contrast, simple dominance and strict independence 
are incompatible, as was shown in the same paper by \citet{barbera1984}.

\begin{center}
\begin{mdframed}[style=mystyle,frametitle=Theorem (\citeauthor{barbera1984})]
\begin{minipage}{0.9\textwidth}
Let $X$ be a set such that $|X| \geq 3$. Furthermore, let $\leq$ be a linear order on $X$ and $\cX = \pow$. 
Then, there is no binary relation on $\cX$ that satisfies simple dominance and strict independence with respect to $\leq$.
\end{minipage}
\end{mdframed}
\end{center}

It is easy to see that the condition $|X| \geq 3$ is minimal as $\{1\} \prec \{1,2\} \prec \{2\}$ 
satisfies dominance and strict independence.
Furthermore, it is again the case that removing one element from $\pow$ for $|X| = 3$
suffices to make dominance, strict independence and the extension rule compatible.
 
\begin{Exp}\label{Exp:StrctIndMinus}
Consider $\cX = \powerset{X} \setminus \{\emptyset, \{3\}\}$.
Then, the following linear order satisfies dominance and strict independence:
\[\{1\} \prec \{1,2\} \prec \{1,3\} \prec \{1,2,3\} \prec \{2\} \prec \{2,3\}.\]
It is straightforward to check that dominance is satisfied.
For strict independence, we have to consider all pairs $A,B \in \cX$ and all $x \not \in A \cup B$
such that $A \cup \{x\}, B \cup \{x\} \in \cX$ holds.
First assume $x = 1$. Then, $A,B$ must be $\{2\}$ and $\{2,3\}$.
We see that $\{2\} \prec \{2,3\}$ and $\{1,2\} \prec \{1,2,3\}$ hold,
hence strict independence is satisfied for this pair.
Now assume $x = 2$. Then, $A,B$ must be $\{1\}$ and $\{1,3\}$.
Now, as $\{1\} \prec \{1,3\}$ and $\{1,2\} \prec \{1,2,3\}$ hold,
strict independence is also satisfied in this case.
Finally, consider the case that $x=3$. 
Then, $A$ and $B$ must be $\{1\}$, $\{1,2\}$ or $\{2\}$.
Now, we have $\{1\} \prec \{1,2\} \prec \{2\}$
and $\{1,3\} \prec \{1,2,3\} \prec \{2,3\}$.
Hence, strict independence is satisfied.
\end{Exp}

\subsection{Three Types of Orderability}\label{sec:orderability}

Given a set $X$, a family $\cX \subseteq \pow$ and a set of axioms $\mathcal{A}$,
we can distinguish three degrees to which the axioms in $\mathcal{A}$ can be
compatible on $\cX$.
First, they can be compatible for at least one linear order on $X$.
Second, they can be compatible with respect to a specific linear order $\leq$ on $X$.
Finally, they can be compatible for every linear order on $X$.
We can view these as a property of a family of sets with respect to a set of axioms.
The following definitions were first introduced by \citet{MalyTW18}.

\begin{Def}
Let $X$ be a set and $\cX \subseteq \pow$. Furthermore, let $\mathcal{A}$ be a set of axioms.
Then, we say that $\cX$ is \dots
\begin{itemize}
\item \dots\emph{weakly orderable with respect to $\mathcal{A}$} 
if \textbf{there is a linear order} $\leq^*$ on $X$ such that
there is a weak order on $\cX$ that satisfies all axioms in $\mathcal{A}$ 
with respect to $\leq^*$.
\item \dots\emph{$\leq$-orderable with respect to $\mathcal{A}$} for a linear
order $\leq$ on $X$,
if there is a weak order on $\cX$ that satisfies all axioms in $\mathcal{A}$ 
\textbf{with respect to $\leq$}.
\item \dots\emph{strongly orderable with respect to $\mathcal{A}$} 
if \textbf{for all linear orders} $\leq^*$ on $X$
there is a weak order on $\cX$ that satisfies all axioms in $\mathcal{A}$ 
with respect to $\leq^*$.
\end{itemize}
\end{Def} 

For convenience, we will define a shorthand for orderability with respect 
to (sub)sets of our main axioms.

\begin{Def}
\LetX.
Assume $\cX$ is $\leq$-orderable with respect to a set of axioms $\mathcal{A}$.
Then, we say $\cX$ is\dots
\begin{itemize}
\item \dots$\leq$-$DI$-orderable if $\mathcal{A}$ consists of dominance and independence.
\item \dots$\leq$-$DIE$-orderable if $\mathcal{A}$ consists of dominance, independence and the extension rule.
\item \dots$\leq$-$DI^S$-orderable if $\mathcal{A}$ consists of dominance and strict independence.
\item \dots$\leq$-$DI^SE$-orderable if $\mathcal{A}$ consists of dominance, strict independence and the extension rule.
\end{itemize}
We use the same notation also for strong and weak orderability.
\end{Def}

The following example demonstrates the use of these shorthands. 

\begin{Exp}\label{Exp}
Let $X = \{1,2,3\}$ and $\cX = \powerset{X} \setminus \{\emptyset, \{3\}\}$.
Furthermore, let $\leq^*$ be the natural linear order $1<^*2<^*3$.
Then, $\leq^*$ can be lifted to a linear order on $\cX$ that
satisfies dominance and strict independence,
as we have seen in Example~\ref{Exp:StrctIndMinus}.
Therefore, $\cX$ is $\leq^*$-$DI^S$-orderable and
also weakly $DI^S$-orderable.
On the other hand, if we consider the linear order $1<'3<'2$ on $X$, then no
linear order on $\cX$ satisfies dominance and strict independence
with respect to $\leq'$. This is because the proof of Barber{\`a} and Pattanaik's
impossibility result does not mention $\{2\}$, which 
has the same position under the natural linear order as $\{3\}$ 
has under $\leq'$.
Therefore, $\cX$ is not strongly $DI^S$-orderable.
If we consider independence instead of strict independence,
then Proposition~\ref{Cly:DomIndless5} implies that
$\cX$ is strongly $DIE$-orderable because it has less than $6$ elements.
\end{Exp}

The classical works on ranking sets of objects as surveyed by
\citet{barbera2004} do not distinguish different kinds of orderability,
because they only consider the case $\cX = \pow$.
In this case all three aforementioned types of 
orderability coincide, because for every permutation $\pi$ of $X$
we have $\pi(\pow) = \pow$.

\subsection{Reverse Independence and Reverse Strict Independence}\label{sec:additional}

In this section, we discuss two additional axioms,
which are obtained by reversing the direction
of independence or strict independence.
These will be useful when proving our main results.

\begin{Axm}[Reverse independence]
For all $A, B \in \mathcal{X}$ and for all $x \in X \setminus (A \cup B)$
such that $A \cup \{x\}, B \cup \{x\} \in \mathcal{X}$:
\[A \cup \{x\} \prec B \cup \{x\} \text{ implies } A \preceq B.\]
\end{Axm}

\begin{Axm}[Reverse strict independence]
For all $A, B \in \mathcal{X}$ and for all $x \in X \setminus (A \cup B)$
such that $A \cup \{x\}, B \cup \{x\} \in \mathcal{X}$:
\[A \cup \{x\} \prec B \cup \{x\} \text{ implies } A \prec B.\]
\end{Axm}

These axioms are very similar to independence and
strict independence, intuitively as well as technically.
One could argue that they are a slightly less natural
formulation of the same monotonicity idea. 
Independence is equivalent to its reverse counterpart
for total orders but both versions differ for partial orders.
For example, if $X = \{1,2,3\}$ and $\cX = \{\{1\}, \{2\}, \{1,3\}, \{2,3\}\}$
then the partial order only containing the preference $\{1\} \prec \{2\}$
does satisfy reverse independence but not independence. 
On the other hand, the partial order only containing the preference
$\{1,3\} \prec \{2,3\}$ satisfies independence but not reverse independence. 
Similarly, strict independence is equivalent to its reverse 
counterpart for linear orders but both versions differ for non-linear orders.

\begin{Prop}\label{Prop:Reverse}
\LetX.
Then, a total relation on $\cX$ satisfies independence if and only if it satisfies 
reverse independence. A total, antisymmetric relation satisfies strict independence if and only if
it satisfies reverse strict independence.
\end{Prop}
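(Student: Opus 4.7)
The plan is to prove both equivalences by straightforward contrapositive arguments exploiting totality (and, in the strict case, antisymmetry). In both directions the structure is the same: assume the hypothesis of the axiom we want to verify and, if its conclusion fails, use totality to flip to the converse preference on the enlarged (resp.\ smaller) sets, then apply the other axiom to derive a contradiction.

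For the first claim, let $\preceq$ be a total relation on $\cX$. To show that independence implies reverse independence, suppose $A \cup \{x\} \prec B \cup \{x\}$ with the usual membership conditions. If $A \not\preceq B$, then totality forces $B \prec A$, so by independence $B \cup \{x\} \preceq A \cup \{x\}$, contradicting $A \cup \{x\} \prec B \cup \{x\}$. Hence $A \preceq B$. The converse direction is perfectly symmetric: from $A \prec B$ and the assumption $A \cup \{x\} \not\preceq B \cup \{x\}$, totality gives $B \cup \{x\} \prec A \cup \{x\}$, and reverse independence yields $B \preceq A$, a contradiction.

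For the second claim, let $\preceq$ be total and antisymmetric, so that it is in fact a linear order and for any two elements exactly one of $<$, $=$, $>$ holds. Assume strict independence and suppose $A \cup \{x\} \prec B \cup \{x\}$; if $A = B$ we get $A \cup \{x\} = B \cup \{x\}$, and if $B \prec A$ strict independence gives $B \cup \{x\} \prec A \cup \{x\}$, both contradictory, so $A \prec B$. Conversely, assume reverse strict independence and $A \prec B$; if $A \cup \{x\} = B \cup \{x\}$, then since $x \notin A \cup B$ we would have $A = B$, and if $B \cup \{x\} \prec A \cup \{x\}$ then reverse strict independence gives $B \prec A$, again both contradictory, so $A \cup \{x\} \prec B \cup \{x\}$.

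No step is really a main obstacle; the only mild subtlety is remembering that the side condition $x \notin A \cup B$ in the definitions of the axioms is what lets us conclude $A = B$ from $A \cup \{x\} = B \cup \{x\}$ in the antisymmetric case, so the strict version genuinely needs antisymmetry (not just totality) to rule out the equality case that totality alone leaves open.
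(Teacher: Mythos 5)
Your proof is correct and follows essentially the same route as the paper: both rely on using totality (plus antisymmetry in the strict case) to flip the preference and derive a contradiction, with the paper merely packaging the strict case by reducing it to the non-strict one (reverse strict independence implies reverse independence, hence independence, hence strict independence by antisymmetry), whereas you argue it directly via trichotomy. One small inaccuracy: a total antisymmetric relation need not be transitive, so it is not automatically a linear order --- but since you only use the trichotomy consequence, which follows from totality and antisymmetry alone, this does not affect the argument.
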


\begin{proof}
Let $\preceq$ be a total relation on $\cX$.
Furthermore, assume that $A,B, A\cup \{x\}$ and $B \cup \{x\}$ are in $\cX$.
We show that reverse independence implies independence.
The other direction is analogous.
Assume, that $A \prec B$ and $\preceq$ satisfies reverse independence.
Then, by totality, we must have either $A\cup \{x\} \preceq B \cup \{x\}$
or $B \cup \{x\} \prec A \cup \{x\}$. In the second case, reverse
independence would imply $B \prec A$ which contradicts our assumption 
that $A \prec B$ holds. Hence. $A \prec B$ always implies $A\cup \{x\} \preceq B \cup \{x\}$.
In other words, $\preceq$ satisfies independence.

Now, let $\preceq$ be a total, antisymmetric relation on $\cX$.
If $\preceq$ satisfies reverse strict independence,
it must, by definition, also satisfy reverse independence.
As we have proven above, this implies that $\preceq$
satisfies independence. Now, because $\preceq$ is 
an antisymmetric relation that satisfies independence it must,
by definition, also satisfy strict independence. 
The other direction is analogous.
\end{proof}

We observe that dominance is ``more compatible'' with
reverse strict independence than with strict independence.

\begin{Obs}
Let $X$ be a set with three elements, $\leq$ a linear order on $X$
and $\cX = \pow$. Then, there exists a total order on $\cX$ satisfying 
dominance and reverse strict independence with respect to $\leq$.
\end{Obs}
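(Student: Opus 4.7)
The plan is to establish the observation by explicitly exhibiting a weak order on $\cX$ that satisfies both axioms and then verifying each by direct inspection. Write $X=\{a,b,c\}$ with $a<b<c$, so that $\cX$ has seven elements.

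First I would note why the witness cannot be antisymmetric: by Proposition~\ref{Prop:Reverse}, reverse strict independence and strict independence coincide on total antisymmetric relations, and Barber\`a and Pattanaik's theorem rules out any linear order on $\cX$ satisfying dominance and strict independence. Hence at least one indifference is necessary.

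Second, I would locate where indifference must occur. Consider the pair $\{a,c\}$ and $\{a,b,c\}$. Instantiating reverse strict independence with $x=a$ (taking $A=\{c\}$, $B=\{b,c\}$) shows that $\{a,c\}\prec\{a,b,c\}$ would force $\{c\}\prec\{b,c\}$, which contradicts dominance; instantiating with $x=c$ (taking $A=\{a,b\}$, $B=\{a\}$) shows that $\{a,b,c\}\prec\{a,c\}$ would force $\{a,b\}\prec\{a\}$, again contradicting dominance. Therefore any solution must satisfy $\{a,c\}\sim\{a,b,c\}$.

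Guided by this, I would propose the weak order
\[
\{a\}\prec\{a,b\}\prec\{b\}\sim\{a,c\}\sim\{a,b,c\}\prec\{b,c\}\prec\{c\}
\]
and verify the two axioms. Dominance is immediate from the displayed ranks. For reverse strict independence, I would iterate over each $x\in X$ and each pair $A,B\subseteq X\setminus\{x\}$ with $A\cup\{x\},B\cup\{x\}\in\cX$, and for every instance in which $A\cup\{x\}\prec B\cup\{x\}$ holds in the proposed order check that $A\prec B$ also holds; since all such reductions land inside the dominance chains already present, the required implications all go through. The main obstacle is the second step — spotting the forced indifference between $\{a,c\}$ and $\{a,b,c\}$, which is precisely what distinguishes this case from the impossibility under strict independence. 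Once that is in place, the construction is essentially forced and the verification is a routine case check.
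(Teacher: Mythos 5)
Your proposal is correct and takes essentially the same route as the paper: exhibit an explicit weak order with an indifference between $\{a,c\}$ and $\{a,b,c\}$ and verify dominance and reverse strict independence by a direct case check (your witness differs slightly in also merging $\{b\}$ into that indifference class, whereas the paper keeps $\{1,2,3\}\sim\{1,3\}$ strictly below $\{2\}$, but both orders pass all the required checks). Your preliminary argument that antisymmetry is impossible and that the indifference $\{a,c\}\sim\{a,b,c\}$ is forced is a correct, if inessential, addition.
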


\begin{proof}
We assume w.l.o.g.\ that $X = \{1,2,3\}$ and $\leq$ is the natural linear 
order on $X$.
Then, we claim that 
\[\{1\} \prec \{1,2\} \prec \{1,2,3\} \sim \{1,3\} \prec \{2\} \prec \{2,3\} \prec \{3\}.\]
is a total order that satisfies dominance and reverse strict independence.
We have seen in Example~\ref{Exp:StrctIndMinus} that the restriction of $\preceq$
to $\cX \setminus \{3\}$ is a linear order that satisfies dominance 
and strict independence, hence also reverse strict independence.
Therefore, we only need to look at applications of dominance and 
reverse strict independence that involve $\{3\}$.
Clearly, the only applications of dominance 
involving $\{1,3\}$ are $\{1,3\} \prec \{3\}$
and $\{2,3\} \prec \{3\}$, both of which are satisfied.
Now, consider the case that $A \cup \{x\} \prec B \cup \{x\}$ holds
for $A, B \in \pow$ and $x \not \in A \cup B$.
Clearly $A\cup \{x\} = \{3\}$ or $B \cup \{x\} = \{3\}$
is not possible.
Therefore, we can assume that and $A = \{3\}$ or $B  = \{3\}$.
Assume first that $x = 2$. Then $B = \{3\}$ and 
$A = \{1\}$ or $A = \{1,3\}$. In both cases,
reverse strict independence is satisfied.
Now assume that $x = 1$. Then $B = \{3\}$ and 
$A = \{2\}$. For this pair
reverse strict independence is also satisfied.
\end{proof}

On the other hand, we have seen that reverse strict independence implies 
independence for total relations. Hence, there is no weak order that satisfies
dominance and reverse independence if $\cX = \pow$
for any set $X$ with $|X| \geq 6$.

\section{Complexity Results}\label{chap:Complexity}

In this section we present our results.
We study several problems related to $\leq$-orderability,
strong orderability and weak orderability. 
Together, these results give a nearly complete
picture of the complexity of deciding whether
dominance and (strict) independence are jointly satisfiable
for a given family. 

In the following, we formally define the studied
problems and describe the structure of this section.
The problems are named according to the following system:
First, it is indicated whether the problem
concerns strong, weak or $\leq$-orderability.
The last case is specified by the absence of the words
strong and weak. Then, the set of axioms in defined using
the same notation as for orderability. Finally,
the letters LO, WO, or PO declare whether we expect the lifted
order to be linear, weak or partial. 
Hence, for example, the problem of deciding whether
a family $\cX$ is $\leq$-$DI^S$-orderable
is called \textsc{$DI^S$-WO-Orderability}.

In the beginning of the chapter, we study $\leq$-orderability 
and strong orderability.
First we consider the problem of lifting a
linear order on $X$ to a linear order on $\cX \subseteq \pow$.
In this setting, independence and strict independence
coincide so we study this problem only for
strict independence. This gives us four problems to study.
Two of these problems are:

\begin{problem}
  \problemtitle{$DI^{S}$-LO-Orderability}
  \probleminput{A set $X$, a family of sets $\cX \subseteq \powerset{X}$
and a linear order $\leq$ on $X$.}
  \problemquestion{Is there a linear order on $\cX$ that satisfies dominance and 
   strict independence with respect to $\leq$?}
\end{problem}
 
\begin{problem}
  \problemtitle{Strong $DI^{S}$-LO-Orderability}
  \probleminput{A set $X$ and a family of sets $\cX \subseteq \powerset{X}$.}
  \problemquestion{Is there for every linear order $\leq$ on $X$ a linear order on $\cX$
   that satisfies dominance and strict independence with respect to $\leq$?}
\end{problem}

The other two problems are obtained by adding the extension rule to the requirements.
 
\begin{problem}
  \problemtitle{$DI^{S}E$-LO-Orderability}
  \probleminput{A set $X$, a family of sets $\cX \subseteq \powerset{X}$
and a linear order $\leq$ on $X$.}
  \problemquestion{Is there a linear order on $\cX$ that satisfies dominance,
   strict independence and the extension rule with respect to $\leq$?}
\end{problem}
 
\begin{problem}
  \problemtitle{Strong $DI^{S}E$-LO-Orderability}
  \probleminput{A set $X$ and a family of sets $\cX \subseteq \powerset{X}$.}
  \problemquestion{Is there for every linear order $\leq$ on $X$ a linear order on $\cX$
   that satisfies dominance, strict independence and the extension rule with respect to $\leq$?}
\end{problem}

We first prove that \textsc{Strong $DI^S$-LO-Orderability} is \NP-hard 
(Proposition~\ref{DI^S-NP-hard}) before we improve the result to $\Sigma_2^p$-completeness
in Theorem~\ref{PI1-LO}. The reduction used for this result 
will immediately also prove the \NP-completeness of
\textsc{$DI^S$-LO-Orderability} (Corollary~\ref{Cly:StrictNPcomp}).
Furthermore, we show that adding the extension rule does not change the complexity of the problems,
i.e., we show that \textsc{Strong $DI^SE$-LO-Orderability} is \NP-hard 
and  \textsc{$DI^SE$-LO-Orderability} is \NP-complete (Corollary~\ref{Cly:StrictExt}).

Next, we study the complexity of lifting a linear order on $X$ to a total, but not necessarily 
linear, order on $\cX$ that satisfies dominance and strict independence. This gives us again two problems:

\begin{problem}
  \problemtitle{$DI^{S}$-WO-Orderability}
  \probleminput{A set $X$, a linear order $\leq$ on $X$ and a family of sets $\cX \subseteq \powerset{X}$.}
  \problemquestion{Is $\cX$ $\leq$-$DI^{S}$-orderable?}
\end{problem}

\begin{problem}
  \problemtitle{Strong $DI^{S}$-WO-Orderability}
  \probleminput{A set $X$ and a family of sets $\cX \subseteq \powerset{X}$.}
  \problemquestion{Is $\cX$ strongly $DI^{S}$-orderable?}
\end{problem}

Modifying the reduction used in Proposition~\ref{DI^S-NP-hard}
we additionally prove that \textsc{$DI^{S}$-WO-Orderability} is \NP-complete
(Theorem~\ref{DI^S-NP-comp}).
A further modification of the reduction used for Proposition~\ref{DI^S-NP-hard},
shows that \textsc{Strong $DI^{S}$-WO-Orderability} is $\Pi_2^p$-complete (Theorem~\ref{PI1}).
As before we can define two more problems by adding the extension rule to the requirements,
\textsc{$DI^SE$-WO-Orderability} and 
\textsc{Strong $DI^SE$-WO-Orderability}.
The same complexity results also hold for these problems i.e.,
\textsc{$DI^SE$-WO-Orderability} is \NP-complete and
\textsc{Strong $DI^SE$-WO-Orderability} is $\Pi_2^p$-complete 
(Corollary~\ref{Cly:DI^SE-NP-comp} and \ref{Cly:DI^SE-Pi-comp}).

To conclude the section on lifting to a total order,  
we study the complexity of lifting a linear order on $X$
to a weak order on $\cX$ that satisfies dominance and independence.
This gives us the following problems:

\begin{problem}
  \problemtitle{Strong $DI$-WO-Orderability}
  \probleminput{A set $X$ and a family of sets $\cX \subseteq \powerset{X}$.}
  \problemquestion{Is $\cX$ strongly $DI$-orderable?}
\end{problem}

\begin{problem}
  \problemtitle{$DI$-WO-Orderability}
  \probleminput{A set $X$, a linear order $\leq$ on $X$ and a family of sets $\cX \subseteq \powerset{X}$.}
  \problemquestion{Is $\cX$ $\leq$-$DI$-orderable?}
\end{problem}

Again, we additionally study the problems obtained by adding the extension rule.
With another modification of the reduction used before, we can prove that
\textsc{Strong $DI$-WO-Orderability} is $\Pi_2^p$-complete 
and that \textsc{$DI$-WO-Orderability} is \NP-complete (Theorem~\ref{PI2}).
As before, both results hold if we add the extension rule, i.e.,
\textsc{Strong $DIE$-WO-Orderability} is $\Pi_2^p$-complete 
and \textsc{$DIE$-WO-Orderability} is \NP-complete (Corollary~\ref{DIE-PI-comp}).

The next problem that we consider is the problem of lifting to an partial order.
As we have seen in Example~\ref{Exp:precmm2}, it is always possible to find a 
preorder that satisfies dominance, independence and the extension rule.
Therefore, we only consider the problem of lifting a linear order on $X$ to a partial 
order on $\cX$ that satisfies either dominance and strict independence
or dominance, strict independence and the extension rule.
\citet{Maly2017} gave
a constructive, polynomial time procedure for constructing a  minimal
transitive, reflexive binary relation 
that satisfies dominance and strict independence, resp.\ 
dominance, strict independence and the extension rule.
In this paper, we consider strong orderability with respect to partial orders
and show by a reduction from \textsc{Taut} that the following problems are 
\coNP-complete (Theorem~\ref{partialCoNP}).

\begin{problem}
  \problemtitle{Strong $DI^{S}$-PO-Orderability}
  \probleminput{A set $X$ and a family of sets $\cX \subseteq \powerset{X}$.}
  \problemquestion{Is there for every linear order $\leq$ on $X$ a partial order on $\cX$
   that satisfies dominance and strict independence with respect to $\leq$?}
\end{problem}

\begin{problem}
  \problemtitle{Strong $DI^{S}E$-PO-Orderability}
  \probleminput{A set $X$ and a family of sets $\cX \subseteq \powerset{X}$.}
  \problemquestion{Is there for every linear order $\leq$ on $X$ a partial order on $\cX$
   that satisfies dominance, strict independence and the extension rule with respect to $\leq$?}
\end{problem}

Afterwards, we discuss succinctly represented families of sets.
We introduce the historical and necessary technical background 
on problems that are represented by boolean circuits. 
Then, we show for most of the problems that we studied in this
chapter that their complexity increases exponentially 
if the instances are represented by boolean circuits.

Next, we shift our attention to weak orderability. In this context, we only study 
strict independence. We leave the question whether the following results also hold for independence
for future work. The first problem that we consider is 
weak $DI^S$-orderability.

\begin{problem}
  \problemtitle{Weak $DI^{S}$-WO-Orderability}
  \probleminput{A set $X$ and a family of sets $\cX \subseteq \powerset{X}$.}
  \problemquestion{Is $\cX$ weakly $DI^{S}$-orderable?}
\end{problem}

We show with a reduction from \textsc{Betweenness}
that this problem is \NP-complete (Theorem~\ref{WeakNP}).
A close inspection of the proof shows that this 
also holds if we additionally require the extension rule
and if we require the lifted order to be linear
or only partial. This means \textsc{Weak $DI^{S}$-PO-Orderability},
\textsc{Weak $DI^{S}$-LO-Orderability} as well as \textsc{Weak $DI^{S}E$-PO-Orderability},
\textsc{Weak $DI^{S}E$-WO-Orderability} and
\textsc{Weak $DI^{S}E$-LO-Orderability} are all \NP-complete (see Corollary~\ref{Cly:Weak}) .

We conclude the paper by exploring if strengthening dominance is 
a viable way to reduce the complexity of the studied problems.
We restrict our attention to $\leq$-$DI^S$-orderability
and show that it stays \NP-hard 
for all ``reasonable'' strengthenings of dominance (Theorem~\ref{StrengthDom1}).
Formally, we say that an axiom $A$ is a \emph{reasonable strengthening of dominance} if
\begin{itemize}
\item $A$ implies dominance and
\item $A$ is implied by a very strong axiom called maximal dominance (Axiom~\ref{MaxDom}).
\end{itemize}

Table~\ref{tab:1} summarizes the results proven in this chapter.
The [Thm] column specifies the Theorem, Proposition or
Corollary in which the result is proven. ``known'' indicates that a result is folklore.
Results marked with $(\star)$ were already shown by \citet{Maly2017}.

\begin{table}
\begin{center}
\begin{tabular}{lllll}
\toprule
    Orderability & Dom + Ind &[Thm] & Dom + strict Ind& [Thm]\\
\midrule
                        $\leq$-PO-ord. &  always & known & in \P{} $(\star)$ & \ref{Cly:DI^S-partial}\\
                        strong PO-ord. &  always & known & \coNP-c. $(\dagger)$ & \ref{partialCoNP} \\
                        weak PO-ord. &  always & known & \NP-c. & \ref{Cly:Weak}\\  
\rowcolor{gray!25}      $\leq$-WO-ord.  & \NP-c. $(\star)$& \ref{PI2} & \NP-c. & \ref{DI^S-NP-comp}\\
\rowcolor{gray!25}      strong WO-ord. &  $\Pi_2^p$-c. $(\dagger)$ & \ref{PI2} & $\Pi_2^p$-c. & \ref{PI1}\\
\rowcolor{gray!25}      weak WO-ord.  & \emph{open}   &  n/a & \NP-c. & \ref{WeakNP}\\ 
                        $\leq$-LO-ord.  & \NP-c. & \ref{PI2-LO} & \NP-c.$(\star)$ & \ref{Cly:StrictNPcomp}\\
                        strong LO-ord. &  $\Pi_2^p$-c. & \ref{PI2-LO}  & $\Pi_2^p$-c. $(\dagger)$ & \ref{PI1-LO}\\
                        weak LO-ord.  & \emph{open}    & n/a  & \NP-c. & \ref{Cly:Weak} \\ 
\rowcolor{gray!25}      succ. strong PO-ord. &  always  & known & \coNEXP-c. $(\dagger)$& \ref{Succinct2}\\
\rowcolor{gray!25}      succ. $\leq$-WO-ord. & \NEXP-c. $(\dagger)$& \ref{Succinct3} & \NEXP-c. & \ref{Succinct3} \\
\rowcolor{gray!25}      succ. strong WO-ord. & \NEXP-hard $(\dagger)$& \ref{Succinct3} & \NEXP-hard & \ref{Succinct3} \\
\rowcolor{gray!25}      succ. $\leq$-LO-ord. & \NEXP-c. & \ref{Succinct3}& \NEXP-c. $(\dagger)$& \ref{Succinct1}\\
\rowcolor{gray!25}      succ. strong LO-ord. & \NEXP-hard & \ref{Succinct3}& \NEXP-hard $(\dagger)$& \ref{Succinct1} \\
 \end{tabular}
  \caption{The complexity of orderability with respect to dominance and (strict) independence. $(\star)$ indicates that the result was already shown by \citet{Maly2017}. $(\dagger)$ indicates that the result was alredy contained in the conference version of this paper \citep{AAAIpaper}.}
  \label{tab:1}
\end{center}
\end{table}

\subsection{$\leq$-Orderability and Strong Orderability}\label{sec:leqstrongord}

In this section, we discuss the complexity of several 
variants of $\leq$-orderability and strong orderability
with respect to some subsets of our main axioms. 
The unifying feature of the problems discussed in this section 
is the fact that their hardness can be proven by a variation 
of the same reduction from either \textsc{Sat} or $\Pi_2$-\textsc{Sat}.

\subsubsection*{$DI^S$-LO-Orderability}

We show first that \textsc{Strong $DI^S$-LO-Orderability} is \NP-hard,
even tough we will improve this result in Corollary~\ref{PI1-LO} by showing that 
\textsc{Strong $DI^S$-LO-Orderability} is $\Pi_2^p$-complete.
This approach allows us to present the simplest form of a 
reduction from \SAT{} that will be used -- with some 
modifications -- to prove several other hardness results 
in this chapter.

Let us first give an intuitive description of the main ideas 
of the proof.
The goal of the reduction is to encode a $3$-CNF $\phi$ as 
a families of sets $\cX$.
First the variables in $\phi$ are encoded:
Every variable $V_i$ in $\phi$ will be encoded by two sets
$X^\mathrm{t}_i$ and $X_i^\mathrm{f}$. 
Then, we can equate every linear order $\preceq$ on $\cX$
with a truth assignment to the variables in $\phi$ 
by saying that $V_i$ is set to true if 
$X^\mathrm{f}_i \prec X_i^\mathrm{t}$
and $V_i$ is set to false if 
$X^\mathrm{t}_i \prec X_i^\mathrm{f}$.
Because $\prec$ is a linear order and therefore 
antisymmetric and total, this defines a complete and consistent 
truth assignment. Then, we will add sets to $\cX$ that lead to a cycle
in every linear order $\preceq$ that satisfies dominance and strict independence with respect to $\leq$
if $\preceq$ does not encode a satisfying truth assignment to $\phi$.
To achieve this, we will use the following observation:
If $\leq$ is a linear order on a set $X$
and $A$ and $B$ are two subsets of $X$
such that 
\[\min(B) < \min (A) < \max(A) < \max(B)\]
then there always exists a collection of sets $\cY$ such 
that any linear order $\preceq$ on $\cY \cup \{A,B\}$ that satisfies
dominance and strict independence with respect to $\leq$
has to set $A \prec B$. At the same time, there also has to exist 
a collection of sets $\cY^*$ such 
that any linear order $\preceq$ on $\cY^* \cup \{A,B\}$ that satisfies
dominance and strict independence with respect to $\leq$
has to set $B \prec A$.
Let us illustrate this by an example.

\begin{Exp}
Let $X = \{1,\dots,5\}$ and let $\leq$ be the natural linear order on $X$.
Now consider $A = \{3\}$, $B = \{2,3,4\}$.
Then clearly 
\[\min(B) < \min (A) < \max(A) < \max(B).\]
First, we claim that for the following collection
\[\cY = \{\{1,2,3,4\}, \{1,3\}, \{1\},\{1,2\},\{1,2,3\}\}\]
any linear order $\preceq$ on $\cY \cup \{A,B\}$ that satisfies
dominance and strict independence with respect to $\leq$
has to set $A \prec B$. 
Assume for the sake of contradiction that there is a linear order $\preceq$ on $\cY \cup \{A,B\}$
with $B \prec A$ that satisfies
dominance and strict independence with respect to $\leq$.
Then, strict independence implies 
\[B \cup \{1\} = \{1,2,3,4\} \prec \{1,3\} = A \cup \{1\}.\]
However, by dominance we have $\{1\} \prec \{1,2\}$ 
and hence by strict independence and dominance
\[\{1,3\} \prec \{1,2,3\} \prec \{1,2,3,4\},\]
a contradiction.
On the other hand, it can be checked that 
\[\{1\} \prec \{1,2\} \prec \{1,2,3\} \prec \{1,3\} \prec \{3\} \prec \{1,2,3,4\} \prec \{2,3,4\}\]
is a linear order on $\cY \cup \{A,B\}$ that satisfies
dominance and strict independence with respect to $\leq$.

By a similar argument, we can see that  
\[\cY^* = \{\{2,3,4,5\}, \{3,5\}, \{5\},\{4,5\},\{3,4,5\}\}.\]
has the property that 
that any linear order $\preceq$ on $\cY^* \cup \{A,B\}$ that satisfies
dominance and strict independence with respect to $\leq$
has to set $B \prec A$.
\end{Exp}
 
We will use this observation and define the sets encoding variables 
in a way such that for all $a,b \in \{\mathrm{t}, \mathrm{f}\}$ and
$i,j \leq n$, where $n$ is the number of variables in $\phi$, we have either 
\[\min(X_i^a) < \min (X_j^b) < \max(X_j^b) < \max(X_i^a)\]
or 
\[\min(X_j^b) < \min (X_i^a) < \max(X_i^a) < \max(X_j^b).\]
This can be ensured for example by a construction 
where every set has a common middle part and a unique 
minimal and maximal element. Then, for every new set
we increase the minimal and decrease the maximal
element at the same time (See Figure~\ref{fig:variables}).

\begin{figure}
\begin{center}
\begin{tikzpicture}[scale=0.44]
\fill (1,20) circle (4pt); 
\fill (2,18) circle (4pt); 
\fill (5,14) circle (4pt); 
\fill (20,20) circle (4pt); 
\fill (19,18) circle (4pt); 
\fill (17,14) circle (4pt); 
\draw[line width=4pt] (6,20) -- (16,20);
\draw[line width=4pt] (6,18) -- (16,18);
\draw[line width=4pt] (6,14) -- (16,14);
\draw[line width=1pt, loosely dotted] (11,17.5) -- (11,14.5);
\end{tikzpicture}
\caption{Sets encoding variables.}
\label{fig:variables}
\end{center}
\end{figure}
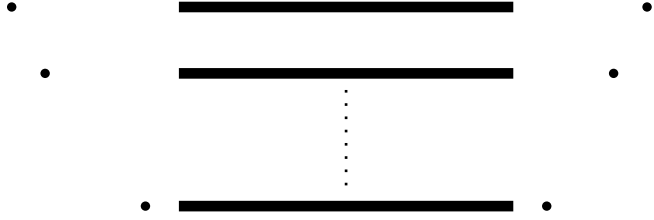

Then we can enforce any preference we need between sets encoding
different variables by adding the correct collection of set.
We use this to enforce for every clause preferences 
that lead to a contradiction whenever no literal in the clause
is satisfied. Consider for example the clause $C = x_1 \vee \neg x_2$.
Then, we enforce the preferences
\[X_1^\mathrm{f} \prec X_2^\mathrm{f} \text{ and }
X_2^\mathrm{t} \prec X_1^\mathrm{t}\]
Now consider a linear order $\preceq$ that contains 
\[X_1^\mathrm{t} \prec X_1^\mathrm{f} \text{ and } X_2^\mathrm{f} \prec X_2^\mathrm{t}\]
and hence encodes an assignment that sets $x_1$ to false and $x_2$ to true.
This assignment does not satisfy $C$ and indeed $\preceq$ contains the following cycle
\[X_1^\mathrm{f} \prec X_2^\mathrm{f} \prec X_2^\mathrm{t} \prec X_1^\mathrm{t} \prec X_1^\mathrm{f}\]
On the other hand,  
\[X_1^\mathrm{f} \prec X_2^\mathrm{f} \prec X_2^\mathrm{t} \prec X_1^\mathrm{t}\]
is a linear order that is compatible with the enforced preferences and
encodes the satisfying assignment that sets $x_1$ and $x_2$ to true. 

Using this idea, we can add for every clause sets that lead to a cycle
if the clause is not satisfied by the assignment coded by a linear order.
The main technical difficultly of the proof will be to implement this approach
in a way that ensures that
no cycle occurs if $\preceq$ encodes a satisfying truth assignment.

\begin{Prop}\label{DI^S-NP-hard}
\textsc{Strong $DI^S$-LO-Orderability} is \NP-hard.
\end{Prop}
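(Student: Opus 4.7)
The plan is a polynomial-time reduction from \Sat. Given a 3-CNF formula $\phi$ with variables $V_1,\dots,V_n$ and clauses $C_1,\dots,C_m$, I will construct a ground set $X$ equipped with a natural linear order $\leq$ together with a family $\cX\subseteq\pow$ such that $\cX$ is strongly $DI^S$-LO-orderable if and only if $\phi$ is satisfiable.

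For each variable $V_i$ I introduce a pair of variable-encoding sets $X_i^{\mathrm t},X_i^{\mathrm f}$ arranged in the deeply nested configuration sketched in Figure~\ref{fig:variables}: every variable-encoding set shares a common middle region, but owns its own unique minimum and maximum element of $X$. This guarantees the geometric property that for any two variable-encoding sets $A,B$, one strictly contains the min/max span of the other. The example preceding the statement then shows that, under this property, any desired preference $A\prec B$ (or $B\prec A$) can be enforced in every linear lift satisfying dominance and strict independence by attaching an appropriate auxiliary collection of sets to $\cX$. I will interpret any valid linear lift $\preceq$ as the truth assignment sending $V_i$ to true iff $X_i^{\mathrm f}\prec X_i^{\mathrm t}$; by totality and antisymmetry of $\prec$, this is well-defined and complete.

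For each clause $C_j$ I attach a clause gadget built from these auxiliary collections that forces preferences among the six variable-encoding sets corresponding to the literals of $C_j$. The enforced preferences are tuned so that, combined with the truth-assignment preferences above, they close into a $\prec$-cycle precisely when every literal of $C_j$ is falsified by the induced assignment. Consequently, for the natural $\leq$, a valid $\preceq$ exists iff this assignment satisfies every clause. This immediately yields the easy direction of strong orderability: if $\phi$ is unsatisfiable, then under the natural $\leq$ no lift exists, so $\cX$ is not strongly orderable.

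The harder direction is to produce, from a satisfying assignment of $\phi$, a valid lift for \emph{every} linear order $\leq^*$ on $X$, not only the natural one. My plan is to pad the construction with anchor sets whose membership pattern in $\cX$ forces any lift satisfying dominance (under any $\leq^*$) to treat certain fixed elements as the extremal ones; up to the swap $\leq^* \leftrightarrow (\leq^*)^{-1}$, which is harmless by Lemma~\ref{Lem:inverse}, this reduces the constraints under $\leq^*$ to those under the natural $\leq$, so the satisfying assignment still produces a valid $\preceq$. The main obstacle will be exactly this robustness step: because dominance and strict independence are defined relative to $\leq^*$, a gadget tuned for $\leq$ can in principle be dismantled by a malicious $\leq^*$, so the anchoring must be strong enough to pin down the extremal role of every key element of $X$ using only set-theoretic structure.
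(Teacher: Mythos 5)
Your first half follows the paper's reduction closely: variable pairs $X_i^{\mathrm t},X_i^{\mathrm f}$ with a shared middle and private minima/maxima, auxiliary collections that force a chosen strict preference between any two such nested sets, and clause gadgets whose enforced preferences close into a cycle under the critical order exactly when all literals of a clause are falsified. Up to that point you are on the paper's track, and the conclusion that unsatisfiability of $\phi$ kills orderability under the critical order is fine.

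The gap is in the robustness step, and the route you sketch for it cannot work. In \textsc{Strong $DI^S$-LO-Orderability} the linear order $\leq^*$ on $X$ is universally quantified: the family $\cX$ is fixed first and the adversary then picks \emph{any} $\leq^*$. No ``anchor sets'' can force $\leq^*$ to place designated elements at the extremes, because dominance and strict independence constrain the lifted order $\preceq$ \emph{relative to} $\leq^*$; they say nothing about $\leq^*$ itself. The only thing extra sets can do is make $\cX$ non-orderable under the ``wrong'' $\leq^*$ -- but that is self-defeating here: if some $\leq^*$ admits no valid lift, then $\cX$ is not strongly orderable even when $\phi$ is satisfiable, and the forward direction of your reduction collapses. (Such punishing gadgets are exactly what the paper uses later for \emph{weak} orderability, where one only needs a single good order; they are unusable for strong orderability.) What is actually required -- and what constitutes the bulk of the paper's proof -- is a direct construction showing that for \emph{every} linear order $\leq'$ (w.l.o.g.\ $v_1 <' v_2$ via the inversion lemma and the orientation elements $v_1,v_2$), a satisfying assignment yields a linear lift: one orders the variable sets and their local instantiations by the assignment, orders the auxiliary sets sharing an extremum element by nested leximax/leximin-type orders defined from $\leq'$, and verifies case by case that only the designed applications of dominance and strict independence can occur, the enforced constraints simply becoming slack when $\leq'$ is not the critical order. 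You explicitly leave this step open, and the anchoring idea you propose in its place would fail, so the proposal does not establish \NP-hardness.
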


\begin{proof}
Let $\phi$ be an instance of \textsc{Sat} with $n$ variables and $m$ clauses.
We will produce an instance $(X,\cX)$ of \textsc{Strong $DI^S$-LO-Orderability}.
Furthermore, we will fix a specific linear order $\leq$ on $X$
such that there is a linear order on 
$\cX$ that satisfies dominance and strict independence with respect to $\leq$ only if $\phi$ is satisfiable. 
Furthermore, we want to make sure that we can use a satisfying assignment of $\phi$
to construct for any arbitrary linear order $\leq'$ on $X$ 
a linear order $\preceq$ on $\cX$ that satisfies dominance and strict independence.

First, we construct the set of elements $X$. 
For every variable $V_i$, the set $X$ contains elements
$x_{i,1}^-, x_{i,2}^-, x_{i,1}^+$ and $x_{i,2}^+$.
These will be used to construct the sets for different 
variables. Furthermore, it contains for every clause $C_j$
variables $z_j^a, y_{j}^a, \min_j^a$  and $\max_j^a$ for $a \in \{1,2,3\}$.
In the following, we will call $\min_j^a$ and $\max_j^a$ extremum elements.
These will be used to ensure that only orders encoding a
satisfying assignment to the variables in $C_j$ can satisfy dominance and strict independence.
Finally, it contains 
two elements $v_1$ and $v_2$.
These will determine the ``orientation'' of a linear order $\leq'$ on $X$. 
In general, the order lifting problem is symmetric
(see Lemma~\ref{Lem:inverse}) whereas \Sat{} is not symmetric
with respect to truth. To overcome this difficultly,
the preference between $v_1$ and $v_2$ determines
if $X^\mathrm{f}_i \prec X_i^\mathrm{t}$ means that $V_i$ is true
or that $V_i$ is false. This way, $\leq'$ and 
$\leq'^{-1}$ encode the same truth assignment.
Next, we fix a linear order $\leq$ that we define by:
\begin{multline*}
\Min_1^1 < \Min_1^2 < \dots < \Min_m^3 < x^-_{1,1} < x^-_{1,2} <\dots
 < x^-_{n,2} \\  <v_1 < v_2 
< z_1^1 < z_1^2 \dots < z_m^3 < y_1^1 < y_1^2 < \dots
  < y_m^3 < \\ x^+_{1,1} < x^+_{1,2} < \dots 
< x^+_{n,2} < \Max_1^1 < \Max_1^2 < \dots < \Max_m^3
\end{multline*}
We call this the critical linear order.
Now, we construct the family $\cX$. 
Our first goal is to ensure that there does not exists a linear order 
on $\cX$ that satisfies dominance and strict independence with respect
to $\leq$ if $\phi$ is not satisfiable.
In the following, we write $Y := \{x \in X \mid v_1 \leq x \leq y^3_m\}$.
First we add the sets representing the variables of $\phi$.
We add for every variable $V_i$
sets $X^\mathrm{t}_i = Y \cup \{x_{i,1}^-,x_{i,1}^+\}$
and $X^\mathrm{f}_i = Y \cup \{x_{i,2}^-,x_{i,2}^+\}$.
We call these the Class 1 sets and write $Cl_1$ for the collection of all Class 1 sets.

Now, let $C_i$ be a clause with variables $V_j, V_k$ and $V_l$.
We want to ``enforce'' specific preferences between the sets representing 
$V_j, V_k$ and $V_l$ depending on whether they appear positively or negatively in $C_i$.
However, this could lead to problems if the same variables also occur in another clause.
Therefore, we add what could be considered local instantiations of the sets representing
the variables $V_j, V_k$ and $V_l$:
\[
X^\mathrm{t}_j \setminus \{y_i^1\}, X_j^\mathrm{f} \setminus \{y_i^1\}, X^\mathrm{t}_k \setminus \{y_i^2\},
X_k^\mathrm{f} \setminus \{y_i^2\}, X^\mathrm{t}_l \setminus \{y_i^3\}  \mbox{ and } X_l^\mathrm{f} \setminus \{y_i^3\}.
\]
We call these the Class 2 sets and write $Cl_2$ for the collection of all Class 2 sets.
Now, let $\preceq$ be a linear order on $\cX$ that satisfies strict independence. 
Then, it also satisfies reverse strict independence by Proposition~\ref{Prop:Reverse}.
By reverse strict independence
we know that for $\preceq$ the preference between 
$X^\mathrm{t}_j \setminus \{y_i^1\}$ and $X_j^\mathrm{f}\setminus \{y_i^1\}$ must be the same as the 
preference between $X^\mathrm{t}_j$ and $X_j^\mathrm{f}$. The same holds for the other two variables.
In this sense, these ``local instantiations'' correctly reflect the truth assignment
encoded by any linear order on $\cX$, if the linear order satisfies strict independence.
On the other hand, any preference between local instantiations of sets representing different variables,
say $X^\mathrm{t}_j \setminus \{y_i^1\}$ and $X_k^\mathrm{f}\setminus \{y_i^2\}$
stays local, because $y_i^1 \neq y_i^2$.

Now, if all variables occur positively in $C_i$,
we add sets such that $X_j^\mathrm{f}\setminus \{y_i^1\} \prec X^\mathrm{t}_k \setminus \{y_i^2\}$,
$X_k^\mathrm{f}\setminus \{y_i^2\} \prec X^\mathrm{t}_l \setminus \{y_i^3\}$ and
$X_l^\mathrm{f}\setminus \{y_i^3\} \prec X^\mathrm{t}_j \setminus \{y_i^1\}$
must hold for any order $\preceq$ on $\cX$ that satisfies dominance and strict independence
with respect to $\leq$.
We call this enforcing these preferences.
Then, we get a contradiction if $X^\mathrm{t}_a \setminus \{y_i^b\} \prec X_a^\mathrm{f}\setminus \{y_i^b\}$ for 
all $a \in \{j,k,l\}$, i.e., if $\preceq$ corresponds to a truth assignment where $V_j$, $V_k$ and $V_l$, are false because
\[
X^\mathrm{t}_j \setminus \{y_i^1\} \prec X_j^\mathrm{f}\setminus \{y_i^1\} \prec X^\mathrm{t}_k \setminus \{y_i^2\} \prec 
X_k^\mathrm{f}\setminus \{y_i^2\} \prec X^\mathrm{t}_l \setminus \{y_i^3\} \prec
X_l^\mathrm{f}\setminus \{y_i^3\} \prec X^\mathrm{t}_j \setminus \{y_i^1\}
\]
holds and implies $X^\mathrm{t}_j \setminus \{y_i^1\} \prec X^\mathrm{t}_j \setminus \{y_i^1\}$
by transitivity.
If a variable, say $V_j$, occurs negatively in $C_i$, we switch $X^\mathrm{t}_j$ and $X_j^\mathrm{f}$
and enforce $X_j^\mathrm{t}\setminus \{y_i^1\} \prec X^\mathrm{t}_k \setminus \{y_i^2\}$ and
$X_l^\mathrm{f}\setminus \{y_i^3\} \prec X^\mathrm{f}_j \setminus \{y_i^1\}$.

Next, we show how we can enforce these preferences. Assume we want to enforce 
$X_j^a\setminus \{y_i^1\} \prec X^b_k \setminus \{y_i^2\}$ for $a,b \in \{\mathrm{t},\mathrm{f}\}$.
We add 
\[\{z_i^1\}, \{z_i^1, \Max_i^1\} \text{ and }(X_j^a\setminus \{y_i^1\}) \cup \{\Max_i^1\}.\]
Our goal is to enforce $(X_j^a\setminus \{y_i^1\}) \cup \{\Max_i^1\}\prec \{z_i^1, \Max_i^1\} $
which forces by reverse strict independence
$X_j^a\setminus \{y_i^1\} \prec \{z_i^1\}$.
Then, we enforce 
$\{z_i^1\} \prec X_k^b \setminus \{y_i^2\}$
to get by transitivity
$X_j^a\setminus \{y_i^1\} \prec X^b_k \setminus \{y_i^2\}$
as desired.
To enforce 
\[(X_j^a\setminus \{y_i^1\}) \cup \{\Max_i^1\}\prec \{z_i^1, \Max_i^1\}\]
we add a sequence of sets $A_1, A_2, \dots, A_l$ such that
\begin{itemize}
\item $A_1 = (X_j^a\setminus \{y_i^1,z_i^1)\} \cup \{\Max_i^1\}$,
\item $A_{i+1} = A_i \setminus \{\min_\leq (A_i)\}$
\item and $A_l = \{\Max_i^1\}$.
\end{itemize}
This enforces by dominance $A_1 \prec A_2 \prec \dots \prec A_l$
which enforces by transitivity
\[A_1 = (X_j^a\setminus \{y_i^1,z_i^1\})\cup \{\Max_i^1\} \prec \{\Max_i^1\} = A_l.\]

Finally, as desired, it follows from strict independence that
$(X_j^a\setminus \{y_i^1\}) \cup \{\Max_i^1\}\prec \{z_i, \Max_i^1\}$.
Using the same idea and $\Min_i^1$ we enforce
$\{z_i^1\} \prec X_k^b \setminus \{y_i^2\}$
finishing the construction for
$X_j^a\setminus \{y_i^1\} \prec X^b_k \setminus \{y_i^2\}$.
We enforce the other preferences for that clause, i.e.,
$X_k^c\setminus \{y_i^2\} \prec X^d_l \setminus \{y_i^3\}$ and
$X_l^e\setminus \{y_i^3\} \prec X^a_j \setminus \{y_i^1\}$ for $c,d,e \in \{\mathrm{t},\mathrm{f}\}$,
similarly using $z_i^2, \Max_i^2$ and $\Min_i^2$ resp.\
$z_i^3, \Max_i^3$ and $\Min_i^3$.
We repeat this procedure for every clause.
We call the sets added in this step the Class 3 sets
and write $Cl_3$ for the collection of all Class 3 sets.
Furthermore, we write $Cl_3^+$ for the Class 3 sets that
contain an element $\Max_i^a$ for some $i$ and $a$.
Similarly, we write $Cl_3^-$ for the Class 3 sets that
contain an element $\Min_i^a$ for some $i$ and $a$.
Finally, we write $Cl_3^0$ for all other sets in Class 3
(which are all of the form $\{z_i^a\}$).
Now, by construction, $\cX$ can only be $DI^S$-orderable with respect to $\leq$ 
if $\phi$ is a positive instance of \textsc{Sat}.

Next, we pick an arbitrary linear order $\leq'$ on $X$.
We distinguish two cases $v_1 <' v_2$ and $v_2 <' v_1$.
By Lemma~\ref{Lem:inverse} it suffices to show that a linear
satisfying dominance and strict independence exists
in the first case, because $v_2 <' v_1$ implies $v_1 <'^{-1} v_2$.
Hence, we can assume in the following w.l.o.g.\ $v_1 <' v_2$.
Now, we want to construct a linear order $\preceq$ on $\cX$
that satisfies dominance and strict independence with respect to $\leq'$
if $\phi$ is satisfiable. For the readers convenience,
we summarize which sets are contained in $\cX$ and
hence must be taken into account when constructing $\preceq$:

\medskip

\noindent \fbox{
\begin{minipage}{0.96\textwidth}
\begin{description}
\item[Class 1:] $X^\mathrm{t}_i = Y \cup \{x_{i,1}^-,x_{i,1}^+\}$
and $X^\mathrm{f}_i = Y \cup \{x_{i,2}^-,x_{i,2}^+\}$ for every variable $V_i$ of $\phi$.
\item[Class 2:] $X^\mathrm{t}_j \setminus \{y_i^a\}$ and $X_j^\mathrm{f} \setminus \{y_i^a\}$,
where $a \in \{1,2,3\}$ and $i$ and $j$ are such that $V_j$ is in clause $C_i$.
\item[Class 3:] For each $i \leq m$ and $a \in \{1,2,3\}$ 
and for some $b,c \in \{\mathrm{t}, \mathrm{f}\}$, $d \in \{1,2,3\}$
and $j,k$ such that $V_j$ and $V_k$ are variables in clause $C_i$:
\begin{itemize}
\item $\{z_i^a\}$, $\{z_i^a, \max_i^a\}$, $\{z_i^a, \min_i^a\}$,
$(X_j^b \setminus \{y_i^a\}) \cup \{\Max_i^a\}$, $(X_k^c \setminus \{y_i^d\})\cup\{\Min_i^a\}$,
\item $A_1 = (X_j^b \setminus \{y_i^a, z_i^a\}) \cup \{\max_i^a\},
\dots, A_{l+1} = A_l \setminus \{\min_\leq(A_l)\}, \dots, A_o = \{\max^a_i\}$,
\item $B_1 = (X_k^c \setminus \{y_i^d, z_i^a\}) \cup \{\min_i^a\},
\dots, B_{l+1} = B_l \setminus \{\max_\leq(B_l)\}, \dots, B_p = \{\min^a_i\}$.
\end{itemize}
\end{description}
\end{minipage}
}

\bigskip

Now, we construct the order $\preceq$ on $\cX$ in several steps.
First, we construct from a satisfying assignment of $\phi$
an order on $Cl_1 \cup Cl_2$.

\paragraph{Ordering the sets in $Cl_1 \cup Cl_2$:}
Intuitively, we just order all pairs
$(X_i^\mathrm{f},X_i^\mathrm{t})$ according to the satisfying
assignment of $\phi$ and then `project' this order down
on the Class 2 sets, i.e., we order
$X^\mathrm{f}_j \setminus \{y_i^a\}$ and  $X_j^\mathrm{t} \setminus \{y_i^a\}$
the same way as $(X_i^\mathrm{f}$ and $X_i^\mathrm{t})$.
Finally, we want to extend this partial order to a linear order on
$Cl_1 \cup Cl_2$. In doing so, we only have to pay attention 
to the two possible applications of dominance,
namely if $y_i^a$ is either the minimal or maximal element of $X_i^\mathrm{f}$ resp.\ $X_i^\mathrm{t}$.
(All other applications
of dominance and independence are already covered by the order implied by the 
assignment and its projection.) Formally, we construct the order as follows.

We start by ordering the sets $X_i^\mathrm{f}$ and $X_i^\mathrm{t}$ according to the
satisfying assignment of $\phi$, i.e., $X_i^\mathrm{t} \prec X_i^\mathrm{f}$ if
$V_i$ is false in the assignment and $X_i^\mathrm{f} \prec X_i^\mathrm{t}$ if it is true.
Then, we project this order down to the Class 2 sets by reverse strict independence.
Furthermore, we add the preferences that we enforced in the previous section.
Finally, we take the transitive closure of this order.
It is clear by construction
that this is an acyclic partial order
if and only if $\phi$ is satisfiable.
Now, for any clause $C_i$, we fix any linear order on the sets
\[X^\mathrm{f}_j \setminus \{y_i^1\}, X_j^\mathrm{t} \setminus \{y_i^1\}, X^\mathrm{f}_k \setminus \{y_i^2\},\\
X_k^\mathrm{t} \setminus \{y_i^2\}, X^\mathrm{f}_l \setminus \{y_i^3\}
\mbox{ and } X_l^\mathrm{t} \setminus \{y_i^3\}.\]
that extends this order.

For the Class 1 sets we have ordered all pairs $(X_i^\mathrm{f},X_i^\mathrm{t})$
but we still have to fix an order between these pairs. 
For the Class 2 sets, we have fixed an order on between all sets introduced
for a single clause, but we have to fix an order between sets from different clauses.
Now, we observe that $A,A\cup\{x\} \in Cl_1 \cup Cl_2$ implies that
$A \in Cl_1$, $A \cup\{x\} \in Cl_2$ and $x = y_i^a$ for $j\leq m$ and $a\leq 3$
as all Class 1 sets differ from all other Class 1 sets in 
at least two elements and all Class 2 sets differ from
all other Class 2 sets in at least two elements. 
Hence the only possible application of strict independence
on Class 1 and 2 is the one already covered by construction.
Dominance is applicable only if $y_i^a$ for some $i$ and $a$ is
the minimal or maximal element of the set it gets removed from.
Observe that $y_i^a$ can only be the maximal element 
of the set it is removed from if $y_j^b < y_i^a$ holds for all $j \neq i$
and $b \neq a$. Hence, there can only be one case where $y_i^a$ is the maximal
element of the set it is removed from and, by a similar argument, one instance
where it is the minimal element. Therefore, there are at most two possible
applications of dominance.
We fix two orders, one on the pairs that will be used to order the Class 1 sets
and one on the clauses that will be used to order the Class 2 sets. 
We have to make sure that these orders are compatible with these applications
of dominance.

First, assume the minimal element $y_{i^-}^{a^-}$ of the form $y_i^a$
and the maximal element $y_{i^+}^{a^+}$ of the form $y_i^a$ are used 
for the same clause. Let $X_j^b$ and $X_k^c$ be the sets
such that $X_j^b \setminus \{y_{i^-}^{a^-}\} \in \cX$ and
$X_k^c \setminus \{y_{i^+}^{a^+}\} \in \cX$ holds.
Then, by construction $j \neq k$, as, for a given clause, we remove the same
element $y_i^a$ from both $X^\mathrm{t}_j$ and $X^\mathrm{f}_j$. 
In that case
we fix any linear order $\leq''$ on the pairs $(X_i^\mathrm{f},X_i^\mathrm{t})$ such that 
$(X_j^\mathrm{f},X_j^\mathrm{t}) \leq'' (X_k^\mathrm{f},X_k^\mathrm{t})$ holds
and an arbitrary order on the clauses.
Then, we set $X_i^\mathrm{f} \prec A$ and $X_i^\mathrm{t} \prec A$ for every
$A \in Cl_2$ if $(X_i^\mathrm{f},X_i^\mathrm{t}) <'' (X_j^\mathrm{f},X_j^\mathrm{t})$.
Furthermore, we set $A \prec X_i^\mathrm{f}$ and $A \prec X_i^\mathrm{t}$ for every
$A \in Cl_2$ if $(X_j^\mathrm{f},X_j^\mathrm{t})<''(X_i^\mathrm{f},X_i^\mathrm{t})$.
This is obviously a linear order and we have
$X_j^b \prec X_j^b \setminus \{y_{i^-}^{a^-}\}$
and $X_k^c \setminus \{y_{i^+}^{a^+}\} \prec X_k^c$ for $b,c \leq 2$.
Hence the constructed order on $Cl_1 \cup Cl_2$ satisfies dominance.

Now, assume the minimal element $y_{i^-}^{a^-}$ of the form $y_i^a$
and the maximal element $y_{i^+}^{a^+}$ of the form $y_i^a$ are used 
for different clauses $C_{i^-}$ and $C_{i^+}$.
We fix any order $\leq''$ on the clauses such that $C_{i^+}$
is smaller than $C_{i^-}$ and an arbitrary order on the pairs.
Additionally, we set $A \prec B$ for all $A \in Cl_2$ and $B \in Cl_1$
if $A$ was introduced for a clause that is smaller equal $C_{i^+}$
with respect to $\leq''$.
Furthermore, we set $B \prec A$ for all $A \in Cl_2$ and $B \in Cl_1$
if $A$ was introduce for a clause that is larger than $C_{i^+}$
with respect to $\leq''$.
This is obviously a linear order and we have
$X_j^b \prec X_j^b \setminus \{y_{i^-}^{a^-}\}$
and $X_k^c \setminus \{y_{i^+}^{a^+}\} \prec X_k^c$ for $b,c \leq 2$.
Hence the constructed order on $Cl_1 \cup Cl_2$ satisfies dominance.

\paragraph{Ordering sets in $Cl_3$ with same extremum element and $\{z_i^a\}$:}
We recall that we call the elements $\max_j^a$ and $\min_j^a$ extremum elements.
We will first define an order between the Class 3 sets that have the 
same extremum element and $\{z_i^a\}$, i.e., that have been introduced to force 
a same preference. 
In the following, we write for a set $A$ that contains an extremum-element $mm$ 
(which is by construction unique) $A_S := \{x \in A \mid x <' mm\}$  
for the set of elements in $A$ that are smaller than $mm$
and $A_L := \{x \in A \mid mm <' x\}$  
for the set of elements in $A$ that are larger than $mm$.

We set $A \prec B$ for sets $A,B$
that both contain the same extremum element of the form $\Max_i^c$ if:
\begin{itemize}
\item $\max_{\leq'}(A_L \triangle B_L) \in B$,
\item $A_L = B_L$ and $\min_{\leq'}(A_S \triangle B_S) \in A$.
\end{itemize}
Here, $\triangle$ is the symmetric difference operator, i.e., 
$A \triangle B := (A \cup B) \setminus (A \cap B)$.
Intuitively, this order is a `nesting' of lexicographic orders.
We first check whether $A_L$ is smaller than $B_L$ according to the leximax
order and only if $A_L$ and $B_L$ are equal we compare $A_S$ and $B_S$ according
to the leximin order.
We claim that this order satisfies dominance and strict independence.
It satisfies strict independence because
for all sets $S,T$ by definition $S \cup \{x\} \triangle T \cup \{x\} = S \triangle T$
for any $x \not \in S \cup T$.
For dominance, assume $x <' \min_{<'}(A)$ and $\Max_i^c \in A,A \cup \{x\}$.
Then, $A_L = (A \cup \{x\})_L$ and 
$\min_{\leq'}(A_S \triangle (A \cup \{x\})_S) = x$.
Hence, $A\cup\{x\} \prec A$. The case $\max_{<'}(A) <' x$ is similar. 

Next, we add $\{z^c_i\}$ to the order.
We observe that we
have either 
$X_j^a\setminus \{y_i^b\} \cup \{\Max_i^c\}\prec \{z^c_i, \Max_i^c\}$
or 
$\{z^c_i, \Max_i^c\} \prec X_j^a\setminus \{y_i^b\} \cup \{\Max_i^c\}$.
In the first case, we add $\{z^c_i\}$ in the order exactly after $X_j^a\setminus \{y_i^b\}$,
i.e., we set $\{z^c_i\} \prec A$ if $X_j^a\setminus \{y_i^b\} \prec A$
and $A \prec \{z^c_i\}$ if $A \preceq X_j^a\setminus \{y_i^b\}$ for all $A \in \cX \setminus
\{z^c_i\}$.
In the second case we set $\{z_i^c\}$ exactly before $X_j^a\setminus \{y_i^c\}$.

Now, let $X_k^d \setminus \{y_i^e\}$ be the set for which
we enforce the preference
$X_j^a\setminus \{y_i^c\} \prec X_k^d\setminus \{y_i^e\}$.
Then, this implies $\{z^c_i\} \prec X_k^d\setminus \{y_i^e\}$,
because $X_j^a\setminus \{y_i^b\} \prec X_k^d\setminus \{y_i^e\}$
holds by construction and we added $\{z^c_i\}$ just before or just after $X_j^a\setminus \{y_i^b\}$.
Therefore, we have to make sure that 
$\{z^c_i, \Min_i^c\} \prec X_k^d\setminus \{y_i^e\} \cup \{\Min_i^c\}$
holds as intended by the construction to avoid a contradiction.
For this we use the fact that $v_1 <' v_2$ holds.
We set $A \prec B$ for elements $A,B$ 
if they both contain an element of the form $\Min_i^c$ if:
\begin{itemize}
\item $v_2 \in B$ and $v_2 \not \in A$, {\hfill$(\star)$}
\item $v_2 \in A,B$ or $v_2 \not \in A,B$ and $\max_{\leq'}(A_L \triangle B_L) \in B$,
\item $v_2 \in A,B$ or $v_2 \not \in A,B$, $A_L = B_L$ and $\min_{\leq'}(A_S \triangle B_S) \in A$.
\end{itemize}
It is clear that $(\star)$ implies
$\{z^c_i, \Min_i^c\} \prec X_k^d\setminus \{y_i^e\} \cup \{\Min_i^c\}$.
It is also clear that it satisfies strict independence
because the $(\star)$ implies a preference between sets $A \cup \{x\}$ and $B\cup\{x\}$
for $x \not \in A \cup B$ iff it implies the same preference for $A$ and $B$. 
If $(\star)$ is not applicable, strict independence is satisfied 
by the same argument as above.
Now, for dominance $v_2 \in (A \triangle (A\cup \{x\}))$ implies
$x = v_2$. Then, $x < \min_{<'} (A)$ is not possible because by construction
$v_1 \in A$ holds and we assume $v_1 <' v_2$. 
If we have $\max_{<'} (A) <' x$ then dominance is satisfied
because $A \prec A \cup \{x\}$ holds by $(\star)$.
If $x \neq v_2$, then $(\star)$ is not applicable and
dominance is satisfied by the same argument as above.

\paragraph{Interaction between $Cl_1 \cup Cl_2$ and $Cl_3$:}
The interaction between the Classes $Cl_1 \cup Cl_2$ and $Cl_3$
can be handled quite straightforwardly, as the only possible
application of dominance and strict independence are the ones
intended by the construction, i.e., when adding a extremum element
to force a preference. 

First, we observe that there is no set $A \in Cl_3$
such that $A \cup \{x\} \in Cl_1 \cup Cl_2$ holds,
as every set in $C_3$ either contains an extremum-element or it is a singleton
and no set in Class 1 and 2 contains an extremum-element 
and every set in Class 1 and 2 has more than three elements.

Furthermore, if $A \in Cl_1 \cup Cl_2$
and $A \cup \{x\} \in Cl_3$, then 
$A$ must be of the form $X_i^a \setminus \{y_j^b\}$ and
$x$ must be $\Max_j^c$ or $\Min_j^d$. Additionally, by construction,
for every $A \in Cl_1 \cup Cl_2$ there is at most one $x$ such that
$A \cup \{x\}$ holds.
Therefore, the only possible application of dominance 
involving sets from $Cl_1 \cup Cl_2$ and $Cl_3$ 
is adding an extremum element $\Max_j^c$ or $\Min_j^d$
to a set $A$ in $Cl_1 \cup Cl_2$ such that the extremum 
element is smaller (larger) than all elements in $A$.
We have to add preferences that satisfy this 
application of dominance. 
Consider $A \in Cl_1 \cup Cl_2$ and $B \in Cl_3^+ \cup Cl_3^-$
and let $mm_B$ be the unique extremum element in $B$.
Furthermore, let $i \leq$ and $a \in \{1,2,3\}$ be the unique
values for which $\{z_i^a,mm_B\}$ is in $\cX$.
Then, we extend $\preceq$ by
\begin{itemize}
\item $A \prec B$ if $z_i^a < mm_B$,
\item $B \prec A$ if $mm_B < z_i^a$.
\end{itemize}
Clearly, the resulting order is still a partial order 
and it satisfies all possible applications of dominance 
involving sets from $Cl_1 \cup Cl_2$ and $Cl_3$,
because $z_i^a \in A$ for all $i \leq m$, $a \in \{1,2,3\}$
and $A \in Cl_1 \cup Cl_2$.
Furthermore, observe that we do not add any 
new preferences between sets in $Cl_1 \cup Cl_2$,
but we do add, by transitivity, new comparisons between sets
in $Cl_3$ that contain different extremum elements.

Now, for strict independence, the only application
with sets from Class 3 and sets not from Class 3 
is lifting a preference between a set $A$ and $\{z_i^a\}$ to a preference
between $A \cup \{mm\}$ and $\{z_i^a, mm\}$ for some specific extremum element
$mm$. By construction $A$ must be of the form  $X_k^b\setminus \{y_i^a\}$
and hence strict independence is satisfied as we have seen in the paragraph above.

\paragraph{Ordering sets in $Cl_3$ with different (or no) extremum elements:}
Finally, we have to extend the order $\preceq$ to the whole Class 3.
First, observe that two sets with different extremum elements differ in at least two elements.
Hence, there is no possible application of dominance with sets containing different 
extremum elements.
The only possible application of strict independence that is not already satisfied
by $\preceq$ is adding the same element to two sets containing different 
extremum elements.
In order to make sure that these applications of strict independence are satisfied,
we order the elements with different extremum element with an order that
is only based on what extremum elements they contain.  
We fix an arbitrary linear order $\leq_{mm}$ on the extremum elements that is 
compatible with the preferences introduced in the paragraph above,
i.e., if $\{mm_1, z_i^a\}, \{mm_2,z_j^b\} \in \cX$, $mm_1 < z_i^a$ and
$z_j^b < mm$ hold then $mm_1 \leq_{mm} mm_2$ must hold.
We extend $\preceq$ by adding for all sets $A$ and $B$ 
that contain different extremum elements $mm_A$ and $mm_B$ respectively
\begin{itemize}
\item $A \prec B$ if $mm_A <_{mm} mm_B$,
\item $B \prec A$ if $mm_B <_{mm} mm_A$.
\end{itemize}

Clearly, this is compatible with the preference between sets
with different extremum elements added in the paragraph above.
Furthermore, we know for all $A,B$ that contain an extremum element that
for all $x \in X$ such that $A\cup\{x\}, B\cup\{x\} \in \cX$ holds that 
$A\cup\{x\}$ contains the same unique extremum element as $A$ and 
$B\cup\{x\}$ contains the same unique extremum element as $B$.
Hence, if $A$ and $B$ are Class 3 elements that contain different 
extremum elements, then $A \prec B$ implies $A \cup \{x\} \prec B \cup \{x\}$.

Furthermore, we observe that any Class 3 set that does not contain an 
extremum element must be of the form $\{z_i^a\}$ for some $z_i^a \in X$.
Now, if $\{z_i^a\} \cup \{x\}$ is in $\cX$ then $x$ must be
either $\Min_i^a$ or $\Max_i^a$. Furthermore, there is no other Class 3 set $A$
such that $A \cup \{x\}$ is in $\cX$. Therefore, no application of strict
independence involving $\{z_i^a\}$ is possible.
Finally, as $\{z_i^a\}$ is placed next to a Class 2 set $X_j^b \setminus \{y_i^c\}$,
$\{mm, z_i^a\} \in \cX$ and $mm < z_i^a$ implies $\{mm, z_i^a\} \prec \{z_i^a\}$
by the construction in the paragraph above.
Similarly, $\{mm, z_i^a\} \in \cX$ and $z_i^a <mm$ implies $\{z_i^a\} \prec \{z_i^a, mm\}$.
Therefore, any application of dominance involving $\{z_i^a\}$ is satisfied.

\paragraph{Completion of $\preceq$:}
Finally, because $\preceq$  is a partial order there is a linear order
that is a completion of $\preceq$. 
As we have seen above, $\preceq$ satisfies all possible 
applications of strict independence and dominance. 
Thus, any completion of $\preceq$ satisfies dominance 
and strict independence. 
\end{proof}

We have shown that there is a linear order on 
$\cX$ that satisfies dominance and strict independence with respect to a fixed order $\leq$ if and only if $\phi$ is satisfiable. 
Furthermore, we have shown how to use a satisfying assignment of $\phi$
to construct for any linear order $\leq'$ on $X$ 
a linear order $\preceq$ on $\cX$ that satisfies dominance and strict independence.
This construction clearly also works for the fixed order $\leq$.
In other words $\phi$ is satisfiable if and only if 
$(X,\cX,\leq)$ is a positive instance of \textsc{$DI^{S}$-LO-Orderability}.
Hence, the reduction above also shows that \textsc{$DI^{S}$-LO-Orderability}
is \NP-complete. \begin{center}
\begin{mdframed}[style=mystyle]
\begin{minipage}{\textwidth}

\medskip
\begin{Cly}\label{Cly:StrictNPcomp}
\textsc{$DI^S$-LO-Orderability} is \NP-complete.
\end{Cly}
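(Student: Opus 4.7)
The plan is to establish NP-membership by a standard guess-and-check argument, and then to obtain NP-hardness essentially for free by reusing the reduction constructed in the proof of Proposition~\ref{DI^S-NP-hard}.

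For membership in \NP, I would use a linear order $\preceq$ on $\cX$ as the certificate. Since $\cX$ is part of the input, $\preceq$ has polynomial size. Verifying that $\preceq$ is a linear order is routine, and both axioms can be checked by iterating over the relevant pairs: for dominance, one inspects each $A \in \cX$ and $x \in X$ with $A \cup \{x\} \in \cX$, checks whether $x$ is above or below all elements of $A$ with respect to $\leq$, and then verifies the required preference; for strict independence, one iterates over pairs $A, B \in \cX$ and elements $x \in X \setminus (A \cup B)$ such that $A \cup \{x\}, B \cup \{x\} \in \cX$, and verifies that $A \prec B$ implies $A \cup \{x\} \prec B \cup \{x\}$. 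Both checks are clearly polynomial in the size of $(X, \cX, \leq)$.

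For \NP-hardness, I would reuse verbatim the reduction $\phi \mapsto (X,\cX)$ from the proof of Proposition~\ref{DI^S-NP-hard}, but additionally output the critical linear order $\leq$ defined there as part of the instance. The proof of that proposition establishes two facts that together give exactly what we need. First, by the explicit construction of the Class 1, 2, and 3 sets with respect to the critical order $\leq$, if $\phi$ is unsatisfiable then every linear order on $\cX$ satisfying dominance and strict independence with respect to $\leq$ must contain a transitively forced cycle among the Class 2 sets of some clause, a contradiction; hence $\cX$ is not $\leq$-$DI^S$-orderable. Second, from a satisfying assignment of $\phi$ the proof of Proposition~\ref{DI^S-NP-hard} constructs, for \emph{any} linear order $\leq'$ on $X$ with $v_1 <' v_2$, a linear order on $\cX$ that satisfies both axioms with respect to $\leq'$; applying this construction to $\leq' = \leq$ (which indeed satisfies $v_1 < v_2$ by definition of the critical order) yields the required witness of $\leq$-$DI^S$-orderability.

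There is no real obstacle here beyond observing that the reduction of Proposition~\ref{DI^S-NP-hard} already carries the stronger information needed: it produces a concrete $\leq$ such that $(X,\cX,\leq)$ is $\leq$-$DI^S$-orderable if and only if $\phi$ is satisfiable. Since the reduction is polynomial-time and \textsc{Sat} is \NP-complete, this immediately gives \NP-hardness, and combined with membership in \NP we conclude that \textsc{$DI^S$-LO-Orderability} is \NP-complete.
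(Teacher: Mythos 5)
Your proposal is correct and follows essentially the same route as the paper: NP-membership by guessing a linear order on $\cX$ and checking dominance and strict independence in polynomial time, and NP-hardness by reusing the reduction of Proposition~\ref{DI^S-NP-hard} together with its critical linear order $\leq$, noting that the construction there already shows $(X,\cX,\leq)$ is $\leq$-$DI^S$-orderable if and only if $\phi$ is satisfiable. No gaps.
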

\end{minipage}
\end{mdframed}
\end{center}

\begin{proof}
It follows from the reduction above that \textsc{$DI^{S}$-LO-Orderability}
is \NP-hard. Furthermore, it is clear that \textsc{$DI^{S}$-LO-Orderability} is in \NP:
We can guess a linear order $\preceq$ on $\cX$ and then check in polynomial time
if $\preceq$ satisfies dominance and strict independence with respect to $\leq$.
\end{proof}

This result was already shown by \citet{Maly2017}.
However, the new reduction presented here will be useful to prove 
further results.
A variation of this reduction shows that
we can additionally add extension to the required axioms 
without changing the complexity of the problem.
Essentially, we modify the family $\cX$ such that 
it does not contain any singletons.
This can be achieved by replacing 
singletons with two element sets.

\begin{center}
\begin{mdframed}[style=mystyle]
\begin{minipage}{\textwidth}

\medskip
\begin{Cly}\label{Cly:StrictExt}
\textsc{Strong $DI^SE$-LO-Orderability} is \NP-hard.

\noindent
\textsc{$DI^SE$-LO-Orderability} is \NP-complete.
\end{Cly}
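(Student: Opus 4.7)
The plan is to obtain both statements by a mild modification of the reduction used in Proposition~\ref{DI^S-NP-hard} (and refined in Corollary~\ref{Cly:StrictNPcomp}). The key observation is that the extension rule only constrains pairs of singletons, so if the constructed family $\cX$ contains no singleton at all, the extension rule is vacuously satisfied and a lifted linear order satisfies $D+I^S$ if and only if it satisfies $D+I^SE$. Consequently, a singleton-free variant of the original reduction witnesses hardness of both \textsc{Strong $DI^SE$-LO-Orderability} and \textsc{$DI^SE$-LO-Orderability}. Membership of the latter in \NP{} is routine: guess a linear order on $\cX$ and check all three axioms in polynomial time.

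The only singletons produced by the original construction are $\{z_i^a\}$, $\{\max_i^a\}$ and $\{\min_i^a\}$; they appear as the waypoints of the enforcement step and as the terminal sets of the $A$- and $B$-chains. I would introduce two fresh pad elements $p_+$ and $p_-$ into the ground set and place $p_+$ just above $\max_m^3$ and $p_-$ just below $\min_1^1$ in the critical linear order. Each singleton is then replaced by its two-element version: $\{\max_i^a\} \mapsto \{\max_i^a, p_+\}$, $\{\min_i^a\} \mapsto \{\min_i^a, p_-\}$ and $\{z_i^a\} \mapsto \{z_i^a, p_+\}$. To preserve the enforcement arguments, $p_+$ is threaded through every set of the $A$-chains and through the companion set $\{z_i^a, \max_i^a\}$, and symmetrically for $p_-$ and the $B$-chains; Class 1 and Class 2 sets also receive $p_+$ so that the strict independence / reverse strict independence step now produces $(X_j^a \setminus \{y_i^1\}) \cup \{p_+\} \prec \{z_i^a, p_+\}$ in place of $X_j^a \setminus \{y_i^1\} \prec \{z_i^a\}$, which composes in exactly the same way.

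Because $p_+$ sits above every element occurring in the $A$-chain sets, the chain still removes its current minimum at each step and terminates at $\{\max_i^a, p_+\}$; the $B$-chain terminates at $\{\min_i^a, p_-\}$ by the symmetric argument. For the satisfiable direction the ordering strategy of the original proof transfers essentially verbatim, with the pad elements slotting naturally into the leximax/leximin order on the extremum-containing sets. The main obstacle will be the bookkeeping required to verify that, for every linear order $\leq'$ on the enlarged ground set (not only the critical one), no unintended application of dominance or strict independence is introduced by $p_+$ or $p_-$ and that the modified chain steps still apply dominance correctly whenever $p_\pm$ is carried along. Once this is checked, the modified reduction from \Sat{} yields the \NP-hardness of both problems, and together with the trivial \NP-membership this gives \NP-completeness of \textsc{$DI^SE$-LO-Orderability}.
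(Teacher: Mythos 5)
Your overall strategy---remove every singleton from $\cX$ so that the extension rule is satisfied vacuously---is exactly the strategy of the paper, and your \NP-membership argument is fine. However, the particular way you eliminate singletons breaks the central enforcement step of the reduction from Proposition~\ref{DI^S-NP-hard}. There, the singleton $\{z_i^a\}$ is the \emph{shared} waypoint of the two halves of each enforcement gadget: the $\max$-side chain yields $X_j^a\setminus\{y_i^1\} \prec \{z_i^a\}$, the $\min$-side chain yields $\{z_i^a\} \prec X_k^b\setminus\{y_i^2\}$, and only transitivity through this one common set produces the enforced preference and hence the cycle when $\phi$ is unsatisfiable. In your scheme the two halves land on different sets. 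The $\max$-side, padded with $p_+$, concludes $(X_j^a\setminus\{y_i^1\})\cup\{p_+\} \prec \{z_i^a,p_+\}$. But you cannot thread $p_+$ through the $B$-chains: $p_+$ is the global maximum, and the $B$-chain proceeds by removing maxima, so either $p_+$ is removed in the very first step (after which the chain again terminates in the singleton $\{\min_i^a\}$) or the intermediate steps are no longer dominance applications. Hence the $\min$-side must be padded with $p_-$, as you indicate, and it can only conclude $\{z_i^a,p_-\} \prec (X_k^b\setminus\{y_i^2\})\cup\{p_-\}$. Since $\{z_i^a,p_+\}\neq\{z_i^a,p_-\}$ and your Class~2 sets carry only $p_+$, nothing connects the two conclusions; transitivity no longer closes the cycle, so an unsatisfiable $\phi$ need not make the family non-orderable, and the hardness direction fails. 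Adding a linking set such as $\{z_i^a,p_-,p_+\}$ does not repair this: under the critical order, dominance forces $\{z_i^a,p_-\}\prec\{z_i^a,p_-,p_+\}\prec\{z_i^a,p_+\}$, which is the wrong direction for composing the $\max$-side conclusion into the $\min$-side one.

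This is precisely the difficulty the paper's proof avoids by doubling the offending \emph{elements} rather than introducing two global pad elements: every $z_i^a$ is split into $z_i^{a,1},z_i^{a,2}$ and every extremum element receives an overlined twin, so the waypoint becomes the two-element set $\{z_i^{a,1},z_i^{a,2}\}$, which is still shared by both halves of the gadget (each half now uses two applications of strict and reverse strict independence instead of one), and the chain endpoints $\{\max_i^a,\overline{\max_i^a}\}$ and $\{\min_i^a,\overline{\min_i^a}\}$ are no longer singletons. To salvage your idea you would have to make the waypoint and the padded local instantiations literally identical on both sides of each gadget, which in effect forces per-gadget auxiliary elements rather than two global pads---that is, essentially the paper's doubling construction. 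A secondary, smaller concern is that putting $p_+$ into \emph{all} Class~1 and Class~2 sets introduces a common element across otherwise unrelated gadgets and so creates new potential applications of strict independence and dominance that your satisfiability-direction argument would have to rule out explicitly; the paper's local doubling avoids this as well.
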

\end{minipage}
\end{mdframed}
\end{center}

\begin{proof}
First, observe that \textsc{$DI^SE$-LO-Orderability} is in \NP{} for the same reason that 
\textsc{$DI^S$-LO-Orderability} is, because it can also be checked in 
polynomial time if a linear order satisfies the extension rule.

We modify the reduction above slightly
to show that \textsc{Strong $DI^SE$-LO-Orderability} and
\textsc{$DI^SE$-LO-Orderability} are
 \NP-hard.
In general, the order on the singletons does not satisfy the extension rule.
One way to solve this problem is replacing all singletons
by two element sets.
All singletons that appear in the reduction are of the form
$\{z_k^a\}$, $\{\min_i^b\}$ or $\{max_j^c\}$ for some
$z_k^a, \min_i^b, \max_j^c \in X$.
For all $z_i^a \in X$, replace in the reduction above every mentioning of $z_i^a$
by $z_i^{a,1}, z_i^{a,2}$ and similarly replace for all $\min_i^b, \max_j^c \in X$ 
every mentioning of $\min_i^b$ and
$\max_j^c$ by $\overline{\min_i^{b}}, \min_i^{b}$ and
$\overline{\max_i^{c}}, \max_i^{c}$, respectively.
In the critical linear order,
we replace $z_i^a$ by $z_i^{a,1} < z_i^{a,2}$,
$\min_i^b$ by $\overline{\min_i^{b,*}} < \min_i^{b}$ and $\max_j^c$ by
$\max_i^{c} <\overline{\max_i^{c}}$.
Finally, we add the additional set 
$(X_j^a\setminus \{y_i^1, z_i^{a,1}\}) \cup \{\Max_i^1, \overline{\Max_i^1}\}$,
$\{z_i^{a,2}, \Max_i^1, \overline{\Max_i^1}\}$,
$(X_j^a\setminus \{y_i^1\}) \cup \{\Max_i^1\}$ and 
$\{z_i^{a,1}, z_i^{a,2}, \Max_i^1\}$.

Then \[(X_j^a\setminus \{y_i^1, z_i^{a,1},z_i^{a,2}\}) \cup \{\Max_i^1, \overline{\Max_i^1}\}\prec
\{\Max_i^1, \overline{\Max_i^1}\}\]
implies by two applications of strict independence
\[(X_j^a\setminus \{y_i^1\}) \cup \{\Max_i^1, \overline{\Max_i^1}\}\prec
\{z_i^{a,1},z_i^{a,2},\Max_i^1, \overline{\Max_i^1}\}.\]
This, in turn, implies by two applications of reverse strict independence
\[(X_j^a\setminus \{y_i^1\}) \prec
\{z_i^{a,1},z_i^{a,2}\}.\]

When constructing a linear order $\preceq$ based on a 
satisfying assignment, we can treat the four new sets 
$(X_j^a\setminus \{y_i^1, z_i^{a,1}\}) \cup \{\Max_i^1, \overline{\Max_i^1}\}$,
$\{z_i^{a,2}, \Max_i^1, \overline{\Max_i^1}\}$,
$(X_j^a\setminus \{y_i^1\}) \cup \{\Max_i^1\}$ and 
$\{z_i^{a,1}, z_i^{a,2}, \Max_i^1\}$
the same as any other set containing $\Max_i^1$.
It can be checked that the rest of the construction works as before. 
Now, $\cX$ does not contain any singletons.
Therefore, the linear order constructed in the second part of the reduction
satisfies the extension rule vacuously.
\end{proof}

\subsubsection*{$DI^S$-WO-Orderability}

Now, we relax the requirement that the lifted order needs to be antisymmetric.
As it turns out, this does not change the complexity of deciding whether 
dominance and strict independence are compatible.
To show this, we need to modify the reduction given above.
Again, we first state the result for strong orderability,
even though we will improve that result in the next section (Theorem~\ref{PI1}).
The \NP-completeness of the $\leq$-orderability version will follow 
directly.

\begin{Prop}\label{Prop:StronpNP}
\textsc{strong $DI^S$-WO-Orderability} is \NP-hard.
\end{Prop}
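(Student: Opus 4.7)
The plan is to adapt the reduction from Proposition~\ref{DI^S-NP-hard}. The forward direction is immediate: any linear order constructed in the proof of Proposition~\ref{DI^S-NP-hard} (for an arbitrary linear order on $X$) is in particular a weak order, so a satisfying assignment of $\phi$ yields a valid weak order on the constructed family for every linear order on $X$. Thus satisfiability of $\phi$ implies strong $DI^S$-WO-orderability.

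For the backward direction, the impossibility argument has to be strengthened to rule out weak orders, not just linear orders. The obstruction is that the enforcement gadget of the original proof uses reverse strict independence, which by Proposition~\ref{Prop:Reverse} is equivalent to strict independence only for antisymmetric total relations. For weak orders, from the dominance-chain consequence $(X_j^a \setminus \{y_i^1\}) \cup \{\Max_i^1\} \prec \{z_i^1, \Max_i^1\}$ one can still derive the non-strict conclusion $X_j^a \setminus \{y_i^1\} \preceq \{z_i^1\}$: if the strict reverse held, strict independence applied to $\Max_i^1$ would contradict the enforced preference, and totality then forces the weak preference. Consequently, in the weak-order setting the enforced cross-variable preferences in each clause cycle are merely $\preceq$, while the within-variable preferences derived from strict Class~1 assignments via strict independence are still $\prec$.

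Any cycle containing at least one strict step still closes by the standard weak-order property that $A \preceq B \prec C$ implies $A \prec C$. Hence the original cycle argument carries through whenever the weak order encodes a truth assignment in which, in each unsatisfied clause, at least one variable has a strict Class~1 preference in the ``wrong'' direction. The only remaining degenerate case is that all three variables in some clause satisfy $X_i^{\mathrm{t}} \sim X_i^{\mathrm{f}}$; by the contrapositive of strict independence all within-variable steps then collapse to $\sim$, and the cycle need not close.

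The hard part will be ruling out this indifference case. The plan is to augment the construction with a small gadget that, for every variable $V_i$, forces $X_i^{\mathrm{t}} \not\sim X_i^{\mathrm{f}}$ in any weak order satisfying dominance and strict independence, typically by linking $X_i^{\mathrm{t}}$ and $X_i^{\mathrm{f}}$ to a common pivot through dominance chains whose simultaneous satisfaction is incompatible with their being indifferent. The gadget must (i) be symmetric enough to work for every linear order on $X$ (as required for strong orderability, using Lemma~\ref{Lem:inverse} to reduce to the case $v_1 <' v_2$ as in the original proof) and (ii) leave the forward construction essentially untouched so that any satisfying assignment of $\phi$ still extends to a weak order on the augmented family. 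Once such a gadget is in place, the cycle argument above yields the desired \NP-hardness by reduction from \textsc{Sat}.
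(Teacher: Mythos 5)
Your plan breaks at the step where you claim that ``the within-variable preferences derived from strict Class~1 assignments via strict independence are still $\prec$.'' Passing from $X_j^{\mathrm{t}} \prec X_j^{\mathrm{f}}$ to a preference between $X_j^{\mathrm{t}}\setminus\{y_i^a\}$ and $X_j^{\mathrm{f}}\setminus\{y_i^a\}$ means \emph{removing} the element $y_i^a$, i.e.\ it is an application of reverse strict independence, not of strict independence, and by Proposition~\ref{Prop:Reverse} reverse strict independence is only available for total \emph{antisymmetric} relations. For a weak order satisfying dominance and strict independence, a strict Class~1 preference only yields the weak conclusion $X_j^{\mathrm{t}}\setminus\{y_i^a\} \preceq X_j^{\mathrm{f}}\setminus\{y_i^a\}$; indeed $X_j^{\mathrm{t}} \prec X_j^{\mathrm{f}}$ together with $X_j^{\mathrm{t}}\setminus\{y_i^a\} \sim X_j^{\mathrm{f}}\setminus\{y_i^a\}$ is consistent with strict independence, since strict independence is triggered only by a strict preference on the smaller sets. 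As you yourself note, the enforced cross-variable preferences also degrade to $\preceq$ in the weak-order setting, so \emph{every} step of the clause cycle is merely weak and no contradiction arises: a weak order can simply put all local instantiations of a clause and the singletons $\{z_i^a\}$ into one indifference class, whatever truth assignment the Class~1 pairs encode. Consequently your anti-indifference gadget at the Class~1 level, even once constructed, does not make the backward direction sound — an unsatisfiable $\phi$ could still give a strongly WO-orderable family.

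The paper's proof handles exactly this by re-engineering the enforcement mechanism in addition to forcing strictness of the Class~1 pairs: each $z_i^a$ is split into two elements with $z_i^a < \overline{z_i^a}$ and the sets $\{z_i^a\}$, $\{z_i^a,\overline{z_i^a}\}$, $\{\overline{z_i^a}\}$ are added, so dominance forces the strict bridge $\{z_i^a\} \prec \{\overline{z_i^a}\}$; the dominance chains then give $X_j^a\setminus\{y_i^b\} \preceq \{z_i^b\} \prec \{\overline{z_i^b}\} \preceq X_k^c\setminus\{y_i^b\}$, so the cross-variable preferences are strict again, and the merely weak within-variable steps (obtained from the strict Class~1 preferences, whose strictness is enforced by the $A_i,B_i,C_i,D_i$ gadget with $r_i,s_i$) suffice to close a strict cycle. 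Two further gaps in your write-up: the strictness gadget is only promised, not constructed, and the forward direction for the augmented family is not carried out, which in the paper requires genuinely new work (ordering the Class~4 sets and the doubled $z$-blocks compatibly with dominance and strict independence for an arbitrary linear order on $X$).
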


\begin{proof}
We will modify the reduction used for Proposition~\ref{DI^S-NP-hard}.
We have to compensate
for the fact that the lifted order $\preceq$ does not need to be antisymmetric. 
First, we have to make sure that all preferences between
sets $X^\mathrm{f}_i$ and $X_i^\mathrm{t}$ are strict. 
Otherwise, $\preceq$ would not encode a valid truth assignment.
We can add elements and sets that lead to a contradiction
if the preference is not strict.
This can be done as follows:
Assume we want to enforce that the preferences between two sets 
$X^\mathrm{f}_i$ and $X_i^\mathrm{t}$ is strict. Then we add sets $A$, $B$, $C$ and $D$
and enforce preferences such that $X_i^\mathrm{f} \preceq X^\mathrm{t}_i$
implies $A \prec B$ by transitivity and $X_i^\mathrm{t} \preceq X^\mathrm{f}_i$
implies $D \prec C$ by transitivity. Furthermore, we add sets such that 
$A \prec B$ and $D \prec C$ lead to a contradiction if they hold at the same time.
Then, $X_i^\mathrm{f} \preceq X^\mathrm{t}_i$ and $X_i^\mathrm{t} \preceq X^\mathrm{f}_i$
cannot hold at the same time. Hence, either  $X_i^\mathrm{f} \prec X^\mathrm{t}_i$ or
$X_i^\mathrm{t} \prec X^\mathrm{f}_i$ must hold.

Furthermore, if $\preceq$ is not linear, then it does not need to satisfy 
reverse strict independence even if it satisfies strict independence. 
Reverse strict independence was only used once on the reduction
for Proposition~\ref{DI^S-NP-hard}, namely to infer that
$(X_j^a\setminus \{y_i^1\}) \cup \{\Max_i^1\}\prec \{z_i^1, \Max_i^1\} $
implies
$X_j^a\setminus \{y_i^1\} \prec \{z_i^1\}$.
Therefore, we need to adapt the way that we enforce preferences 
to a method that does not require reverse strict independence.
This can be done as follows
for a desired preference 
$X_i^a \setminus \{y_j^b\} \prec X_i^c \setminus \{y_j^b\}$:
We replace every element $z_j^b$ by two elements 
$z_j^b$ and $\overline{z_j^b}$,
set $z_j^b < \overline{z_j^b}$ and add the sets
$\{z_j^b\}, \{z_j^b, \overline{z_j^b}\}, \{\overline{z_j^b}\}$ to $\cX$.
By dominance this enforces $\{z_j^b\} \prec \{\overline{z_j^b}\}$.
Now, using a similar construction as before, we can enforce 
$X_i^a \setminus \{y_j^b\} \preceq \{z_j^b\}$
and $\{\overline{z_j^b}\} \preceq X_k^c \setminus \{y_j^b\}$.
Then, we have
\[X_i^a \setminus \{y_j^b\} \preceq \{z_j^b\} \prec \{\overline{z_j^b}\}
\preceq X_k^c \setminus \{y_j^b\}\]
and hence by transitivity the desired strict preference  
$X_i^a \setminus \{y_j^b\} \prec X_i^c \setminus \{y_j^b\}$.
To enforce $X_i^a \setminus \{y_j^b\} \preceq \{z_j^b\}$
we add the same sequence $A_1, \dots, A_l$ as in the proof of Proposition~\ref{DI^S-NP-hard}.
This enforces as before
\[(X_i^a\setminus \{y_j^b\}) \cup \{\Max_j^b\} \prec \{z_j^b, \Max_j^b\}.\]
Now, as $\preceq$ is total, by Proposition~\ref{Prop:Reverse}
it satisfies reverse independence.
Therefore, we get the desired
$X_i^a \setminus \{y_j^b\} \preceq \{z_j^b\}$.
We can enforce $\{\overline{z_j^b}\} \preceq X_k^c \setminus \{y_j^d\}$
similarly.

Next, we want to enforce that all preferences between
sets $X^\mathrm{f}_i$ and $X_i^\mathrm{t}$ are strict using
the idea described at the beginning of the proof. 
We will add additional sets that lead to cyclic preferences
if $X^\mathrm{f}_i \preceq X_i^\mathrm{t}$
and $X^\mathrm{t}_i \preceq X_i^\mathrm{f}$ hold at the same time.
The idea is illustrated in Figure~\ref{fig:strict}.
We add for every variable $X_i$ new elements
\[a_i^-, b_i^-, c_i^-, d_i^-, r_i, s_i, d_i^+, c_i^+, b_i^+, a_i^+.\]
Furthermore, we add them to the critical linear order $\leq$ as follows
\begin{multline*}
\Min_1^1 < \Min_1^2 < \dots < \Min_m^3 < x^-_{1,1} < x^-_{1,2} <\dots
 < x^-_{n,2} <\\
\mathbf{a_i^- < b_i^- <c_i^- < d_i^-}
 <v_1 < v_2 
\mathbf{ < r_i < s_i < d_i^+ < c_i^+ < b_i^+ < a_i^+}\\
< z_1^1 < z_1^2 \dots < z_m^3 < y_1^1 < y_1^2 < \dots
  < y_m^3 < \\ x^+_{1,1} < x^+_{1,2} < \dots 
< x^+_{n,2} < \Max_1^1 < \Max_1^2 < \dots < \Max_m^3
\end{multline*}
Then, we add new sets 
\begin{multline*}
A_i := \{a_i^-,v_1,v_2, r_i,s_i,a_i^+\}, B_i := \{b_i^-,v_1,v_2, r_i,s_i,b_i^+\},\\
C_i := \{c_i^-,v_1,v_2, r_i,s_i,c_i^+\} \text{ and } D_i := \{d_i^-,v_1,v_2, r_i,s_i,d_i^+\}.
\end{multline*}
Now, let  
$z_i^a$, $\overline{z_i^a}$, $\Max_i^a$ and $\Min_i^a$ be new elements
where we set $z_i^a,\overline{z_i^a} \in Y$.
Then, we enforce with the method described above $A_i \prec X_i^\mathrm{f}$ using these new elements.
Furthermore, we enforce $X_i^\mathrm{t} \prec B_i$, $X_i^\mathrm{f} \prec C_i$ and $D_i \prec X_i^\mathrm{t}$.
Finally, we add the sets $A_i \setminus \{r_i\}$, 
$B_i \setminus \{r_i\}$, $C_i \setminus \{s_i\}$ and $D_i \setminus \{s_i\}$
and enforce $B_i \setminus\{r_i\} \prec D_i \setminus \{s_i\}$ and
$C_i \setminus \{s_i\} \prec A_i \setminus \{r_i\}$.
We call the sets added in this step the Class 4 sets.
These enforced preference are shown as solid arrows in Figure~\ref{fig:strict}.

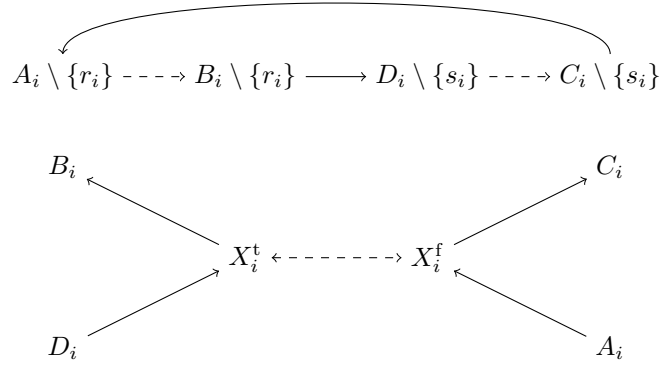
\begin{figure}
\begin{center}
\begin{tikzpicture}[scale=1.2]
\node (a) at  (2,2) {$X_i^\mathrm{t}$} ;
\node (d) at  (0,3) {$B_i$} ;
\node (f) at  (0,1) {$D_i$} ;
\node (b) at (4,2) {$X_i^\mathrm{f}$};
\node (c) at (6,1) {$A_i$};
\node (e) at (6,3) {$C_i$};
\node (cx) at (0,4) {$A_i\setminus \{r_i\}$};
\node (dx) at (2,4) {$B_i \setminus \{r_i\}$};
\node (ey) at (4,4) {$D_i\setminus \{s_i\}$};
\node (fy) at (6,4) {$C_i\setminus \{s_i\}$};
\draw[->] (dx) -- (ey) ;
\draw[<-] (b) -- (c);
\draw[<-] (d) -- (a);
\draw[<-] (a) -- (f);
\draw[<-] (e) -- (b);
\draw[dashed, <->] (a) -- (b);
\draw[dashed, ->] (cx) -- (dx);
\draw[dashed, ->] (ey) -- (fy);
\draw[<-] (cx)  .. controls(0,5) and (6,5) .. (fy);
\end{tikzpicture}
\end{center}
\caption{Enforcing strictness.}\label{fig:strict} 
\vspace{1cm}
\end{figure} 
 
Now, we claim that it is not possible for a weak order $\preceq$
to satisfy dominance and strict independence with respect to $\leq$
if $X_i^\mathrm{t} \sim X_i^\mathrm{f}$ holds.
Assume otherwise that $\preceq$ is a weak order that satisfies
dominance and strict independence with respect to $\leq$ such that
$X_i^\mathrm{t} \sim X_i^\mathrm{f}$ holds.
Then, $D_i \prec X_i^\mathrm{t} \preceq X_i^\mathrm{f} \prec C_i$ implies 
$D_i \prec C_i$ by transitivity and hence $D_i \setminus \{s_i\} \preceq C_i \setminus \{s_i\}$
by reverse independence.
Similarly, $A_i \prec X_i^\mathrm{f} \preceq X_i^\mathrm{t} \prec B_i$ implies 
$A_i \prec B_i$ by transitivity and hence $A_i \setminus \{r_i\} \preceq B_i \setminus \{r_i\}$
by reverse independence.
However, this leads to a contradiction by 
\[A_i \setminus \{r_i\} \preceq B_i \setminus \{r_i\} \prec
 D_i \setminus \{s_i\} \preceq C_i \setminus \{s_i\} \prec A_i \setminus \{r_i\}.\]
Now, as before
$\cX$ can only be $DI^S$-orderable with respect to $\leq$ 
if $\phi$ is a positive instance of \textsc{Sat}.

It remains to show that the modified family $\cX$
is strongly $DI^S$-orderable if $\phi$ is a positive instance 
of \textsc{Sat}.
As before, let $\leq'$ be an arbitrary linear order on $X$.
If $\phi$ is a positive instance of \textsc{Sat}, 
we can construct a weak order on $\cX$ 
that satisfies dominance and strict independence as before,
with the following modifications:
We replace $z_i^e$ by the block $\{z_i^e\} \prec \{z_i^e,\overline{z_i^e}\} \prec \{\overline{z_i^e}\}$ 
if $z_i^e < \overline{z_i^e}$ and 
by the block $\{\overline{z_i^e}\} \prec \{z_i^e,\overline{z_i^e}\} \prec \{z_i^e\}$ 
if $\overline{z_i^e} < z_i^e$.

It remains to add the Class 4 sets to the order.
The Class 4 sets used to enforce the preferences
$A_i \prec X_i^\mathrm{f}$,  $X_i^\mathrm{t} \prec B_i$, $X_i^\mathrm{f} \prec C_i$ and $D_i \prec X_i^\mathrm{t}$
can be ordered the same way 
as the Class 3 sets. As before, we use the fact that we can assume $v_1 <' v_2$
to ensure that this order is compatible with the enforced preferences.
For a specific variable $V_i$ we set by construction
either $X_i^\mathrm{t} \prec X_i^\mathrm{f}$ or $X_i^\mathrm{f} \prec X_i^\mathrm{t}$.
We assume $X_i^\mathrm{t} \prec X_i^\mathrm{f}$. The other case is symmetric.
Then, we add $D_i$ in $\preceq$ exactly before $X_i^\mathrm{t}$ and $B_i$
exactly after $X_i^\mathrm{t}$. Similarly, we add $A_i$
exactly before $X_i^\mathrm{f}$ and $C_i$
exactly after $X_i^\mathrm{f}$.
Then, we have 
\[D_i \prec X_i^\mathrm{t} \prec B_i \prec A_i \prec X_i^\mathrm{f} \prec C_i\]
which is compatible with the forced preferences.

Now, consider the group 
$A_i \setminus \{r_i\}, B_i \setminus \{r_i\}, C_i \setminus \{s_i\}$
and $D_i \setminus \{s_i\}$.
We observe that all sets in this group differ in at least two elements.
Therefore, we only have to set 
$B_i\setminus \{r_i\} \preceq A_i\setminus \{r_i\}$ and 
$D_i\setminus \{s_i\} \preceq C_i\setminus \{s_i\}$ 
in order to satisfies reverse independence.
Furthermore, we have to satisfy dominance
if $r_i$ and/or $s_i$ are the largest element
of the set they are removed from.
This can be satisfied by a straightforward construction
that respects the enforced preferences
unless $r_i$ is the maximal element of $A_i$
and $s_i$ is the minimal element of $C_i$
or alternatively if
$r_i$ is the maximal element of $B_i$
and $s_i$ is the minimal element of $D_i$.
We describe the construction for the first case:
We have to set 
$A_i \setminus \{r_i\} \prec A_i$ and 
$C_i \prec C_i \setminus \{s_i\}$
which implies 
$A_i \setminus \{r_i\} \prec C_i \setminus \{s_i\}$
contrary to the preference we wanted to enforce in the construction.
We use the fact that $r_i$ is the maximal element of $A_i$
and $s_i$ is the minimal element of $C_i$ to define an order that allows this.
Let $z_i^{r,a}$ and $\max_i^{r,a}$ be the new elements used to enforce 
$\{z_i^{r,a}\} \prec A_i \setminus \{r_i\}$.
Then, we use the same order as for the other Class 3 sets, i.e.,
we set $A \preceq B$ for the sets $A,B$ such that $\max_i^{r,a}\in A,B$ if
\begin{itemize}
\item $v_2 \in A$ and $v_2 \not \in B$
\item $v_2 \in A,B$ or $v_2 \not \in A,B$ and $\max_{\leq'}(A_L \triangle B_L) \in B$,
\item $v_2 \in A,B$ or $v_2 \not \in A,B$, $A_L = B_L$ and $\min_{\leq'}(A_S \triangle B_S) \in A$.
\end{itemize}
where $A_L := \{x \in A \mid \max_i^{r,a} <' x\}$
and $A_S := \{x \in A \mid x <' \max_i^{r,a}\}$.
This order satisfies dominance and strict independence because
the element $s_i$ is smaller than $v_2$ by assumption 
and removed later in the sequence $A_1, \dots ,A_k$.
Furthermore, this implies
$(A_i \setminus \{r_i\}) \cup \{\max_i^{r,a}\} \prec \{z_i^{r,a}, \max_i^{r,a}\}$,
which implies 
$A_i \setminus \{r_i\} \prec \{z_i^{r,a}\}$.
This allows us to set 
$A_i \setminus \{r_i\} \prec C_i \setminus \{s_i\}$.
Then, we can place 
$A_i \setminus \{r_i\}$ just before $A_i$ and 
$C_i \setminus \{s_i\}$ just after $C_i$ to get an order that satisfies
dominance and independence.
\end{proof}

As in the case of \textsc{Strong $DI^{S}E$-LO-Orderability}
we have shown that there is a weak order on 
$\cX$ that satisfies dominance and strict independence with respect to a fixed order $\leq$
if only if $\phi$ is satisfiable. 
Therefore, this reduction again also works for \textsc{$DI^{S}$-WO-Orderability}.

\begin{center}
\begin{mdframed}[style=mystyle]
\begin{minipage}{\textwidth}

\medskip
\begin{Thm}\label{DI^S-NP-comp}
\textsc{$DI^S$-WO-Orderability} is \NP-complete.
\end{Thm}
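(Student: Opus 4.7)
The plan is to observe that this theorem follows almost immediately from the reduction already constructed in the proof of Proposition~\ref{Prop:StronpNP}, together with a routine \NP-membership check. I would therefore split the proof into two short parts.

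For membership in \NP, I would argue that since $\cX$ is given explicitly, any weak order $\preceq$ on $\cX$ can be represented as a relation on $\cX \times \cX$ of size polynomial in the input. The verifier guesses such a $\preceq$, checks that it is reflexive, transitive, and total, and then checks dominance and strict independence by iterating over all relevant pairs $A, B \in \cX$ and all $x \in X$ such that the requisite sets lie in $\cX$. Each axiom is a polynomially-checkable universal statement, so the entire verification runs in polynomial time.

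For \NP-hardness, I would reuse the reduction from \textsc{Sat} constructed in Proposition~\ref{Prop:StronpNP}. That reduction takes a 3-CNF formula $\phi$ and produces, in polynomial time, a triple $(X, \cX, \leq)$, where $\leq$ is the critical linear order defined there. The proof of that proposition already establishes both directions with respect to this specific $\leq$: (i) if $\phi$ is satisfiable, then for every linear order $\leq'$ on $X$ a weak order on $\cX$ satisfying dominance and strict independence with respect to $\leq'$ can be constructed explicitly (so in particular for $\leq' = \leq$); and (ii) if $\phi$ is unsatisfiable, then the Class 1, 2, 3 and 4 sets force a cycle in any weak order on $\cX$ that satisfies dominance and strict independence with respect to the critical order $\leq$. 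Taken together, $(X, \cX, \leq)$ is a positive instance of \textsc{$DI^{S}$-WO-Orderability} if and only if $\phi$ is satisfiable, yielding the hardness.

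There is essentially no new obstacle to overcome here, since the delicate part — forcing each preference between $X_i^{\mathrm{t}}$ and $X_i^{\mathrm{f}}$ to be strict without appealing to antisymmetry or reverse strict independence — has already been handled in Proposition~\ref{Prop:StronpNP} via the Class~4 gadgets $A_i, B_i, C_i, D_i$ (with the auxiliary elements $r_i, s_i$) and the $\{z_j^b\}, \{z_j^b, \overline{z_j^b}\}, \{\overline{z_j^b}\}$ constructions. I would therefore simply invoke that reduction and note that the ``only if'' direction uses only the critical order $\leq$, while the ``if'' direction applies in particular to $\leq$, so that $(X, \cX, \leq)$ is a yes-instance of \textsc{$DI^{S}$-WO-Orderability} exactly when $\phi \in$ \textsc{Sat}.
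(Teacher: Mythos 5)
Your proposal is correct and matches the paper's own argument: the paper likewise establishes \NP-membership by guess-and-check and obtains hardness by observing that the reduction of Proposition~\ref{Prop:StronpNP} already shows the constructed instance is $\leq$-$DI^S$-orderable for the critical order $\leq$ if and only if $\phi$ is satisfiable. No gaps; nothing further is needed.
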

\end{minipage}
\end{mdframed}
\end{center}

\begin{proof}
It follows from the reduction above that \textsc{$DI^{S}$-WO-Orderability}
is \NP-hard. Furthermore, it is clear that \textsc{$DI^{S}$-WO-Orderability} is in \NP{}
as we can guess a weak order $\preceq$ on $\cX$ and then check in polynomial time
if $\preceq$ satisfies dominance and strict independence with respect to $\leq$.
\end{proof}

Moreover, we can adapt the reduction to the case that extension is additionally required
in a similar way as in the proof of Corollary~\ref{Cly:StrictExt}.

\begin{center}
\begin{mdframed}[style=mystyle]
\begin{minipage}{\textwidth}

\medskip
\begin{Cly}\label{Cly:DI^SE-NP-comp}
\textsc{$DI^SE$-WO-Orderability} is \NP-complete.
\end{Cly}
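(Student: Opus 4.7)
Membership in \NP{} is the easy direction: given a family $\cX$ and a linear order $\leq$, I would guess a weak order $\preceq$ on $\cX$ (represented, for example, as a list of equivalence classes in order) and then verify in polynomial time that $\preceq$ satisfies dominance, strict independence, and the extension rule with respect to $\leq$. Each axiom is a universal statement over polynomially many tuples from $\cX \cup X$, so the check takes polynomial time.

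For \NP-hardness, the plan is to combine the two modifications we have already seen: take the reduction from \textsc{Sat} used in Theorem~\ref{DI^S-NP-comp} (which, by making singleton sets $\{z_i^a\}$, $\{\min_i^b\}$, $\{\max_j^c\}$ appear in $\cX$, generally violates extension), and apply the singleton-elimination trick from Corollary~\ref{Cly:StrictExt} to remove them. Concretely, for each occurrence of $z_i^a$ in the critical linear order and in the sets of $\cX$, I would split it into two new elements $z_i^{a,1} < z_i^{a,2}$ (both placed inside the ``middle block'' $Y$), and similarly split each extremum element $\min_i^b$ into $\overline{\min_i^b} < \min_i^b$ and each $\max_j^c$ into $\max_j^c < \overline{\max_j^c}$, adjusting the critical linear order accordingly. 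The singletons $\{z_i^a\}$, $\{\min_i^b\}$ and $\{\max_j^c\}$ are then replaced by the companion sets used in Corollary~\ref{Cly:StrictExt}, namely $(X_j^a\setminus\{y_i^1, z_i^{a,1}\}) \cup \{\max_i^1, \overline{\max_i^1}\}$, $\{z_i^{a,2}, \max_i^1, \overline{\max_i^1}\}$, $(X_j^a\setminus\{y_i^1\}) \cup \{\max_i^1\}$ and $\{z_i^{a,1}, z_i^{a,2}, \max_i^1\}$, so that the enforced preference on the ``doubled'' singletons is obtained by two applications of strict independence followed by two applications of reverse independence.

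The correctness argument then has two halves. On the hardness side, I would check that replacing each singleton by a two-element gadget still enforces the same preferences as before: chasing through the same $A_1, \dots, A_l$ and $B_1, \dots, B_p$ dominance chains, I would verify that $(X_j^a\setminus\{y_i^1\}) \cup \{\max_i^1, \overline{\max_i^1}\} \prec \{z_i^{a,1}, z_i^{a,2}, \max_i^1, \overline{\max_i^1}\}$ still forces $X_j^a\setminus\{y_i^1\} \prec \{z_i^{a,1}, z_i^{a,2}\}$, using reverse independence (which is available by Proposition~\ref{Prop:Reverse} because the lifted order is total). The $A_i, B_i, C_i, D_i$ gadgets from Proposition~\ref{Prop:StronpNP} that enforce strictness of the truth assignment are carried over unchanged (with their own singletons split in the same way), so no satisfying weak order can correspond to $X_i^\mathrm{t} \sim X_i^\mathrm{f}$. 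Thus $\cX$ remains $\leq$-$DI^S$-orderable iff $\phi$ is satisfiable.

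For the converse, given a satisfying assignment of $\phi$ I would build a weak order $\preceq$ on the modified $\cX$ by reusing the construction from the proof of Theorem~\ref{DI^S-NP-comp}, treating each new ``doubled singleton'' $\{z_i^{a,1}, z_i^{a,2}\}$ and each doubled extremum pair as a short block placed where the corresponding singleton used to sit (for example $\{z_i^{a,1}\} \prec \{z_i^{a,1}, z_i^{a,2}\} \prec \{z_i^{a,2}\}$ when $z_i^{a,1} < z_i^{a,2}$ in the queried order $\leq'$). Since $\cX$ now contains no singletons at all, the extension rule holds vacuously, and a routine case check shows that the additional new sets do not introduce any application of dominance or strict independence that the old construction failed to cover. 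The main obstacle, and the one I would spend most care on, is the bookkeeping around the splits of the extremum elements: because they now appear in pairs, the ``nested leximax/leximin'' order on sets containing a given extremum element from the proof of Proposition~\ref{DI^S-NP-hard} needs to be adapted so that the pair $\max_i^1, \overline{\max_i^1}$ (and similarly for the minima) is always treated as a single block, so that dominance and strict independence are preserved across the whole gadget.
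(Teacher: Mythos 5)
Your \NP-membership argument is fine, and the overall plan (reuse the weak-order reduction of Theorem~\ref{DI^S-NP-comp} and eliminate singletons by doubling elements as in Corollary~\ref{Cly:StrictExt}) is the right starting point. However, there is a genuine gap at the heart of your hardness argument: you claim that the enforced preference on the doubled singletons is obtained by ``two applications of strict independence followed by two applications of reverse independence,'' and later that reverse independence lets you pass from $(X_j^a\setminus\{y_i^1\}) \cup \{\max_i^1, \overline{\max_i^1}\} \prec \{z_i^{a,1}, z_i^{a,2}, \max_i^1, \overline{\max_i^1}\}$ to $X_j^a\setminus\{y_i^1\} \prec \{z_i^{a,1}, z_i^{a,2}\}$. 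This does not work in the weak-order setting. By Proposition~\ref{Prop:Reverse}, a total but non-antisymmetric relation satisfying strict independence only satisfies \emph{reverse independence}, not reverse strict independence; reverse independence has a strict premise but only a weak conclusion ($A\cup\{x\}\prec B\cup\{x\}$ implies $A\preceq B$). So after the first application you are left with a weak preference and the second application is simply not licensed -- reverse independence cannot be iterated, and in any case it could never deliver the strict conclusion you write down. In Corollary~\ref{Cly:StrictExt} the iteration was legitimate only because the lifted order there was linear, so reverse \emph{strict} independence was available.

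The paper closes exactly this hole by changing how preferences are enforced: it adds the extra sets $(X_i^a\setminus \{y_j^b, x_i^-\}) \cup \{\max_j^b\}$ and $(X_i^a\setminus \{y_j^b, x_i^-\}) \cup \{\max_j^b, \overline{\max_j^b}\}$, applies reverse independence once (obtaining only a weak preference), then uses a dominance step -- removing the minimal element $x_i^-$ strictly improves the set -- to restore strictness by transitivity before the second reverse-independence application; the final conclusion is the weak $X_j^a\setminus\{y_i^1\} \preceq \{z_i^{a,1},z_i^{a,2}\}$, and strictness of the overall enforced preference comes from the dominance chain through the doubled pair $z$/$\overline{z}$ (the sets $\{z_i^{e,1}, z_i^{e,2}, \overline{z_i^{e,1}}\}$, $\{z_i^{e,2}, \overline{z_i^{e,1}}\}$, $\{z_i^{e,2},\overline{z_i^{e,1}}, \overline{z_i^{e,2}}\}$), a construction you gesture at only in passing. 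Without this interleaved dominance trick (or an equivalent device), your gadgets do not enforce the intended preferences for weak orders, so the reduction as you describe it does not establish hardness.
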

\end{minipage}
\end{mdframed}
\end{center}

\begin{proof}
We can adapt the reduction in a similar way as 
in the proof of Corollary~\ref{Cly:StrictExt}.
We need to change two things compared to Corollary~\ref{Cly:StrictExt}.
First of all, in contrast to reverse strict independence we cannot iterate 
reverse independence. Therefore, we need to adapt the way that we enforce preferences.
We add additionally the sets
$(X_i^a\setminus \{y_j^b, x_i^-\}) \cup \{\Max_j^b\}$.
and
$(X_i^a\setminus \{y_j^b, x_i^-\}) \cup \{\Max_j^b, \overline{\Max_i^b}\}$.
We observe that for the sequence $A_1, \dots, A_l$ added to enforce the preference
$(X_j^a\setminus \{y_i^b\}) \cup \{\Max_i^c\}\prec \{z_i, \Max_i^d\}$,
we have $l > 3$ and that the following preference is enforced by dominance
\[A_2 = (X_i^a\setminus \{y_j^b,x_i^-,z_i^{a,1},z_i^{a,2}\}) \cup \{\Max_j^b, \overline{\Max_i^b}\}
\prec \{\Max_j^b, \overline{\Max_i^b}\} = A_l\]
this enforces by strict independence
\[(X_i^a\setminus \{y_j^b,x_i^-\}) \cup \{\Max_j^b, \overline{\Max_i^b}\}
\prec \{z_i^{a,1},z_i^{a,2}, \Max_j^b, \overline{\Max_i^b}\}.\]
Now, by one application of reverse independence, we have
\[(X_i^a\setminus \{y_j^b,x_i^-\}) \cup \{\Max_j^b,\}
\preceq \{z_i^{a,1},z_i^{a,2}, \Max_j^b\}.\]
and hence by dominance
\[(X_i^a\setminus \{y_j^b\}) \cup \{\Max_j^b\} \prec \{z_i^{a,1},z_i^{a,2}, \Max_j^b\}. \]
Then, another application of reverse independence implies
\[(X_j^a\setminus \{y_i^1\}) \preceq
\{z_i^{a,1},z_i^{a,2}\}.\]

Furthermore, the new singleton in the reduction $\overline{z_i^{e}}$
also needs to be dealt with.
We replace $z_i^e$ again by two elements
$z_i^{e,1}$ and $z_i^{e,2}$. Similarly
we replace $\overline{z_i^{e}}$ by
$\overline{z_i^{e,1}}$ and $\overline{z_i^{e,2}}$.
In the critical linear order we set 
\[z_i^{e,1} < z_i^{e,2} < \overline{z_i^{e,1}} <\overline{z_i^{e,2}}.\]
Then, we use these to enforce as in Corollary~\ref{Cly:StrictExt}
\[X_i^a \setminus \{y_j^b\} \preceq \{z_j^{b,1}, z_j^{b,2}\}
\mbox{ and } \{\overline{z_j^{b,1}},\overline{z_j^{b,2}}\} \preceq X_k^c \setminus \{y_j^d\}.\]
Further, instead of $\{z_i^e, \overline{z_i^e}\}$ we add 
$\{z_i^{e,1}, z_i^{e,2}, \overline{z_i^{e,1}}\}$,
$\{z_i^{e,2}, \overline{z_i^{e,1}}\}$ and $\{z_i^{e,2},\overline{z_i^{e,1}}, \overline{z_i^{e,2}}\}$.
Then, the following preferences are enforced:
\[\{z_i^{e,1}, z_i^{e,2}\} \prec \{z_i^{e,1}, z_i^{e,2}, \overline{z_i^{e,1}}\} \prec
 \{z_i^{e,2}, \overline{z_i^{e,1}}\}\prec
\{z_i^{e,2},\overline{z_i^{e,1}}, \overline{z_i^{e,2}}\} \prec
\{\overline{z_i^{e,1}}, \overline{z_i^{e,2}}\}.\]
Therefore, by transitivity we have 
\[X_i^a \setminus \{y_j^b\} \preceq \{z_j^{b,1}, z_j^{b,2}\}
\prec \{\overline{z_j^{b,1}},\overline{z_j^{b,2}}\} \preceq X_k^c \setminus \{y_j^d\}.\]
and therefore $X_i^a \setminus \{y_j^b\} \prec X_k^c \setminus \{y_j^d\}$
as intended.

When constructing an order on $\cX$ from an satisfying assignment of $\phi$
we can just replace the block 
$\{z_i^e\} \prec \{z_i^e,\overline{z_i^e}\} \prec \{\overline{z_i^e}\}$ 
resp.\ $\{\overline{z_i^e}\} \prec \{z_i^e,\overline{z_i^e}\} \prec \{z_i^e\}$ 
by the following sets 
$\{z_i^{e,1}, z_i^{e,2}\}, \{z_i^{e,1}, z_i^{e,2}, \overline{z_i^{e,1}}\},
 \{z_i^{e,2}, \overline{z_i^{e,1}}\},
\{\overline{z_i^{e,1}}, \overline{z_i^{e,2}}\},
\{z_i^{e,2},\overline{z_i^{e,1}}, \overline{z_i^{e,2}}\}$
ordered as demanded by dominance. It can be checked that the rest of the construction
works as before, as all other new sets contain an extremum element.
\end{proof}

\subsubsection*{Strong $DI^S$-WO-Orderability and Strong $DI^SE$-WO-Orderability}
\subsubsectionmark{Strong $DI^S$-WO-Orderability}

We have shown before that strong $DI^S$-orderability is \NP-hard.
However, we claim that the problem is even $\Pi_2^P$-hard,
and furthermore also $\Pi_2^p$-complete.
To show this we will extend the reduction used in the previous results 
from \Sat{} to $\Pi_2$-\Sat.

\begin{center}
\begin{mdframed}[style=mystyle]
\begin{minipage}{\textwidth}

\medskip
\begin{Thm}\label{PI1}
\textsc{Strong $DI^S$-WO-Orderability} is $\Pi_2^p$-complete.
\end{Thm}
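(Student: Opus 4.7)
Membership in $\Pi_2^p$ is immediate from the problem definition: it asks whether for every linear order $\leq$ on $X$ there exists a weak order $\preceq$ on $\cX$ satisfying dominance and strict independence with respect to $\leq$, both quantified objects have polynomial size, and verifying the inner condition takes polynomial time. For hardness, I plan to reduce from $\Pi_2$-\textsc{Sat}, which takes a quantified 3-CNF $\Psi = \forall \vec{x}\,\exists \vec{y}\,\phi(\vec{x},\vec{y})$ and asks whether it is true. The high-level idea is to extend the reduction of Proposition~\ref{Prop:StronpNP} so that the existential variables $\vec{y}$ continue to be encoded by the lifted order (via the Class 1 sets $X_j^{\mathrm{t}}, X_j^{\mathrm{f}}$) while the universal variables $\vec{x}$ are encoded by the input linear order $\leq'$ itself.

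For each universal variable $x_i$ I introduce two auxiliary elements $p_i, q_i$ placed in the same positional region as $v_1, v_2$ in the critical order. The value $\alpha(x_i)$ is read off from $\leq'$ by the convention $\alpha(x_i) = \top$ iff $p_i <' q_i$, in complete analogy with the way $v_1 <' v_2$ breaks the orientation symmetry in the original reduction (cf.\ Lemma~\ref{Lem:inverse}). The remaining padding in the critical order, along with the orientation device, should be preserved so that the only aspect of $\leq'$ that influences satisfiability is the relative order of each pair $p_i, q_i$; every other degree of freedom in $\leq'$ will be absorbed by the flexible construction of $\preceq$ used in Proposition~\ref{Prop:StronpNP}.

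The main obstacle is redesigning the clause gadgets so that, for a clause $C$ with universal literals $\ell^u_1, \ldots$ and existential literals $\ell^e_1, \ldots$, the cyclic constraint that forces $\vec{y}$ to satisfy $C$ fires only when every universal literal of $C$ is falsified by $\alpha$. Since each clause has at most three literals, there are at most a constant number of bad sub-assignments to the universal literals of $C$, and I intend to add a separate sub-gadget for each such sub-assignment. Each sub-gadget must be \emph{conditional}: it should enforce the cycle between the relevant existential-literal sets under the matching relative order of the corresponding pairs $p_i, q_i$, and be inert otherwise. Concretely, I plan to replace the fixed extremum elements $\Min_i^a, \Max_i^a$ inside the enforcement chains of Proposition~\ref{Prop:StronpNP} with appropriate choices among $p_i, q_i$: the dominance chain closes in the ``bad'' orientation (where the chosen element is actually the maximum/minimum in its context) and hence forces the intended preference, while in the ``good'' orientation an auxiliary set can be inserted that allows the lifted order to place the endpoints freely, breaking the chain.

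With these gadgets, correctness splits into the usual two directions. If $\Psi$ holds, then for any $\leq'$ the encoded $\alpha$ admits some satisfying $\beta$, and I would build $\preceq$ following the case analysis of Proposition~\ref{Prop:StronpNP} using $\beta$; every clause has a satisfied literal, so the only non-broken gadgets correspond to sub-assignments that differ from $\alpha$ and remain inert, preventing any cycle from closing. Conversely, if $\Psi$ fails, pick $\alpha^*$ such that no $\vec{y}$ satisfies $\phi(\alpha^*,\cdot)$; for any $\leq^*$ encoding $\alpha^*$, exactly the original unconditional family of enforcement chains is active in each clause, and the cyclic-contradiction argument of Proposition~\ref{Prop:StronpNP} applies verbatim to show that no lifted weak order can satisfy dominance and strict independence with respect to $\leq^*$. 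The most delicate step I anticipate is verifying the conditionality of the enforcement chains---ensuring that they really do break cleanly in the good orientation (so that the lifting built from a satisfying $\beta$ still exists) without accidentally destroying the chains in the bad orientation; this will require the same kind of careful bookkeeping used to handle $v_1, v_2$ and the interaction between $Cl_1 \cup Cl_2$ and $Cl_3$ in the proof of Proposition~\ref{Prop:StronpNP}.
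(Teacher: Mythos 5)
Your membership argument and your choice of reduction source ($\Pi_2$-\textsc{Sat}, with existential variables encoded by the lifted order and universal variables encoded by pairs of elements of the input order $\leq'$) match the paper. The genuine gap is in the mechanism you propose for making the clause machinery sensitive to the universal assignment. You plan \emph{conditional} clause gadgets obtained by ``replacing the fixed extremum elements $\Min_i^a,\Max_i^a$ inside the enforcement chains with appropriate choices among $p_i,q_i$'', where $p_i,q_i$ live in the same middle region as $v_1,v_2$. But the enforcement chains of Proposition~\ref{Prop:StronpNP} work only because the extremum element is larger (resp.\ smaller) than \emph{every} element of the Class~2 set involved: the dominance chain $A_{j+1}=A_j\setminus\{\min_\leq(A_j)\}$ terminating in $\{\Max_i^a\}$ is forced precisely for that reason, and the concluding application of (reverse) strict independence needs it as well. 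A mid-placed $p_i$ or $q_i$ is never extremal with respect to the Class~1/2 sets, which contain elements on both sides of the $v_1,v_2$ region, so the chain closes in neither orientation; the intended ``fires in the bad orientation, inert in the good one'' behaviour does not follow from the substitution you describe. To carry out your plan you would need genuinely new gadgets whose closure depends only on the relative order within the pair $p_i,q_i$, separate treatment of clauses with fewer than three (possibly zero) existential literals, and a re-proof of the completeness direction (a lifting for \emph{every} linear order when the formula is true) for these new gadgets --- exactly the step you defer as ``delicate''. As written, the hardness half is an unverified sketch whose central device does not work.

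For comparison, the paper avoids conditionality altogether: the clause gadgets of Proposition~\ref{Prop:StronpNP} are reused verbatim, and each universal variable $w_i$ is encoded by Class~1 sets $X_i^\mathrm{t},X_i^\mathrm{f}$ just like an existential one; only the \emph{orientation} of this pair is forced by $\leq$. This is done by adding the singletons $\{w_i^\mathrm{t}\},\{w_i^\mathrm{f}\},\{w_i^\mathrm{t},w_i^\mathrm{f}\}$, whose relative order dominance reads off from $\leq$, together with enforcement chains built from genuinely extremal fresh elements $\min_i^q,\max_i^q$ that place $X_i^\mathrm{t}\setminus\{y_i^q\}$ below $\{w_i^\mathrm{t}\}$ and $\{w_i^\mathrm{f}\}$ below $X_i^\mathrm{f}\setminus\{y_i^q\}$; transitivity and reverse strict independence then transfer the forced preference to $X_i^\mathrm{t}$ versus $X_i^\mathrm{f}$, and a mirrored copy ensures every universal assignment is realized by some critical order. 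Redirecting your $p_i,q_i$ idea into this form --- force the orientation of the universal variables' Class~1 pairs instead of conditioning the clause gadgets --- is the shortest way to repair your argument.
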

\end{minipage}
\end{mdframed}
\end{center}

\begin{proof}
$\Pi_2^p$-membership holds, because there is a non-deterministic Turing machine
with access to an NP-oracle such that an instance $(X, \cX)$ is a yes-instance
if and only if all runs of the machine end positively.
This machine uses the non-determinism to guess a linear order $\leq$ on $X$
and then checks via the \NP-oracle if $(X,\cX, \leq)$ is a positive instance of 
\textsc{$DI^S$-WO-Orderability}, which is possible due to Theorem~\ref{DI^S-NP-comp}. 

It remains to show
that \textsc{Strong $DI^S$-WO-Orderability} is $\Pi_2^p$-hard. 
We do this by extending the reduction above to a reduction
from a \textsc{$\Pi_2$-Sat} instance $\phi = \forall \vec{W} \exists \vec{V} \psi(\vec{W},\vec{V})$.
Recall that in the previous reductions we added for each 
variable $v_i$ two sets $X_i^\mathrm{t}$ and $X_i^\mathrm{f}$ such that
the preference between these sets encoded the truth assignment 
of the variable $v_i$. Then, we ensured that a non-satisfying
assignment lead to a contradiction under a specific critical 
linear order $\leq^*$.  
In this reduction, we also have to take care of the 
universally quantified variables $w_1 \dots w_l$.
Intuitively, we want to add for every $w_i$ new elements 
$w_i^\mathrm{t}$ and $w_i^\mathrm{f}$ such that the preference
between these elements determines the preference
between $X_i^\mathrm{t}$ and $X_i^\mathrm{f}$, i.e., the sets 
encoding the truth value of $w_i$.
Then, there is a class of critical linear orders 
such that every truth assignment to the universally 
quantified variables is encoded by at least one critical linear order.
Furthermore, we will ensure that there exists an order satisfying
dominance and strict independence with respect to a critical linear 
order $\leq$ if and only if there exists a satisfying assignment 
to $\psi$ that extends the assignment encoded by $\leq$.

We set up the reduction similarly to the 
one for Proposition~\ref{Prop:StronpNP}.
Additionally, we add for every universally quantified variable $w_i$ represented by $X_i^\mathrm{t}$ and $X_i^\mathrm{f}$
sets \[X_i^\mathrm{t} \setminus \{y_i^q\}, X_i^\mathrm{f} \setminus \{y_i^q\}, \{w_i^\mathrm{t}\}, \{w_i^\mathrm{f}\}
\text{ and } \{w_i^\mathrm{t},w_i^\mathrm{f}\},\]
where $y_i^q,w_i^\mathrm{t}$ and $w_i^\mathrm{f}$ are new elements.
Then, we enforce -- with the same method as before --
$X_i^\mathrm{t} \setminus \{y_i^q\} \prec \{w_i^\mathrm{t}\}$ and $\{w_i^\mathrm{f}\} \prec X_i^\mathrm{f} \setminus \{y_i^q\}$
using new elements $\min_i^q$ and $\max_i^q$.
This will ensure for every critical linear order $\leq'$ with $w_i^\mathrm{t} <' w_i^\mathrm{f}$ 
that $X_i^\mathrm{t} \prec X_i^\mathrm{f}$ must hold for
every order $\preceq$ on $\cX$ that satisfies dominance and
strict independence with respect to $\leq'$. 
We add additionally
sets $X_i^1 \setminus \{\overline{y_i^q}\}$ and $X_i^2 \setminus \{\overline{y_i^q}\}$
and similarly enforce   
$X_i^\mathrm{t} \setminus \{\overline{y_i^q}\} \prec \{w_i^\mathrm{t}\}$ and
$\{w_i^\mathrm{f}\} \prec X_i^\mathrm{f} \setminus \{\overline{y_i^q}\}$
using $\overline{\min_i^q}$ and $\overline{\max_i^q}$.
Hence, $X_i^\mathrm{t} \prec X_i^\mathrm{f}$
must hold for every order $\preceq$ on $\cX$
that satisfies dominance and strict independence with respect to any critical linear order $\leq''$ on $X$
with $w_i^\mathrm{t} <'' w_i^\mathrm{f}$.

Now, we claim that $(X,\cX)$ can only be a positive instance of \textsc{Strong $DI^S$-WO-Orderability}
if $\phi$ is satisfiable.
We call a linear order critical if: 
\begin{itemize}
\item It equals the critical linear order $\leq$ in the proof of Proposition~\ref{Prop:StronpNP}
on the elements already occurring in that reduction.
\item The  new elements $y_i^q,w_i^\mathrm{t}, w_i^\mathrm{f},\min_i^q$ and $\max_i^q$ are added
in that order as follows: $\min_i^q$ is smaller than $\min_1^1$ for all $i$.
Similarly, $\max_i^q$ is larger than $\max_m^3$ for all $i$. The other elements are added between
$v_2$ and $z_1^1$.
\end{itemize}
Then, for every truth assignment $T$ to the variables in $\vec{W}$ there is a critical linear order
$\leq^*$ on $X$ such that $w_i^\mathrm{t} <^* w_i^\mathrm{f}$ if $w_i$ is assigned false in $T$ and
$w_i^\mathrm{f} <^* w_i^\mathrm{t}$ if $w_i$ is assigned true in $T$.
Now, if there is no satisfying assignment to $\phi$ that extends $T$,
then there can be no order on $\cX$ satisfying
dominance and strict independence with respect to $\leq^*$.
Hence $(X,\cX)$ can only be $DI^S$-orderable with respect to every linear order $\leq^*$
if $\phi$ is satisfiable.

It remains to show that 
if $\phi$ is satisfiable
then $(X,\cX)$ is a positive instance of \textsc{Strong $DI^S$-WO-Orderability}.
This can be done using nearly the same construction as above
treating $X_i^\mathrm{t} \setminus \{y_i^q\}$ and $X_i^\mathrm{f} \setminus \{y_i^q\}$
as Class 2 sets, all other new sets as Class 3 sets
and inserting $\{w_i^\mathrm{t}\} \prec \{w_i^\mathrm{t},w_i^\mathrm{f}\} \prec \{w_i^\mathrm{f}\}$ 
resp.\ $\{w_i^\mathrm{f}\} \prec \{w_i^\mathrm{t},w_i^\mathrm{f}\} \prec \{w_i^\mathrm{t}\}$ 
where we would insert $z_i^j$ and $\overline{z_i^j}$.
The only exception has to be made if there is an $i$
such that $y_i^q = \min (X_i^\mathrm{t})$ and $\overline{y_i^q} = \max (X_i^\mathrm{t})$
or $\overline{y_i^q} = \min (X_i^\mathrm{f})$ and $y_i^q = \max (X_i^\mathrm{f})$.
In the first case, we set $A \prec B$ for
the sets containing $\overline{\min_i^q}$ if:
\begin{itemize}
\item $y_i^q \in A$ and $y_i^q \not \in B$
\item $y_i^q \in A,B$ or $y_i^q \not \in A,B$ and $\max(A_L \triangle B_L) \in B$,
\item $y_i^q \in A,B$ or $y_i^q \not \in A,B$, $A_L = B_L$ and $\min(A_S \triangle B_S) \in A$.
\end{itemize}
where $A_L := \{x \in A \mid \overline{\min_i^q} <' x\}$
and $A_S := \{x \in A \mid x <'\overline{\min_i^q}\}$.
And for $A \prec B$ for the sets containing 
$\max_i^q$ if
\begin{itemize}
\item $\overline{y_i^q} \in B$ and $\overline{y_i^q} \not \in A$
\item $y_i^q \in A,B$ or $y_i^q \not \in A,B$ and $\max(A_L \triangle B_L) \in B$,
\item $y_i^q \in A,B$ or $y_i^q \not \in A,B$, $A_L = B_L$ and $\min(A_S \triangle B_S) \in A$.
\end{itemize}
where $A_L := \{x \in A \mid \max_i^q <' x\}$
and $A_S := \{x \in A \mid x <'\max_i^q\}$.

It is clear that these orders satisfy dominance and strict independence,
similarly to the orders on the Class 3 sets defined above.
Furthermore, we have 
$\{w_i^\mathrm{t}, \max_i^q\} \prec (X_i^\mathrm{t} \setminus \{y_i^q\}) \cup \{\max_i^q\}$
and 
$(X_i^\mathrm{t} \setminus \{\overline{y_i^q}\}) \cup \{\overline{\min_i^q}\} \prec \{w_\mathrm{f},\overline{\min_i^q}\}$
which allows us to set 
$X_i^\mathrm{t} \setminus \{\overline{y_i^q}\} \prec \{w_i\} \prec X_i^\mathrm{t} \setminus \{y_i^q\}$
which is consistent with the enforced
$X_i^\mathrm{t} \setminus \{\overline{y_i^q}\} \prec X_i^\mathrm{t} \prec X_i^\mathrm{f} \setminus \{y_i^q\}$.
The second case can be treated analogously.
\end{proof}

As before in Corollary~\ref{Cly:StrictExt} and \ref{Cly:DI^SE-NP-comp}
we can double the elements that would otherwise appear in singletons
to show the following corollary.

\begin{center}
\begin{mdframed}[style=mystyle]
\begin{minipage}{\textwidth}

\medskip
\begin{Cly}\label{Cly:DI^SE-Pi-comp}
\textsc{Strong $DI^SE$-WO-Orderability} is $\Pi_2^p$-complete.
\end{Cly}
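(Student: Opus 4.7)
The plan is to combine the two modifications used earlier in the paper: the $\Pi_2$-\textsc{Sat} reduction from Theorem~\ref{PI1}, which handles the quantifier alternation via critical linear orders, and the singleton-doubling trick from Corollary~\ref{Cly:DI^SE-NP-comp}, which ensures that the extension rule is satisfied vacuously. Membership in $\Pi_2^p$ is immediate: a non-deterministic machine guesses a linear order $\leq$ on $X$ and then invokes an \NP-oracle for \textsc{$DI^SE$-WO-Orderability} (Corollary~\ref{Cly:DI^SE-NP-comp}) to decide whether $(X,\cX,\leq)$ is a yes-instance; the original instance $(X,\cX)$ is strongly $DI^SE$-orderable iff every oracle call returns yes.

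For $\Pi_2^p$-hardness, I would start from the reduction in Theorem~\ref{PI1} taking a $\Pi_2$-\textsc{Sat} instance $\phi = \forall\vec{W}\exists\vec{V}\,\psi(\vec{W},\vec{V})$ to $(X,\cX)$, and then modify the construction in exactly the same manner as Corollary~\ref{Cly:DI^SE-NP-comp} did for Proposition~\ref{Prop:StronpNP}. Concretely: every element $z_i^a$ that previously occurred in a singleton is split into two elements $z_i^{a,1} < z_i^{a,2}$ (and analogously $\overline{z_i^{a,1}} < \overline{z_i^{a,2}}$ in place of $\overline{z_i^a}$); the extremum elements $\max_j^c$ and $\min_j^d$ are each paired with a helper $\overline{\max_j^c}, \overline{\min_j^d}$; and the small gadget sets introduced in the proof of Corollary~\ref{Cly:DI^SE-NP-comp} (the extra two- and three-element sets that propagate preferences through doubled singletons using only one application of reverse independence followed by dominance and another application of reverse independence) are added. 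The same gadgets must additionally be used around the new singletons $\{w_i^\mathrm{t}\}, \{w_i^\mathrm{f}\}$ introduced by the universally quantified variables in Theorem~\ref{PI1}: I would replace these by $\{w_i^{\mathrm{t},1},w_i^{\mathrm{t},2}\}$ and $\{w_i^{\mathrm{f},1},w_i^{\mathrm{f},2}\}$ together with a three-element bridging set, so that the critical preferences $X_i^\mathrm{t}\setminus\{y_i^q\} \prec \{w_i^\mathrm{t},\ldots\}$ and $\{w_i^\mathrm{f},\ldots\} \prec X_i^\mathrm{f}\setminus\{y_i^q\}$ can still be enforced without ever relying on reverse \emph{strict} independence.

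The argument then proceeds in two directions. For soundness, the same critical linear orders $\leq^*$ as in Theorem~\ref{PI1} (extended in the obvious way to place the new helper elements immediately next to their partners in the critical order) still witness a chain of enforced strict preferences leading to a cycle whenever the $\vec{W}$-assignment encoded by $\leq^*$ cannot be extended to satisfy $\psi$. For completeness, given any linear order $\leq'$ on $X$ together with a satisfying assignment to $\vec{V}$ extending the $\vec{W}$-assignment read off $\leq'$ (via the preferences on $\{w_i^{\mathrm{t},1},w_i^{\mathrm{t},2}\}$ vs.\ $\{w_i^{\mathrm{f},1},w_i^{\mathrm{f},2}\}$), I would construct a weak order by taking the construction of Theorem~\ref{PI1} and substituting every singleton block by its doubled counterpart in exactly the ordering dictated by dominance, as was done in Corollary~\ref{Cly:DI^SE-NP-comp}. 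Since $\cX$ now contains no singletons, the extension rule is satisfied vacuously.

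The main obstacle I anticipate is verifying that the doubling gadgets for the universally quantified variables interact cleanly with the two-sided enforcement that uses both $y_i^q$ and $\overline{y_i^q}$ (the trick that makes the reduction robust under the choice of critical order). In particular, one must check that the extra sets added to keep the extension rule vacuous do not accidentally force any new preferences that conflict with the carefully chosen orderings used in the `exceptional' case of Theorem~\ref{PI1}, where $y_i^q$ and $\overline{y_i^q}$ are simultaneously extremal in their respective sets. As in Corollary~\ref{Cly:DI^SE-NP-comp}, the fact that every newly added helper set contains an extremum element means it fits into the same maxmin/lexicographic block structure used to order Class~3 sets, so this verification is a careful but essentially routine case analysis rather than a genuinely new argument.
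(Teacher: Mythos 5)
Your proposal matches the paper's own proof: membership is argued exactly as you say, and the hardness argument in the paper is precisely the combination of the Theorem~\ref{PI1} reduction with the singleton-doubling of Corollary~\ref{Cly:DI^SE-NP-comp}, including doubling $w_i^\mathrm{t},w_i^\mathrm{f}$ into $w_i^{\mathrm{t},1},w_i^{\mathrm{t},2},w_i^{\mathrm{f},1},w_i^{\mathrm{f},2}$ and adding bridging sets such as $\{w_i^{\mathrm{t},1},w_i^{\mathrm{t},2},w_i^{\mathrm{f},1}\}$, $\{w_i^{\mathrm{t},2},w_i^{\mathrm{f},1}\}$ and $\{w_i^{\mathrm{t},2},w_i^{\mathrm{f},1},w_i^{\mathrm{f},2}\}$ so every assignment to $\vec{W}$ is still encoded by some linear order. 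This is essentially the same argument, so no further comparison is needed.
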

\end{minipage}
\end{mdframed}
\end{center}

\begin{proof}
We additionally need to double the elements $w_i^\mathrm{t}$ and $w_i^\mathrm{f}$
and add, as for $z_i^a$ and $\overline{z_i^a}$, the sets
$\{w_i^{\mathrm{t},1}, w_i^{\mathrm{t},2}, w_i^{\mathrm{f},1}\}$,
$\{w_i^{\mathrm{t},2}, w_i^{\mathrm{f},1}\}$ and $\{w_i^{\mathrm{t},2},w_i^{\mathrm{f},1}, w_i^{\mathrm{f},2}\}$.
Furthermore, as before, we enforce 
\[X_i^\mathrm{t} \setminus \{y_i^q\} \preceq \{w_i^{\mathrm{t},1}, w_i^{\mathrm{\mathrm{t},2}}\}
\mbox{ and } \{w_i^{\mathrm{t},1},z_j^{\mathrm{t},2}\} \preceq X_i^\mathrm{f} \setminus \{y_i^q\}.\]
Then, $w_i^{\mathrm{t},1} < w_i^{\mathrm{t},2} < w_i^{\mathrm{f},1} <  w_i^{\mathrm{f},2}$,
enforces $X_i^\mathrm{t} \setminus \{y_i^q\} \prec X_i^\mathrm{f} \setminus \{y_i^q\}$
and hence $X_i^\mathrm{t} \prec X_i^\mathrm{f}$.
Similarly, we can ensure that
$w_i^{\mathrm{f},2} < w_i^{\mathrm{t},1} < w_i^{\mathrm{t},2} <  w_i^{\mathrm{t},1}$ 
enforces $X_i^\mathrm{f} \setminus \{\overline{y_i^q}\} \prec X_i^\mathrm{t} \setminus \{\overline{y_i^q}\}$
and hence $X_i^\mathrm{f} \prec X_i^\mathrm{t}$.
Therefore, every assignment to the universally quantified variables in $\phi$ is encoded by some  
linear order on $X$. The rest of the proof works as before.
\end{proof}

We conclude this part by observing that the orders constructed in the proof of 
Theorem~\ref{PI1} and Corollary~\ref{Cly:DI^SE-Pi-comp} are not only weak but even linear orders.
Therefore, the reductions used for these results show also the $\Pi_2^p$
completeness of \textsc{Strong $DI^S$-LO-Orderability} and \textsc{Strong $DI^SE$-LO-Orderability}.

\begin{center}
\begin{mdframed}[style=mystyle]
\begin{minipage}{\textwidth}

\medskip
\begin{Cly}\label{PI1-LO}
\textsc{Strong $DI^S$-LO-Orderability} and
\textsc{Strong $DI^SE$-LO-Orderability} are $\Pi_2^p$-complete.
\end{Cly}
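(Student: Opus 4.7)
The plan is to re-use the reductions from Theorem~\ref{PI1} and Corollary~\ref{Cly:DI^SE-Pi-comp} verbatim and simply verify that (a) the constructed upper-bound witnesses were actually linear orders all along, and (b) that the analogous $\Pi_2^p$ membership argument goes through with the linear-order variant of \textsc{$DI^S$-WO-Orderability} replacing the weak-order one.

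For membership, I would mimic the argument at the start of the proof of Theorem~\ref{PI1}: a non-deterministic polynomial-time machine with an \NP-oracle universally guesses a linear order $\leq$ on $X$ and then asks the oracle whether $(X,\cX,\leq)$ is a positive instance of \textsc{$DI^{S}$-LO-Orderability} (respectively \textsc{$DI^{S}E$-LO-Orderability}). By Corollary~\ref{Cly:StrictNPcomp} and Corollary~\ref{Cly:StrictExt}, both of these oracle queries lie in \NP, which places \textsc{Strong $DI^{S}$-LO-Orderability} and \textsc{Strong $DI^{S}E$-LO-Orderability} in $\Pi_2^p$.

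For hardness, I would invoke the same reductions from $\Pi_2$-\textsc{Sat}. The negative direction is automatic: every linear order is a weak order, so if $\phi$ is a no-instance then the critical linear order $\leq^*$ witnessing failure of strong $DI^S$-WO-orderability also witnesses failure of strong $DI^S$-LO-orderability. For the positive direction, one must inspect the explicit order $\preceq$ constructed from a satisfying assignment in the proofs of Theorem~\ref{PI1} and Corollary~\ref{Cly:DI^SE-Pi-comp} and verify it is antisymmetric. Looking at the construction, all pairs of sets in $Cl_1 \cup Cl_2$ are compared by an explicit tie-breaking rule; the sets in $Cl_3$ sharing an extremum element are ordered by a nested lexicographic rule (which is strict whenever the two sets differ); sets in $Cl_3$ with different extremum elements are strictly ordered by $\leq_{mm}$; and the additional sets introduced for strictness enforcement and for universally quantified variables are inserted into already-linear blocks. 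Hence $\preceq$ is antisymmetric, and the final ``completion'' step preserves this, yielding a linear order.

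The only substantive step, which I expect to be the main thing to double-check rather than a real obstacle, is this antisymmetry verification for the orders from Theorem~\ref{PI1} and Corollary~\ref{Cly:DI^SE-Pi-comp}; once that is in hand, the corollary is immediate.
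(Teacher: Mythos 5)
Your proposal is correct and follows essentially the same route as the paper: the paper likewise observes that the witness orders constructed in Theorem~\ref{PI1} and Corollary~\ref{Cly:DI^SE-Pi-comp} are in fact linear, so the same reductions establish hardness for the LO variants, with membership following from the \NP-completeness of the corresponding $\leq$-orderability problems exactly as you describe.
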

\end{minipage}
\end{mdframed}
\end{center}

\subsubsection*{Strong $DI$-WO-Orderability and Strong $DIE$-WO-Orderability}
\subsubsectionmark{Strong $DI$-WO-Orderability}

To conclude the section on $\leq$-orderability and 
strong orderability, we show that the hardness results we showed before
also hold if we replace strict independence with regular independence.

\begin{center}
\begin{mdframed}[style=mystyle]
\begin{minipage}{\textwidth}

\medskip
\begin{Thm}\label{PI2}
\textsc{Strong $DI$-WO-Orderability} is $\Pi_2^p$-complete.

\textsc{$DI$-WO-Orderability} is \NP-complete.
\end{Thm}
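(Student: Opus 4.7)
The plan is to adapt the reductions from Theorem~\ref{PI1} (for \textsc{Strong $DI^S$-WO-Orderability}) and its $\leq$-orderability counterpart. Both membership results are standard: \textsc{$DI$-WO-Orderability} is in \NP{} because a weak order can be guessed and checked against dominance and independence in polynomial time; \textsc{Strong $DI$-WO-Orderability} is in $\Pi_2^p$ by universally guessing a linear order $\leq$ and then calling an \NP-oracle to test $\leq$-$DI$-orderability. For the hardness parts, we mimic the reductions from \Sat{} and $\Pi_2$-\Sat{} respectively, recycling the Class~1, 2 and 3 gadgets together with the critical linear order.

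The only subtlety is that the previous reductions relied on propagating \emph{strict} preferences through the ``enforcing'' chains, which is free for strict independence but not for plain independence. However, since the lifted order is required to be a weak order (hence total), Proposition~\ref{Prop:Reverse} still makes independence equivalent to reverse independence. Consequently the sequences $A_1, A_2, \dots, A_\ell$ used to build up the dominance chain $A_1 \prec A_2 \prec \dots \prec A_\ell$ still yield a \emph{strict} preference $A_1 \prec A_\ell$, and a single application of reverse independence then lifts this to at least the \emph{weak} preference $X_j^a \setminus \{y_i^1\} \preceq \{z_i^1\}$. This is enough to make every enforced inequality $X_j^a \setminus \{y_i^b\} \preceq X_k^c \setminus \{y_i^d\}$ hold, so cycles arise in the weak-preference graph exactly when a clause is falsified.

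The main obstacle is that a cycle of weak preferences is not itself a contradiction -- it only forces every set in the cycle to be equivalent. We need to turn at least one inequality in each clause-cycle into a strict one. The plan is to graft onto the construction the same ``strictness forcer'' used in the proof of Proposition~\ref{Prop:StronpNP}: for each clause $C_i$ and each literal position, attach a copy of the $A_i, B_i, C_i, D_i$ gadget (together with auxiliary extremum and $v_1, v_2$ elements) that enforces the strict preference $X_k^c \setminus \{y_i^d\} \prec X_j^a \setminus \{y_i^b\}$ (say, the last step of the three-link cycle) whenever the weak-preference chain closes up. Then, if the three variables of $C_i$ are all falsified by the encoded assignment, the cycle reads $X_j^\mathrm{f}\setminus\{y_i^1\} \preceq X_k^\mathrm{f}\setminus\{y_i^2\} \preceq X_l^\mathrm{f}\setminus\{y_i^3\} \prec X_j^\mathrm{f}\setminus\{y_i^1\}$, contradicting transitivity.

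Conversely, for the completeness direction one uses the same completion strategy as in the proof of Proposition~\ref{Prop:StronpNP}: given a satisfying assignment of the \Sat{} instance (respectively, for every truth assignment to the universally quantified variables, a matching assignment to the existentially quantified ones for the $\Pi_2$-\Sat{} instance), build the partial order on Class~1 and Class~2 sets from the assignment, extend it to Class~3 via the nested-lexicographic construction that uses the orientation bit $v_1 <' v_2$ (or, in the other case, its inverse via Lemma~\ref{Lem:inverse}), insert the strictness-forcer sets at the appropriate slots, and complete to a linear order. Since that completion already satisfies strict independence, it a fortiori satisfies plain independence, yielding the desired $DI$-orderability. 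The same reductions produce linear lifted orders, so the $\leq$-orderability case (for any fixed $\leq$) is handled in one stroke, giving \NP-completeness, while the universal quantification over the orientation bits carried by $w_i^\mathrm{t}, w_i^\mathrm{f}$ gives $\Pi_2^p$-hardness for strong orderability.
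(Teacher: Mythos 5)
Your membership arguments and your overall plan (recycle the \Sat{} and $\Pi_2$-\Sat{} reductions) match the paper, but the crucial step where you replace strict independence by plain independence has a genuine gap. You claim that from the dominance chain $A_1 \prec A_2 \prec \dots \prec A_\ell$ (with $A_1 = (X_j^a\setminus\{y_i^1,z_i^1\})\cup\{\Max_i^1\}$ and $A_\ell = \{\Max_i^1\}$) ``a single application of reverse independence'' yields $X_j^a\setminus\{y_i^1\} \preceq \{z_i^1\}$. This is not a legal application of any of the axioms: $A_1$ differs from $X_j^a\setminus\{y_i^1\}$ in \emph{two} elements ($z_i^1$ is removed and $\Max_i^1$ is added), so no single application of independence or reverse independence relates the pair $(A_1,A_\ell)$ to the pair $(X_j^a\setminus\{y_i^1\},\{z_i^1\})$. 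The legitimate derivation is the two-step one used in Proposition~\ref{Prop:StronpNP}: first add $z_i^1$ to both sides of $A_1 \prec A_\ell$, then strip off $\Max_i^1$ by reverse independence. Under plain independence the first step only yields the \emph{weak} preference $(X_j^a\setminus\{y_i^1\})\cup\{\Max_i^1\} \preceq \{z_i^1,\Max_i^1\}$, and reverse independence (Proposition~\ref{Prop:Reverse}) needs a \emph{strict} premise, so the second step cannot be applied at all. Hence your enforced preferences simply do not follow, the clause cycles need not arise, and the soundness direction of the reduction collapses. Your proposed fix (grafting extra Class-4 ``strictness forcers'' per clause) does not help, because those gadgets are themselves built from the very enforcement mechanism that is broken, and the real problem is not that the cycle is merely weak but that the links cannot be derived in the first place.

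The paper closes exactly this gap with one additional idea you are missing: for each enforcement it adds a buffer set $(X_j^a\setminus\{y_i^1,x_j^-\})\cup\{\Max_i^1\}$, i.e.\ the target set with its unique minimal element $x_j^-$ also removed. Applying independence to the strict chain starting at $A_2$ gives only a weak preference between this buffer and $\{z_i^1,\Max_i^1\}$, but dominance gives the \emph{strict} preference $(X_j^a\setminus\{y_i^1\})\cup\{\Max_i^1\} \prec (X_j^a\setminus\{y_i^1,x_j^-\})\cup\{\Max_i^1\}$, and transitivity (strict composed with weak) restores strictness at the $\Max$-level; only then can reverse independence be invoked to obtain $X_j^a\setminus\{y_i^1\} \preceq \{z_i^1\}$. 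The completeness direction then just places each buffer set immediately before or after $(X_j^a\setminus\{y_i^1\})\cup\{\Max_i^1\}$ depending on the position of $x_j^-$ in $\leq'$. Without this buffer-set device (or some equivalent way of manufacturing a strict preference before removing the extremum element), your reduction does not establish $\Pi_2^p$- or \NP-hardness for dominance together with plain independence.
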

\end{minipage}
\end{mdframed}
\end{center}

\begin{proof}
We observe that the reductions for Proposition~\ref{Prop:StronpNP}
and  Theorem~\ref{PI1}
only use strict independence once:
Namely when enforcing a preference 
$X_i^a \setminus \{y_j^b\} \preceq \{z_j^b\}$
strict independence ensures that
\[A_1 = (X_i^a\setminus \{y_j^b,z_j^b,\}) \cup \{\Max_j^b\} \prec \{\Max_j^b\} = A_l\]
enforces the desired preference 
\[(X_i^a\setminus \{y_j^b\}) \cup \{\Max_j^b\} \prec \{z_j^b, \Max_j^b\}.\]

We can achieve the same result with dominance and independence as follows:
We add the same sequence $A_1, \dots, A_l$ as in the proof of Proposition~\ref{Prop:StronpNP}
and additionally, the set $(X_i^a\setminus \{y_j^b, x_i^-\}) \cup \{\Max_j^b\}$.
We observe that for the sequence $A_1, \dots, A_l$ added to enforce the preference
$(X_j^a\setminus \{y_i^b\}) \cup \{\Max_i^c\}\prec \{z_i, \Max_i^d\}$,
we have $l > 3$ and that the following preference is enforced by dominance
\[A_2 = (X_i^a\setminus \{y_j^b,z_j^b,x_i^-\}) \cup \{\Max_j^b\} \prec \{\Max_j^b\} = A_l\]
which enforces by independence
\[(X_i^a\setminus \{y_j^b, x_i^-\}) \cup \{\Max_j^b\} \prec \{z_j^b, \Max_j^b\}.\]
Now, by construction $x_i^- < \min (X_i^a\setminus \{y_j^b, x_i^-\})$ holds
and hence by dominance
\[(X_i^a\setminus \{y_j^b\}) \cup \{\Max_j^b\} \prec
X_i^a\setminus \{y_j^b, x_i^-\}) \cup \{\Max_j^b\} \preceq \{z_j^b, \Max_j^b\}. \]
Therefore, we get by transitivity the desired 
\[(X_i^a\setminus \{y_j^b\}) \cup \{\Max_j^b\} \prec \{z_j^b, \Max_j^b\}.\]

Now, for any linear order $\leq'$ the newly added sets can easily be 
added to an order satisfying dominance and independence:
Any new set $(X_i^a\setminus \{y_j^b, x_i^-\}) \cup \{\Max_j^1\}$
can be added in the order $\preceq$ right after $(X_i^a\setminus \{y_j^b\}) \cup \{\Max_j^1\}$ 
if $x_i^-<' v_1$ or right before $(X_i^a\setminus \{y_j^b\}) \cup \{\Max_j^1\}$ if $v_1 <' x_i^-$.
\end{proof}

The newly added sets are not singletons. 
Hence, the fact that we can add the extension rule without changing the 
complexity of the problem still holds as before.

\begin{center}
\begin{mdframed}[style=mystyle]
\begin{minipage}{\textwidth}

\medskip
\begin{Cly}\label{DIE-PI-comp}
\textsc{Strong $DIE$-WO-Orderability} is $\Pi_2^p$-complete and
\textsc{$DIE$-WO-Orderability} is \NP-complete.
\end{Cly}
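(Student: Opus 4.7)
The plan is to adapt the reduction of Theorem~\ref{PI2} to additionally accommodate the extension rule, following essentially the same template used in Corollary~\ref{Cly:StrictExt} and Corollary~\ref{Cly:DI^SE-NP-comp} for the strict independence case. For the upper bounds: \textsc{$DIE$-WO-Orderability} is trivially in \NP{} since we can guess a weak order on $\cX$ (of polynomial size relative to $|\cX|$) and verify dominance, independence, and extension in polynomial time; \textsc{Strong $DIE$-WO-Orderability} is in $\Pi_2^p$ because its complement is solved by a nondeterministic machine that guesses a linear order $\leq$ on $X$ and then uses an \NP-oracle to check whether $(X, \cX, \leq)$ is a negative instance of \textsc{$DIE$-WO-Orderability}.

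For the lower bounds, I would reuse the reductions from Theorem~\ref{PI2} verbatim, observing that the only sets added in passing from the strict-independence reduction to the independence reduction, namely the sets of the form $(X_i^a \setminus \{y_j^b, x_i^-\}) \cup \{\Max_j^b\}$, are already non-singletons. Hence the singletons in the combined construction are exactly those inherited from Proposition~\ref{DI^S-NP-hard}/Proposition~\ref{Prop:StronpNP}/Theorem~\ref{PI1}: namely $\{z_k^a\}$, $\{\Min_i^b\}$, $\{\Max_j^c\}$, $\{w_i^\mathrm{t}\}$, $\{w_i^\mathrm{f}\}$, as well as the auxiliary $\{\overline{z_i^a}\}$. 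I would then apply the doubling trick exactly as in Corollary~\ref{Cly:DI^SE-NP-comp} and Corollary~\ref{Cly:DI^SE-Pi-comp}: replace each such element $e$ by two copies $e^1, e^2$ with $e^1 < e^2$ in the critical linear order, add the two-element set $\{e^1, e^2\}$ together with an appropriate chain of sets that force the intended enforced preference via dominance and (reverse) independence, and insert, in any lifted order produced from a satisfying assignment, the block $\{e^1\} \prec \{e^1, e^2\} \prec \{e^2\}$ in the position where the singleton $\{e\}$ used to be.

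The main technical obstacle is ensuring that the two modifications of the base reduction compose cleanly: the ``$x_i^-$ dominance'' trick introduced in Theorem~\ref{PI2} must still transmit the forced preference through the doubled singletons. Concretely, when enforcing $X_i^a \setminus \{y_j^b\} \preceq \{z_j^{b,1}, z_j^{b,2}\}$ we need both the sequence $A_1, \dots, A_l$ (with $A_2$ removing $x_i^-$ in addition to the doubled $z_j^{b,1}, z_j^{b,2}$) to push the chain via dominance down to $\{\Max_j^b\}$, and then a single application of independence followed by dominance (exploiting $x_i^- < \min$ of the remainder) to recover the desired inequality --- this is the same two-step argument as in the proof of Corollary~\ref{Cly:DI^SE-NP-comp}, just with the extra $x_i^-$ layer. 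One must verify that these three devices (the $x_i^-$ gadget, the doubling of $z_i^a$ to $z_i^{a,1}, z_i^{a,2}$ together with $\overline{z_i^a}$-doubling, and, for strong orderability, the doubling of $w_i^\mathrm{t}$, $w_i^\mathrm{f}$) do not interact in any unintended way.

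Finally, to extend an assignment-based weak order to the enlarged family, I would proceed exactly as in Corollary~\ref{Cly:DI^SE-NP-comp}: each block of doubled singletons is inserted in the total order in place of the original singleton, ordered $\{e^1\} \prec \{e^1, e^2\} \prec \{e^2\}$ or its reverse according to the restriction of $\leq'$ to $\{e^1, e^2\}$; the auxiliary sets $\{z^{a,1}_i, z^{a,2}_i, \overline{z^{a,1}_i}\}, \{z^{a,2}_i, \overline{z^{a,1}_i}\}, \{z^{a,2}_i, \overline{z^{a,1}_i}, \overline{z^{a,2}_i}\}$ are placed consecutively as forced by dominance, and the analogous blocks for $w_i^\mathrm{t}, w_i^\mathrm{f}$ in the strong case are inserted in the positions occupied by $\{w_i^\mathrm{t}\}$, $\{w_i^\mathrm{f}\}$ in the proof of Theorem~\ref{PI1}. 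Since $\cX$ now contains no singletons, the extension rule holds vacuously, while dominance and independence hold by the same case analysis as in Theorem~\ref{PI2}.
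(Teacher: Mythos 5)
Your proposal is correct and follows essentially the same route as the paper: the paper's own (very terse) justification is precisely your observation that the sets newly added in Theorem~\ref{PI2} are non-singletons, so the singleton-doubling technique of Corollaries~\ref{Cly:StrictExt}, \ref{Cly:DI^SE-NP-comp} and \ref{Cly:DI^SE-Pi-comp} applies unchanged and the extension rule is satisfied vacuously, with the standard guess-and-check and guess-plus-\NP-oracle arguments giving the \NP{} and $\Pi_2^p$ upper bounds.
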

\end{minipage}
\end{mdframed}
\end{center}

As above, we observe that the orders constructed in Theorem~\ref{PI2} and
Corollary~\ref{DIE-PI-comp} are linear orders.
Therefore, we can also conclude that
\textsc{Strong $DI$-LO-Orderability} and \textsc{Strong $DIE$-LO-Orderability}
are $\Pi_2^p$-complete and that \textsc{$DI$-LO-Orderability} and  \textsc{$DI$-LO-Orderability} are \NP-complete.

\begin{center}
\begin{mdframed}[style=mystyle]
\begin{minipage}{\textwidth}

\medskip
\begin{Cly}\label{PI2-LO}
\textsc{Strong $DI$-LO-Orderability} and
\textsc{Strong $DIE$-LO-Orderability} are $\Pi_2^p$-complete.
Moreover, \textsc{$DI$-LO-Orderability} and  \textsc{$DIE$-LO-Orderability} are \NP-complete.
\end{Cly}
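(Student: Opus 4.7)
The plan is to obtain this corollary as an immediate consequence of the reductions already used to prove Theorem~\ref{PI2} and Corollary~\ref{DIE-PI-comp}, in the same spirit as the passage from Theorem~\ref{PI1} to Corollary~\ref{PI1-LO}. Concretely, I would first dispose of the upper bounds and then of the hardness.

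For membership, I would argue that \textsc{$DI$-LO-Orderability} and \textsc{$DIE$-LO-Orderability} are in \NP{} because, given $(X,\leq,\cX)$, one can guess a linear order $\preceq$ on $\cX$ and verify in polynomial time that it satisfies dominance, independence (and, in the second case, the extension rule) with respect to $\leq$; each axiom is a local check over tuples of elements and sets of size polynomial in $|X|+|\cX|$. From this, \textsc{Strong $DI$-LO-Orderability} and \textsc{Strong $DIE$-LO-Orderability} are in $\Pi_2^p$: a universal guess of a linear order $\leq$ on $X$ followed by an \NP-oracle call to the corresponding $\leq$-orderability problem decides the instance.

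For hardness, I would appeal to the reductions already given. The key observation — exactly the one used to derive Corollary~\ref{PI1-LO} from Theorem~\ref{PI1} — is that in the proofs of Theorem~\ref{PI2} and Corollary~\ref{DIE-PI-comp} the order $\preceq$ that witnesses the positive case is constructed as a strict linear order: it is obtained as a completion of a partial order built step by step, with strict preferences between distinct sets at every stage, and no indifference is ever introduced. Hence on every instance produced by the reduction, there is a weak order on $\cX$ satisfying the relevant axioms if and only if there is a linear order on $\cX$ satisfying them; and trivially every linear order is already a weak order. So the same reductions from \textsc{Sat} and $\Pi_2$-\textsc{Sat} yield \NP-hardness of \textsc{$DI$-LO-Orderability} and \textsc{$DIE$-LO-Orderability} and $\Pi_2^p$-hardness of \textsc{Strong $DI$-LO-Orderability} and \textsc{Strong $DIE$-LO-Orderability}.

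The main thing to verify carefully is that claim about the constructed order being linear: I would go back through the case-analytic construction in Theorem~\ref{PI2} (Class~1, Class~2, Class~3 sets, the extremum-based lexicographic tie-breakers, and the final completion step) and confirm that every time two distinct sets are compared, the comparison is strict — in particular, that the $v_2$-based tie-breaker and the leximax/leximin orderings on sets sharing an extremum element produce antisymmetric orders, and that the final ``completion'' of the partial order can always be taken to be a linear (rather than merely weak) order. This is essentially routine but is the only step that is not entirely cosmetic; having checked it, the corollary follows.
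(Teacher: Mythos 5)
Your proposal is correct and matches the paper's own argument: the paper derives this corollary exactly by observing that the witness orders constructed in Theorem~\ref{PI2} and Corollary~\ref{DIE-PI-comp} are in fact linear, so the same reductions give hardness for the LO variants, with membership by the standard guess-and-check (resp.\ universal guess plus \NP-oracle) argument. Your extra care in re-checking antisymmetry of the constructed order is precisely the verification the paper implicitly relies on.
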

\end{minipage}
\end{mdframed}
\end{center}

We conclude this section with an important observation:
The fact that strong orderability is $\Pi_2^p$-complete implies
under the commonly believed complexity theoretic assumption $\coNP \neq  \Pi_2^p$ that
constructing an order satisfying dominance, (strict) independence 
and the extension rule is hard even if we know that the family $\cX$
is strongly orderable. In other words, even if we know that a weak order
satisfying our axioms must exist, it may be hard to construct it.

\begin{center}
\begin{mdframed}[style=mystyle]
\begin{minipage}{\textwidth}

\medskip
\begin{Cly}\label{Cly}
Assume $\coNP \neq  \Pi_2^p$. Then there exists no polynomial time algorithm $\mathbb{A}$
with the following specifications: 
\begin{itemize}
\item $\mathbb{A}$ takes as input a set $X$, a family of sets $\cX \subseteq \powerset{X}$ 
and a linear order $\leq$ on $X$.
\item if $\cX$ is strongly $DI$-orderable, then $\mathbb{A}$ produces on input $(X,\cX,\leq)$
an order $\preceq$ on $\cX$ that satisfies dominance and independence.
\end{itemize}
The same holds if we replace independence by strict independence
or if we add the extension rule.
\end{Cly}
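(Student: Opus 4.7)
The plan is to prove the corollary by contraposition: assume such a polynomial time algorithm $\mathbb{A}$ exists and derive $\coNP = \Pi_2^p$, contradicting the hypothesis. I will use $\mathbb{A}$ to give a $\coNP$ procedure for \textsc{Strong $DI$-WO-Orderability}, which by Theorem~\ref{PI2} is $\Pi_2^p$-hard, and hence force $\Pi_2^p \subseteq \coNP$.

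Concretely, suppose $\mathbb{A}$ is as specified. Given an instance $(X,\cX)$ of \textsc{Strong $DI$-WO-Orderability}, I consider the following $\coNP$ algorithm: universally guess a linear order $\leq$ on $X$ (which has size polynomial in the input), run $\mathbb{A}(X,\cX,\leq)$ to obtain some binary relation $\preceq$ on $\cX$, and in polynomial time verify that $\preceq$ is a weak order satisfying dominance and independence with respect to $\leq$ (this check is clearly in \P{} since both axioms are universal statements over polynomially many triples). The algorithm accepts iff this verification succeeds for every $\leq$. The correctness is a direct consequence of $\mathbb{A}$'s specification: if $\cX$ is strongly $DI$-orderable, then for every $\leq$ the output of $\mathbb{A}$ is guaranteed to be a valid lifted order, so every branch accepts; conversely, if $\cX$ is not strongly $DI$-orderable, then there is some $\leq^*$ for which no valid lifted order exists at all, so in particular $\mathbb{A}(X,\cX,\leq^*)$ cannot be one and that branch rejects. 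Thus \textsc{Strong $DI$-WO-Orderability} lies in $\coNP$, and combined with its $\Pi_2^p$-hardness this yields $\Pi_2^p \subseteq \coNP$, hence $\Pi_2^p = \coNP$, contradicting the assumption $\coNP \neq \Pi_2^p$.

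The same argument, with the axiom set adjusted and the corresponding hardness result invoked, handles the other three cases: for dominance and strict independence it uses the $\Pi_2^p$-hardness of \textsc{Strong $DI^S$-WO-Orderability} from Theorem~\ref{PI1}, and for the variants including the extension rule it uses Corollaries~\ref{Cly:DI^SE-Pi-comp} and~\ref{DIE-PI-comp}. In every case the verification step remains in \P{} because each of dominance, (strict) independence and the extension rule can be checked in polynomial time.

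The only place where one has to be slightly careful is the verification step: one must also check that the output of $\mathbb{A}$ is actually a weak order on $\cX$ (reflexive, transitive, total), not just that it satisfies the axioms; but this is again a polynomial-time syntactic check, so it does not affect the $\coNP$ upper bound. I expect no substantial obstacle — the argument is essentially a self-reducibility style observation — the only subtlety being to phrase the $\coNP$ procedure so that it remains correct even when $\mathbb{A}$ produces arbitrary output on inputs outside its specified guarantee.
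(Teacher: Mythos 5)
Your proposal is correct and follows essentially the same route as the paper: assuming $\mathbb{A}$ exists, one places \textsc{Strong $DI$-WO-Orderability} in \coNP{} (the paper phrases it via a no-certificate, namely a linear order $\leq$ on which $\mathbb{A}$'s output fails, which is just the certificate view of your universal-branch formulation) and then contradicts its $\Pi_2^p$-completeness under $\coNP \neq \Pi_2^p$. Your extra remark that the verifier must also check that $\mathbb{A}$'s output is a weak order is a welcome bit of care that the paper's own proof leaves implicit.
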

\end{minipage}
\end{mdframed}
\end{center}

\begin{proof}
We claim that \textsc{Strong $DI$-WO-Orderability} would be in \coNP{} if 
there exists a polynomial time algorithm $\mathbb{A}$
with the specifications above.
To do so, we show that there exists a certificate for no instances
if such an algorithm exists.
Observe that there exists a linear order $\leq$ on $X$ that cannot be lifted
if and only if $(X,\cX)$ is negative instance of \textsc{Strong $DI$-WO-Orderability}. 
Hence $\leq$ is a certificate (of polynomial size) for the fact that 
$(X,\cX)$ is a negative instance. 
Furthermore, one can check the certificate by running $\mathbb{A}$ on $(X, \cX, \leq)$.
Then, one only needs to check that the produced order does not satisfy dominance and 
strict independence. By definition, this can only be the case if $(X,\cX)$ is a
negative instance of \textsc{Strong $DI$-WO-Orderability}.
The argument for strict independence and extension is analogous.
\end{proof}

\subsection{Partial Orders}\label{sec:partial}

In this section, we investigate the 
effect of dropping the requirement that the lifted order should be total.
For dominance and independence, we already have seen that they 
always can be jointly satisfied, if we expect the lifted order to be only 
a preorder.

\begin{center}
\begin{mdframed}[style=mystyle]
\begin{minipage}{\textwidth}

\medskip
\begin{Thm}[Folklore]
For every set $X$, linear order $\leq$ on $X$ and family of sets
$\cX \subseteq \powerset{X}$, there is a preorder that 
satisfies dominance, independence and the extension rule.
\end{Thm}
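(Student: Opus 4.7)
My plan is to re-use the maxmin-based preorder $\preceq_{pmm}$ from Example~\ref{Exp:precmm2}, defined on $\cX$ by $A \preceq_{pmm} B$ iff $\max(A) \leq \max(B)$ and $\min(A) \leq \min(B)$. This definition depends only on $\leq$ and $\cX$, so it is available for every instance of the theorem, and the main point of the proof is simply that the verification given in the example is already axiom-by-axiom universal in $\cX$. The only real content beyond what is stated in Example~\ref{Exp:precmm2} is that the extension rule is also satisfied, which comes essentially for free from the definition.

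First I would record that $\preceq_{pmm}$ is a preorder: reflexivity is immediate, and transitivity follows from the transitivity of $\leq$ applied coordinatewise to $(\min, \max)$. Next, I would re-check dominance and independence in this more general setting, since Example~\ref{Exp:precmm2} was written for $\cX = \pow$. For dominance, if $A, A \cup \{x\} \in \cX$ with $y < x$ for all $y \in A$, then $\min(A \cup \{x\}) = \min(A)$ while $\max(A \cup \{x\}) = x > \max(A)$, giving $A \prec_{pmm} A \cup \{x\}$; the case $x < y$ for all $y \in A$ is symmetric. For independence, if $A \prec_{pmm} B$ and $x \notin A \cup B$ with $A \cup \{x\}, B \cup \{x\} \in \cX$, then $\max(A \cup \{x\}) = \max\{\max(A), x\} \leq \max\{\max(B), x\} = \max(B \cup \{x\})$ and analogously for $\min$, so $A \cup \{x\} \preceq_{pmm} B \cup \{x\}$.

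Finally, for the extension rule, I would fix $x, y \in X$ with $x < y$ and $\{x\}, \{y\} \in \cX$. Since $\max(\{x\}) = \min(\{x\}) = x$ and $\max(\{y\}) = \min(\{y\}) = y$, the hypothesis $x < y$ gives $\{x\} \preceq_{pmm} \{y\}$, while $\max(\{y\}) = y > x = \max(\{x\})$ rules out $\{y\} \preceq_{pmm} \{x\}$, so $\{x\} \prec_{pmm} \{y\}$. There is no real obstacle here: the verification is a three-line unwinding of the definition in each case, and the take-away is that the concrete construction of Example~\ref{Exp:precmm2} already witnesses joint satisfiability on every family $\cX \subseteq \pow$ and every linear order $\leq$.
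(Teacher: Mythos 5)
Your proposal is correct and follows essentially the same route as the paper, which likewise just invokes the preorder $\preceq_{pmm}$ of Example~\ref{Exp:precmm2} (already stated there to satisfy dominance and independence on every family of sets) and notes that the extension rule holds by definition. Your explicit re-verification of the three axioms only spells out details the paper leaves to the example.
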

\end{minipage}
\end{mdframed}
\end{center}

\begin{proof}
In Example~\ref{Exp:precmm2}, we defined the preorder $\preceq_{pmm}$ and showed that
it always satisfies dominance and independence. By definition, it also satisfies 
the extension rule.
\end{proof}

In other words, every family of sets 
is strongly $DI(E)$-orderable if we only require the lifted order to be a preorder.
On the other hand, Barbera and Pattanaik's theorem
tells us that this is not the case for dominance and strict independence.

We observe that in many applications partial orders are more
common than preorders.
For example, if we want to avoid total orders when applying lifted orders in combinatorial voting
as shown in Figure~\ref{fig:1}, then we need voting rules that can handle
these incomplete orders. However, most voting rules for incomplete preferences 
require partial orders as input \citep{boutilier2016}
though voting rules for even weaker preference models exist \citep{xia2011,endriss2019}.
Therefore, we are often more interested in the complexity 
of deciding for a given family of sets if
there exists a partial order
that satisfies dominance and (strict) independence
resp.\ dominance, (strict) independence and the extension rule.
As strict independence and independence coincide for 
strict orders, we can focus just on strict independence here.

For $\leq$-orderability, this problem was already considered 
by \citet{Maly2017}. They showed that it is possible to decide 
whether a family of sets is $\leq$-$DI^S$- or $\leq$-$DI^SE$-orderable
using a polynomial time fix point construction. In particular, this
result is constructive.

\begin{center}
\begin{mdframed}[style=mystyle]
\begin{minipage}{\textwidth}

\medskip
\begin{Thm}[\citeauthor{Maly2017}]\label{Cly:DI^S-partial}
\textsc{$DI^S$-PO-orderability} and \textsc{$DI^SE$-PO-orderability} are in \P.
\end{Thm}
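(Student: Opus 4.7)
The plan is to give a polynomial-time closure algorithm. Define, for an instance $(X,\leq,\cX)$, the relation $\prec_0$ on $\cX$ consisting of exactly those strict preferences that are \emph{immediately} forced by dominance, namely $A \prec_0 A \cup \{x\}$ whenever $A, A \cup \{x\} \in \cX$ and $y < x$ for all $y \in A$, together with the symmetric clause; for the $DI^SE$ version I additionally seed $\{x\} \prec_0 \{y\}$ whenever $\{x\},\{y\} \in \cX$ and $x < y$. I then iteratively close $\prec_0$ under (i) transitivity and (ii) strict independence, i.e., whenever $A \prec B$ has been derived and $x \notin A \cup B$ with $A \cup \{x\}, B \cup \{x\} \in \cX$, add $A \cup \{x\} \prec B \cup \{x\}$. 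Call the resulting relation $\prec^\star$.

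The termination and complexity step is routine: $\prec^\star \subseteq \cX \times \cX$, so the fixed point is reached after at most $|\cX|^2$ insertions, each triggering only polynomially many new rule applications. Hence $\prec^\star$ is computable in time polynomial in $|X| + |\cX|$. The algorithm then accepts iff $\prec^\star$ is irreflexive, which can again be checked in polynomial time.

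The correctness argument is the crucial step and splits into two directions. For necessity, observe that any partial order $\prec$ on $\cX$ witnessing $\leq$-$DI^S$(E)-orderability must contain every seed preference of $\prec_0$ (dominance and, optionally, extension) and be closed under transitivity and strict independence; by induction on the closure steps, $\prec^\star \subseteq \prec$. Since $\prec$ is irreflexive, so is $\prec^\star$. For sufficiency, assume $\prec^\star$ is irreflexive. By construction $\prec^\star$ is transitive; combined with irreflexivity this yields asymmetry (if $A \prec^\star B$ and $B \prec^\star A$, transitivity gives $A \prec^\star A$), so $\prec^\star$ is a strict partial order on $\cX$. Dominance and, where required, extension hold because the forcing clauses are already in $\prec_0$, and strict independence holds because it is one of the closure rules.

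The only subtlety I anticipate is bookkeeping the closure efficiently so that the polynomial bound is really met; a straightforward queue-based implementation processing each newly derived pair once, checking all transitivity and independence completions it enables, suffices and gives an $O(\mathrm{poly}(|X|,|\cX|))$ runtime. Reverse strict independence does not need to be imposed here: Proposition~\ref{Prop:Reverse} only guarantees its equivalence for total (antisymmetric) relations, and the construction of a minimal partial order does not require it.
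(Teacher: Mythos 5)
Your closure argument is correct, and it is essentially the approach the paper relies on: the paper gives no proof of its own but cites the polynomial-time fixed-point construction of the minimal relation satisfying dominance, strict independence (and optionally extension) from \citet{Maly2017}, which is exactly your seed-and-close procedure with an irreflexivity (consistency) check at the end. Both directions of your correctness argument (minimality of the closure inside any witnessing strict order, and the closure itself being a witness when irreflexive) are sound, so nothing is missing.
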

\end{minipage}
\end{mdframed}
\end{center}

In this paper, we consider \emph{strongly partially $DI^S$-orderable} families of sets,
i.e., families of sets $\cX \subseteq \powerset{X}$ such that
for every linear order $\leq$ on $X$
there exists a partial order on $\cX$ that satisfies dominance and strict
independence with respect to $\leq$.
It turns out that it is still difficult
to decide whether a given family of sets is strongly partially $DI^S$-orderable.
We show this by a reduction from \textsc{Taut}.

\begin{center}
\begin{mdframed}[style=mystyle]
\begin{minipage}{\textwidth}

\medskip
\begin{Thm}\label{partialCoNP}
\textsc{Strong $DI^S$-PO-Orderability} is \coNP-complete.
\end{Thm}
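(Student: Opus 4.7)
For \coNP-membership, Theorem~\ref{Cly:DI^S-partial} gives a polynomial-time decision procedure for $\leq$-$DI^S$-PO-orderability. Strong $DI^S$-PO-orderability is a universal statement over all linear orders $\leq$ on $X$, so its complement (``there exists $\leq$ with no $DI^S$-PO-lifting'') lies in \NP{} by guess-and-check, whence the problem lies in \coNP.

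For \coNP-hardness, we reduce from \textsc{Taut}. Given a formula $\phi(w_1, \dots, w_n)$, written say as the negation of a CNF $\psi = c_1 \wedge \dots \wedge c_m$, so that $\phi$ is a tautology iff $\psi$ is unsatisfiable, we produce $(X, \cX)$ with $\cX$ strongly $DI^S$-PO-orderable iff $\psi$ is unsatisfiable. The idea is to put each truth assignment $\alpha$ in correspondence with a distinguished class of linear orders on $X$ and to arrange $\cX$ so that the forced strict relation generated by dominance, forward strict independence and transitivity contains a cycle exactly when $\alpha$ satisfies $\psi$. Concretely, each variable $w_i$ is represented by two elements $x_i, y_i$ together with the singletons $\{x_i\}, \{y_i\}$ and the pair $\{x_i, y_i\}$; by dominance, the relative position of $x_i, y_i$ under $\leq$ forces either $\{x_i\} \prec \{x_i, y_i\} \prec \{y_i\}$ or its reverse, thereby encoding the truth value of $w_i$.

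For each clause $c_k$, we install a gadget based on fixed auxiliary sets $A_k$ that enforces a strict preference $A_k \prec A_{k+1}$ (indices modulo $m$) whenever $c_k$ is satisfied. The enforcement is disjunctive: for each literal of $c_k$, a short chain of intermediate sets together with unconditional dominance-forced preferences (such as $A_k \prec \{x_j\}$ and $\{y_j\} \prec A_{k+1}$ when $v_j$ appears positively in $c_k$) composes with the conditional dominance $\{x_j\} \prec \{x_j, y_j\} \prec \{y_j\}$, so that whichever literal of $c_k$ is true under $\alpha$ suffices to force $A_k \prec A_{k+1}$. All $m$ preferences $A_1 \prec A_2, \dots, A_m \prec A_1$ are therefore forced simultaneously iff every clause of $\psi$ is satisfied, at which point they close into a cycle and no $DI^S$-PO-lifting exists. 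Conversely, if some clause is violated the corresponding link is missing, the forced relation remains acyclic, and by the fixed-point construction underlying Theorem~\ref{Cly:DI^S-partial} it extends to a partial order on $\cX$ satisfying dominance and strict independence.

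The main technical obstacle is that partial orders lack reverse strict independence: by Proposition~\ref{Prop:Reverse}, the shortcut of deducing $A \prec B$ from $A \cup \{x\} \prec B \cup \{x\}$, which is central to the reductions of Propositions~\ref{DI^S-NP-hard} and~\ref{Prop:StronpNP}, is no longer available. Every such deduction inside the clause gadgets must be replaced by an explicit dominance chain from $A$ to $B$ through intermediate sets, enlarging $\cX$ only polynomially in $|\psi|$. Moreover, some care is needed to ensure that each dominance chain always fires in the intended direction regardless of the variable-encoding positions of the $x_i, y_i$ in $\leq$; as in earlier reductions, this is handled by anchoring each gadget with locally unique extremum elements whose positions in $\leq$ are fixed.
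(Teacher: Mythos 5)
Your membership argument and the broad shape of your hardness reduction coincide with the paper's (assignments encoded by linear orders, truth values read off dominance-forced triples $\{x_i\}\prec\{x_i,y_i\}\prec\{y_i\}$, conditional per-clause links closing into a global cycle exactly when the assignment witnesses non-tautology, reverse strict independence replaced by explicit dominance chains), but your clause gadget, as specified, fails. You attach the global anchors directly to the \emph{shared} variable singletons: a positive occurrence of $w_j$ in $c_k$ unconditionally forces $A_k \prec \{x_j\}$ and $\{y_j\} \prec A_{k+1}$, and a negative occurrence in some $c_{k'}$ symmetrically forces $A_{k'} \prec \{y_j\}$ and $\{x_j\} \prec A_{k'+1}$. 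Since the forced relation is transitively closed, these compose \emph{independently of the assignment}: one gets $A_{k'} \prec \{y_j\} \prec A_{k+1}$ and $A_k \prec \{x_j\} \prec A_{k'+1}$, and taking $k' = k+1$ (a variable occurring positively in $c_k$ and negatively in $c_{k+1}$, which cannot be excluded) already yields $A_{k+1} \prec \{y_j\} \prec A_{k+1}$, a cycle forced for every critical order even when $\psi$ is unsatisfiable; your reduction thus misclassifies instances as simple as $\psi = (w_1)\wedge(\neg w_1)$. This is exactly why the paper never relates its cycle anchors to the variable singletons: it introduces per-disjunct flag singletons $\{y_j^\mathrm{t}\},\{y_j^\mathrm{f}\}$ and per-occurrence elements $d_j^k, e_j^k$, routes the unconditional chains as $\{y_j^\mathrm{t}\}\prec\{x^\mathrm{f}_{i_k}\}$ and $\{x^\mathrm{t}_{i_k}\}\prec\{y_j^\mathrm{f}\}$ so the shared singletons sit only in the middle of conditional paths, and then lifts the flag $\{y_j^\mathrm{t}\}\prec\{y_j^\mathrm{f}\}$ into the large anchor sets $A_j,\overline{A_j}$ by forward strict independence, so that the cross-clause compositions which do arise cannot reach the anchors. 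Your sketch contains no isolating mechanism of this kind, and it is the essential missing idea.

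The converse direction is also essentially asserted rather than proved. For strong orderability you must exhibit, for \emph{every} linear order $\leq$ on $X$ (not only your distinguished critical orders), a partial order satisfying dominance and strict independence whenever $\psi$ is unsatisfiable. Your remark about ``anchoring each gadget with locally unique extremum elements whose positions in $\leq$ are fixed'' is not available here: $\leq$ is universally quantified, so the adversary may interleave the auxiliary elements with the $x_i,y_i$ arbitrarily, and nothing about their positions can be fixed. Reducing the claim to acyclicity of the forced closure and invoking the fixed-point algorithm behind Theorem~\ref{Cly:DI^S-partial} is legitimate, but verifying that the closure is acyclic for an \emph{arbitrary} $\leq$ is precisely where the work lies; the paper devotes roughly half of its proof to explicitly constructing such a partial order for arbitrary $\leq$ (choosing a satisfied disjunct, partitioning the sets containing the $c_i$, handling the $d_j^k$- and $e_j^k$-blocks separately). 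Without that argument, or a direct proof of acyclicity for all linear orders, the correctness of the reduction is not established.
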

\end{minipage}
\end{mdframed}
\end{center}

\begin{proof}
Let $\phi$ be an instance of \textsc{Taut},
i.e., the problem of checking whether a 3-DNF is a tautology.
We construct an instance $(X,\cX)$
of \textsc{Strong $DI^S$-PO-orderability}.
For every variable $X_i$ in $\phi$ we add new elements
$x_i^\mathrm{f}$ and $x_i^\mathrm{t}$ to $X$.
We call the set of these elements $V$.
We will treat every order on $X$ as encoding a truth assignment
by equating $x_i^\mathrm{f} < x_i^\mathrm{t}$ to $X_i$ is true and
$x_i^\mathrm{t} < x_i^\mathrm{f}$ to $X_i$ is false.
Then, we add for every disjunct $C_j$
new variables $y_j^\mathrm{t}, y_j^\mathrm{f}$. We call the set of these elements $Y$.
Essentially, we want to add sets such that $\{y_j^\mathrm{t}\} \prec \{y_j^\mathrm{f}\}$
holds for the minimal partial order satisfying dominance and strict independence
with respect to $\leq$
if and only if $C_j$ is not satisfied by the truth assignment encoded by $\leq$.
Then, we will add sets that lead to a contradiction if 
$\{y_j^\mathrm{t}\} \prec \{y_j^\mathrm{f}\}$ holds for all disjuncts. 

To achieve this, we add for every disjunct $C_j$ elements $c_j$ as well as $d_j^k$
and $e_j^k$ for $k \leq 3$. 
Finally, we add new elements
$u,v,z_1$ and $z_2$. The elements $u$ and $v$ will be used to 
generate a contradiction if 
$\{y_j^\mathrm{t}\} \prec \{y_j^\mathrm{f}\}$ holds for all disjuncts. 
The elements $z_1$ and $z_2$ have the same purpose as the elements
$v_1$ and $v_2$ in the proof of Proposition~\ref{DI^S-NP-hard},
i.e., the preference between $z_1$ and $z_2$ determines if 
$x_i^\mathrm{f} < x_i^\mathrm{t}$ encodes that $x_i$ is set to true or to false.

Next we fix a class of linear orders on $X$ that we call
critical linear orders. We want to show that for a critical linear order $\leq$
there exists a partial order on $\cX$ satisfying dominance and 
strict independence only if $\leq$ encodes a satisfying truth assignment.
As any possible truth assignment is encoded by a critical linear order,
this means that $\cX$ cannot be strongly partially $DI^S$-orderable if 
$\phi$ is not a tautology.
We call any linear order on $X$ that is derived by replacing $V$ with an arbitrary linear 
order on the elements in $V$ in the following linear order
a critical linear order:
\begin{multline*}
u < c_1 < \dots < c_m < y_1^\mathrm{t} < \dots < y_m^\mathrm{t} < \\ d_1^1 < \dots < d_m^3
< V < e_1^1 < \dots < e_m^3 < y_1^\mathrm{f} < \dots < \\y_m^\mathrm{f} < z_1 < z_2 < v
\end{multline*}  
In the following, we write $\preceq_{\infty}$ for the minimal partial order
satisfying dominance and strict independence with respect to some linear order on $V$.

Next, we build the family $\cX$.
First, we make sure, that any order satisfying 
dominance and strict independence with respect to 
a critical linear order $\leq$ must reflect the truth assignment encoded by $\leq$.
To this end, we add
$\{x_i^\mathrm{f}\}$, $\{x_i^\mathrm{f},x_i^\mathrm{t}\}$ and $\{x_i^\mathrm{t}\}$ for all $x_i^\mathrm{f},x_i^\mathrm{t}\in V$.
Then, for every linear order $\leq$
we have
$\{x_i^\mathrm{f}\} \prec_{\infty} \{x_i^\mathrm{t}\}$ if $x_i^\mathrm{f} < x_i^\mathrm{t}$ and,
on the other hand, 
$\{x_i^\mathrm{t}\} \prec_{\infty} \{x_i^\mathrm{f}\}$ if $x_i^\mathrm{t} < x_i^\mathrm{f}$

Next, we add sets such that $\{y_i^\mathrm{t}\} \prec_{\infty} \{y_i^\mathrm{f}\}$
holds for a critical order $\leq$ if the assignment encoded by $\leq$
does not satisfy disjunct $C_i$.
Essentially, we add for every variable $X_i$ that appears positive in disjunct $C_j$
a collection of sets that imply $\{y_i^\mathrm{t}\} \prec_{\infty} \{y_i^\mathrm{f}\}$ if
$\{x_i^\mathrm{t}\} \prec_{\infty} \{x_i^\mathrm{f}\}$ holds. 
Similarly, we add a collection of sets that imply $\{y_i^\mathrm{t}\} \prec_{\infty} \{y_i^\mathrm{f}\}$ if
$\{x_i^\mathrm{f}\} \prec_{\infty} \{x_i^\mathrm{t}\}$ holds for every variable that appears negatively in $C_j$.
Now, let $C_j = X_{i_1} \wedge X_{i_2} \wedge X_{i_3}$ be a disjunct. Then, we add sets 
\[\{y_j^\mathrm{t}\}, \{y_j^\mathrm{t},d_j^k\}, \{y_j^\mathrm{t},d_j^k,x^\mathrm{f}_{i_k}\}, \{d_j^k,x^\mathrm{f}_{i_k}\}\]
for all $k \in \{1,2,3\}$
as well as 
\begin{multline*}
\{x^\mathrm{t}_{i_k},e_j^k\}, \{x^\mathrm{t}_{i_k},e_j^k, z_1\}, \{x^\mathrm{t}_{i_k},e_j^k, z_1, z_2\},\\
 \{e_j^k, z_1, z_2\},
\{e_j^k,z_1, z_2, y_\mathrm{f}^j\},\{z_1, z_2, y_\mathrm{f}^j\}, \{z_2, y_\mathrm{f}^j\},\{y_\mathrm{f}^j\}.
\end{multline*}
If any of the variables occurs negatively in $C_j$, we switch $x_{i_k}^\mathrm{f}$ and $x_{i_k}^\mathrm{t}$
in the construction.
We claim that these sets ensure that 
$\{y_j^\mathrm{t}\} \prec_{\infty} \{y_j^\mathrm{f}\}$ holds for any critical linear order whenever at least one 
literal in $C_j$ is false. 
We have 
\[\{y_j^\mathrm{t}\} \prec_{\infty} \{y_j^\mathrm{t}, d_j^k\} \prec_{\infty} 
\{y_j^\mathrm{t}, d_j^k, x_{i_k}^\mathrm{f}\} \prec_{\infty} \{d_j^k, x_{i_k}^\mathrm{f}\} \prec_{\infty} \{x_{i_k}^\mathrm{f}\}\]
by dominance and, hence, by transitivity $\{y_j^\mathrm{t}\}  \prec_{\infty} \{x_{i_k}^\mathrm{f}\}$.
Similarly, we have $\{x_{i_k}^\mathrm{t}\} \prec_{\infty} \{y_j^\mathrm{f}\}$.
Hence, $\{x_{i_k}^\mathrm{f}\} \prec_{\infty} \{x_{i_k}^\mathrm{t}\}$ implies $\{y_j^\mathrm{t}\} \prec_{\infty} \{y_j^\mathrm{f}\}$
by transitivity.

Now, we add sets that lead to a contradiction if $\{y_j^\mathrm{t}\} \prec_{\infty} \{y_j^\mathrm{f}\}$
holds for all $j$ for a critical linear order $\leq$, i.e., if the assignment encoded
by $\leq$ does not satisfy any disjunct.
Roughly, we build a sequence of sets
$A_1, \overline{A_1},A_2, \overline{A_2}, \dots ,A_m, \overline{A_m}$ such that 
\begin{enumerate}
\item $\{u,v\} \prec_{\infty} A_1$ ,
\item $A_j \prec_{\infty} \overline{A_j}$ if $\{y_j^\mathrm{t}\} \prec_{\infty} \{y_j^\mathrm{f}\}$ for all $j\leq m$,
\item $\overline{A_j} \prec_{\infty} A_{j+1}$ for all $j<m$.
\item $\overline{A_m} \prec_{\infty} \{u,v\}$,
\end{enumerate}
Then, if $\{y_j^\mathrm{t}\} \prec_{\infty} \{y_j^\mathrm{f}\}$
holds for all $j$ we have $\{u,v\} \prec_{\infty} \{u,v\}$ and hence 
no partial order on $\cX$ can satisfy dominance and strict independence 
with respect to $\leq$.
For every $j \leq m$ we have:
\begin{itemize}
\item $A_j = \{u,c_{1},\dots ,c_j,y_j^\mathrm{t},y_{j-1}^\mathrm{f}, y_{j-2}^\mathrm{f}, \dots , y_{1}^\mathrm{f},v\}$ 
\item $\overline{A_j} = \{u, c_{1},\dots ,c_j, y_j^\mathrm{f}, y_{j-1}^\mathrm{f}, y_{j-2}^\mathrm{f}, \dots, y_{1}^\mathrm{f},v\}$
\end{itemize}
First we add 
\[\{u\}, \{u,c_1\}, \{u,c_1,y_1^\mathrm{t}\},\{u,c_1, y_1^\mathrm{t}, v\}, \{u,v\}.\]
Then, we know for any critical linear order 
that
\[
\{u\} \prec_{\infty} \{u,c_1\} \prec_{\infty}\{u,c_1,y_1^\mathrm{t}\} 
\]
holds by dominance and therefore we have 
$\{u,v\} \prec_{\infty} \{u,c_1,y_1^\mathrm{t},v\}$.
This is the desired property (1) i.e., $\{u,v\} \prec_{\infty} A_1$.
Now, we add for every disjunct 
$\{c_j,y_j^\mathrm{t}\}$ and $\{c_j, y_j^\mathrm{f}\}$.
Then, we add new sets that are constructed by incrementally adding to both sets, one by one,
first all elements $c_{j-1}$ to $c_1$, then all elements $y_{j-1}^\mathrm{f}$ to $y_1^\mathrm{f}$
and finally $u$ and $v$ in that order. In other words we add
\begin{multline*}
\{c_{j-1},c_j,y_j^\mathrm{t}\} \mbox{ and } \{c_{j-1},c_j, y_j^\mathrm{f}\}, \\
\{c_{j-2},c_{j-1},c_j,y_j^\mathrm{t}\} \mbox{ and } \{c_{j-2},c_{j-1},c_j, y_j^\mathrm{f}\}, 
\dots,\\
 \{c_{1},\dots ,c_j,y_j^\mathrm{t}\} \mbox{ and } \{c_{1},\dots ,c_j, y_j^\mathrm{f}\}
\end{multline*}
as well as
\begin{multline*}
\{c_{1},\dots ,c_j,y_j^\mathrm{t}, y_{j-1}^\mathrm{f}\} \mbox{ and } \{c_{1},\dots ,c_j, y_j^\mathrm{f},y_{j-1}^\mathrm{f}\},
\dots,\\ \{c_{1},\dots ,c_j,y_j^\mathrm{t},y_{j-1}^\mathrm{f}, \dots , y_{1}^\mathrm{f}\} \mbox{ and } \{c_{1},\dots ,c_j, y_j^\mathrm{f},y_{j-1}^\mathrm{f}, \dots, y_{1}^\mathrm{f}\}
\end{multline*}
and finally
\[\{u,c_{1},\dots ,c_j,y_j^\mathrm{t}, y_{j-1}^\mathrm{f}, \dots , y_{1}^\mathrm{f}\} \mbox{ and } \{u, c_{1},\dots ,c_j, y_j^\mathrm{f}, y_{j-1}^\mathrm{f}, \dots, y_{1}^\mathrm{f}\},\]
as well as
\[\{u,c_{1},\dots ,c_j,y_j^\mathrm{t}, y_{j-1}^\mathrm{f}, \dots , y_{1}^\mathrm{f},v\} \mbox{ and } \{u, c_{1},\dots ,c_j, y_j^\mathrm{f}, y_{j-1}^\mathrm{f}, \dots, y_{1}^\mathrm{f},v\}.\]
By construction  
\[\{u,c_{1},\dots ,c_j,y_j^\mathrm{t}, y_{j-1}^\mathrm{f}, \dots , y_{1}^\mathrm{f},v\}
\prec_{\infty} \{u, c_{1},\dots ,c_j, y_j^\mathrm{f}, y_{j-1}^\mathrm{f}, \dots, y_{1}^\mathrm{f},v\}\]
holds for the minimal partial order satisfying dominance and strict independence for any linear order on $V$
if and only if $\{y_j^\mathrm{t}\} \prec_{\infty} \{y_j^\mathrm{f}\}$ holds for that partial order.
This is the desired property (2) of the sequence.

Next, we add the following sets:
\[\{u,c_{1},\dots ,c_j\}, \{u,c_{1},\dots ,c_{j+1}\} \text{ and } \{u, c_{1},\dots ,c_{j+1}, y_{j+1}^\mathrm{t}\}.\]
Then, we add new sets derived as above by adding to both sets first all 
elements $y_j^\mathrm{f}$ to $y_1^\mathrm{f}$ 
and then $v$, one by one, in that order until we reach
\[\{u,c_{1},\dots ,c_j, y_j^\mathrm{f}, y_{j-1}^\mathrm{f}, \dots, y_{1}^\mathrm{f},v\} \mbox{ and }
\{u, c_{1},\dots ,c_{j+1}, y_{j+1}^\mathrm{t},y_j^\mathrm{f}, y_{j-1}^\mathrm{f}, \dots, y_{1}^\mathrm{f},v\}.\]
Then,
the desired property (3)
\[\{u,c_{1},\dots ,c_j, y_j^\mathrm{f}, y_{j-1}^\mathrm{f}, \dots, y_{1}^\mathrm{f},v\} \prec_{\infty}
\{u, c_{1},\dots ,c_{j+1}, y_{j+1}^\mathrm{t},y_j^\mathrm{f}, y_{j-1}^\mathrm{f}, \dots, y_{1}^\mathrm{f},v\}\]
holds for the critical linear order by strict independence because
\[\{u,c_{1},\dots ,c_j\} \prec \{u,c_{1},\dots ,c_{j+1}\} \prec \{u, c_{1},\dots ,c_{j+1}, y_{j+1}^\mathrm{t}\}\]
holds by dominance.
Finally, we add 
$\{v\}$
and then 
$\{y_{1}^\mathrm{f},v\}$,
$\{y_2^\mathrm{f}, y_{1}^\mathrm{f},v\}$ 
and so on till we reach
\[\{c_1, \dots, c_m,y_m^\mathrm{f}, y_{m-1}^\mathrm{f}, \dots, y_{1}^\mathrm{f},v\}.\]
This forces for any critical linear order the desired property (4):
\[\{u, c_1, \dots, c_m,y_m^\mathrm{f}, y_{m-1}^\mathrm{f}, \dots, y_{1}^\mathrm{f},v\} \prec_{\infty} \{u,v\}.\]
Now, by construction and transitive we have for any critical linear order
\begin{multline*}
\{u,v\} \prec_{\infty} \{u,c_1,y_1^\mathrm{t},v\} \prec_{\infty}\\ \{u,c_1,y_1^\mathrm{f},v\} \prec_{\infty}
\{u,c_1,c_2,y_2^\mathrm{t},y_1^\mathrm{f},v\} \prec_{\infty} \dots \\
\prec_{\infty} \{u, c_1, \dots, c_m,y_m^\mathrm{f}, y_{m-1}^\mathrm{f}, \dots, y_{1}^\mathrm{f},v\} \prec_{\infty} \{u,v\}
\end{multline*}
if (and only if) $\{y_j^\mathrm{t}\} \prec_{\infty} \{y_j^\mathrm{f}\}$ holds for all disjuncts, 
i.e., if the critical linear order encodes an unsatisfying assignment.
It follows that if $\phi$ is not a tautology, then $(X,\cX)$ is not 
strongly partial $DI^S$-orderable.

It remains to show that $(X,\cX)$ is strongly partial $DI^S$-orderable
if $\phi$ is a tautology. Let $\leq$ be a linear order on $X$.
As in the proof of Proposition~\ref{DI^S-NP-hard}
we can assume, by Lemma \ref{Lem:inverse}, that $z_1 < z_2$.
We construct a partial order $\preceq$ that satisfies dominance and strict independence
with respect to $\leq$.
To avoid complicated case distinctions, we will describe the construction 
only for disjuncts with all positive variables.
The only change in construction required for negative variables  
is switching $x_i^\mathrm{f}$ and $x_i^\mathrm{t}$.

First we add the forced preferences between
$\{x_i^\mathrm{f}\}, \{x_i^\mathrm{f},x_i^\mathrm{t}\}$ and $\{x_i^\mathrm{t}\}$,
i.e., $\{x_i^\mathrm{f}\} \prec \{x_i^\mathrm{f},x_i^\mathrm{t}\} \prec \{x_i^\mathrm{t}\}$
if $x_i^\mathrm{f} < x_i^\mathrm{t}$ holds and 
$\{x_i^\mathrm{t}\} \prec \{x_i^\mathrm{f},x_i^\mathrm{t}\} \prec \{x_i^\mathrm{f}\}$
if $x_i^\mathrm{t} < x_i^\mathrm{f}$ holds.
Next, we consider the sets containing an element $d_j^k$.
We add all preferences that are implied by dominance
between the following sets:
\[\{y_j^\mathrm{t}\}, \{y_j^\mathrm{t},d_j^k\}, \{y_j^\mathrm{t},d_j^k,x^\mathrm{f}_{i_k}\} ,\{d_j^k,x^\mathrm{f}_{i_k}\}, \{x^\mathrm{f}_{i_k}\}\]
Then we close under transitivity.
We recall that applying dominance and transitivity can never lead to a contradiction.
Furthermore, we claim that $\prec$ restricted to these sets already satisfies strict independence:
The only possible application of strict independence on these sets is 
that any preference between 
$\{y_j^\mathrm{t}\}$ and $\{x^\mathrm{f}_{i_k}\}$ has to be lifted to
$\{y_j^\mathrm{t},d_j^k\}$ and $\{d_j^k,x^\mathrm{f}_{i_k}\}$. 
By construction however, there can only be a preference 
between 
$\{y_j^\mathrm{t}\}$ and $\{x^\mathrm{f}_{i_k}\}$ forced by dominance and transitivity 
if the same preference holds between
$\{y_j^\mathrm{t},d_j^k\}$ and $\{d_j^k,x^\mathrm{f}_{i_k}\}$
as for $\{y_j^\mathrm{t}\}$ dominance can only force a relation to 
$\{y_j^\mathrm{t},d_j^k\}$ 
and for $\{x^\mathrm{f}_{i_k}\}$ it can only force a relation to $\{d_j^k,x^\mathrm{f}_{i_k}\}$.
Moreover, because we assume that no variable occurs twice in a disjunct,
a preference between $\{y_j^\mathrm{t}\}$ and $\{x^\mathrm{f}_{i_k}\}$ 
cannot later be introduced through sets containing
another $d_{j'}^{k'}$.
Indeed the only preferences we need to add for sets containing different elements $d_{j_1}^{k_1}$ and $d_{j_2}^{k_2}$,
in order to satisfy dominance and transitivity is 
$\{x_i^\mathrm{f}, d_{j_1}^{k_1}\} \prec \{x_i^\mathrm{f}, d_{j_2}^{k_2}\}$
for all $x_i^\mathrm{f}$ and all $d_{j_1}^{k_1}, d_{j_2}^{k_2}$ such that $d_{j_1}^{k_1} < d_{j_2}^{k_2}$
holds. 

Using a similar construction, we can order all sets containing an element
$e_j^k$ if we replace $x_i^\mathrm{f}$ by $x_i^\mathrm{t}$ and $y_j^\mathrm{t}$ by $\{z_1,z_2,y_j^\mathrm{f}\}$.
Finally, we add the enforced preference between $\{z_2,y_j^\mathrm{f}\}$
and $\{y_j^\mathrm{f}\}$ as well as $\{z_1,z_2,y_j^\mathrm{f}\} \prec \{z_2,y_j^\mathrm{f}\}$.
The later is enforced by dominance as we assume $z_1 < z_2$. 
Then, we close everything under transitivity.
By construction, this does not produce any new instances of strict independence.

We now consider the sets containing an element $c_i$ for some $i$.
Observe that $\{z_1,z_2,y_j^\mathrm{f}\} \prec \{z_2,y_j^\mathrm{f}\}$
implies that only $\{x_{i_k}^\mathrm{t}\}  \prec \{y_j^\mathrm{f}\}$ can be forced but not 
$\{y_j^\mathrm{f}\} \prec \{x_{i_k}^\mathrm{t}\}$.
This implies that $\{y_j^\mathrm{f}\} \prec \{y_j^\mathrm{t}\}$
never holds.
Moreover, $\{y_j^\mathrm{t}\} \prec \{y_j^\mathrm{f}\}$ can only be enforced 
if $\{x_i^\mathrm{f}\} \prec \{x_i^\mathrm{t}\}$ holds for a variable occurring in disjunct $C_j$
i.e., if disjunct $C_j$ is not satisfied.
As $\phi$ is a tautology, there is disjunct $C_l$ that is satisfied by the
assignment encoded by $\leq$.
Hence, $\{y_l^\mathrm{t}\}$ and $\{y_l^\mathrm{f}\}$ are incomparable.
We partition the sets containing an element $c_i$ in partitions $P_1, \dots, P_m$
based on the largest $i$ for which they contain $c_i$.
We set $S_1 \prec S_2$ if $S_1 \in P_{i_1}$,
$S_2 \in P_{i_2}$ and
one of the following holds:
\begin{itemize}
\item $c_{i_1} < c_{i_2}$ and $i_1,i_2 < l$
\item $c_{i_1} < c_{i_2}$ and $l < i_1,i_2$
\end{itemize}
Any set that contains $c_i$ also contains $y_i$
except $\{u,c_1, \dots, c_i\}$. Hence, all sets from 
different partitions differ by at least two elements 
except $\{u,c_1, \dots, c_i\}$ and $\{u,c_1, \dots, c_{i+1}\}$.
If dominance forces a preference between these sets,
it is satisfied by construction for $i,i+1 \neq l$.
Now, for any set in any partition $P_i$ such that $i \neq l$
we set $S \prec S'$ if $y_i^\mathrm{t} \in S$ and $y_i^\mathrm{t} \not\in S'$.
This covers all applications of strict independence in a partition.
Finally, we add all preferences that are forced by dominance in a partition
and close under transitivity.
We observe that $S, S \cup \{x\} \in P_i$ implies either 
$y_i^\mathrm{t} \in S, S \cup \{x\}$ or $y_i^\mathrm{t} \not \in S, S \cup \{x\}$,
hence this cannot lead to a contradiction.
Now, for a set $S$ in $P_l$ such that $y_l^\mathrm{t} \in S$ we set 
\begin{itemize}
\item $S' \prec S$ if $S' \in P_i$ for $i < l$ and $c_i < c_l$
\item $S \prec S'$ if $S' \in P_i$ for $i < l$ and $c_l < c_i$
\end{itemize}
Furthermore, for a set $S$ in $P_l$ such that $y_l^\mathrm{t} \not\in S$ we set 
\begin{itemize}
\item $S' \prec S$ if $S' \in P_i$ for $l < i$ and $c_i < c_l$
\item $S \prec S'$ if $S' \in P_i$ for $l < i$ and $c_l < c_i$
\end{itemize}
And finally, we add again all preferences forced by dominance and close by transitivity.
As $\{y_l^\mathrm{t}\}$ and $\{y_l^\mathrm{f}\}$ are incomparable in $\preceq$
this order is consistent. 
Furthermore, we did not add any preferences between any set not containing $c_l$
and sets containing $c_{l+1}$.
Hence, 
$\{u,c_1,y_1^\mathrm{t}, z_1, z_2,v\}$
and $\{u, c_1, \dots, c_m,y_m^\mathrm{f}, \dots, y_{1}^\mathrm{f},v\}$
are incomparable in $\preceq$. This allows us to add any preferences
forced by dominance and strict independence regarding $\{u\},\{v\}$
and $\{u,v\}$ without creating a contradiction.
By construction, $\preceq$ is now a partial order that satisfies
dominance and strict independence.
\end{proof}

We can extend the result to \textsc{Strong $DI^SE$-PO-Orderability}
using the same idea of doubling the singletons as before.

\begin{center}
\begin{mdframed}[style=mystyle]
\begin{minipage}{\textwidth}

\medskip
\begin{Cly}\label{partialCoNP-Exp}
\textsc{Strong $DI^SE$-PO-Orderability} is \coNP-complete.
\end{Cly}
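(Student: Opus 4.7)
The membership in \coNP{} follows from the same argument as in Theorem~\ref{partialCoNP}: a linear order $\leq$ on $X$ for which no partial order on $\cX$ satisfies dominance, strict independence and the extension rule is a polynomial size witness, and by Theorem~\ref{Cly:DI^S-partial} such a $\leq$ can be verified in polynomial time (the extension rule can be checked in polynomial time additionally). The interesting direction is \coNP-hardness, which we prove by adapting the reduction from \textsc{Taut} used in Theorem~\ref{partialCoNP}.

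The plan is to modify the construction by the ``doubling'' technique already employed in Corollaries~\ref{Cly:StrictExt}, \ref{Cly:DI^SE-NP-comp} and \ref{Cly:DI^SE-Pi-comp}: for every element $e \in X$ that appeared inside a singleton in the original reduction (namely $x_i^{\mathrm{t}}, x_i^{\mathrm{f}}, y_j^{\mathrm{t}}, y_j^{\mathrm{f}}, u, v$), we split $e$ into two new elements $e^1, e^2$ with $e^1 < e^2$ in the critical linear order, and replace every occurrence of the singleton $\{e\}$ with the two-element set $\{e^1, e^2\}$. Every other set $S$ containing $e$ in the original reduction is replaced by $S \setminus \{e\} \cup \{e^1, e^2\}$. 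Since no set in the resulting family $\cX'$ is a singleton, the extension rule will be satisfied vacuously.

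To make sure that the previous forced preferences still propagate correctly, I would add, for each doubled pair, a small ``bridge'' of intermediate sets analogous to the construction of Corollary~\ref{Cly:DI^SE-NP-comp}. For instance, to replace the chain $\{x_i^{\mathrm{f}}\} \prec \{x_i^{\mathrm{f}}, x_i^{\mathrm{t}}\} \prec \{x_i^{\mathrm{t}}\}$, we insert the sets $\{x_i^{\mathrm{f},1}, x_i^{\mathrm{f},2}\}$, $\{x_i^{\mathrm{f},1}, x_i^{\mathrm{f},2}, x_i^{\mathrm{t},1}\}$, $\{x_i^{\mathrm{f},2}, x_i^{\mathrm{t},1}\}$, $\{x_i^{\mathrm{f},2}, x_i^{\mathrm{t},1}, x_i^{\mathrm{t},2}\}$, $\{x_i^{\mathrm{t},1}, x_i^{\mathrm{t},2}\}$, so that by dominance a critical linear order forces the full chain from $\{x_i^{\mathrm{f},1}, x_i^{\mathrm{f},2}\}$ to $\{x_i^{\mathrm{t},1}, x_i^{\mathrm{t},2}\}$ exactly when the original order forced $\{x_i^{\mathrm{f}}\} \prec \{x_i^{\mathrm{t}}\}$. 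The same bridging is performed for $y_j^{\mathrm{t}}, y_j^{\mathrm{f}}$ as well as for the terminal singletons $\{u\}$ and $\{v\}$ of the long sequence that produces the contradiction; each long chain built by dominance in Theorem~\ref{partialCoNP} is lengthened by these bridges but still enforces the same endpoint preference.

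The main obstacle will be verifying that the two directions of the original reduction are preserved. For the ``if $\phi$ is not a tautology'' direction, I would have to check that the bridging sets, together with dominance and strict independence, still generate the critical cycle $\{u^1,u^2,v^1,v^2\} \prec_\infty \{u^1,u^2,v^1,v^2\}$ whenever no disjunct is satisfied; this reduces to showing that every bridge implements the intended $\prec_\infty$-preference between its two endpoints, which follows from dominance on the inserted chain. For the converse direction, I would adapt the explicit partial order $\preceq$ constructed in the proof of Theorem~\ref{partialCoNP} by inserting each bridge as a dominance chain between the two images of the corresponding original singleton and ordering the new $e^1, e^2$ pair consistently with $\leq$. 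Since no new singletons are created, the extension rule is trivially satisfied, and an inspection of all new applications of strict independence shows that they are all either implied by dominance on the bridge or reduce to the applications already handled in Theorem~\ref{partialCoNP}.
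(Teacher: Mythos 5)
Your overall strategy -- reuse the \textsc{Taut} reduction of Theorem~\ref{partialCoNP} and get rid of the problematic singletons by splitting elements into pairs connected with dominance ``bridges'' -- is the same strategy the paper uses, and your \coNP-membership argument is fine. However, as literally described, the hardness construction has a gap. Replacing \emph{every} occurrence of a doubled element $e$ by both copies $e^1,e^2$ severs all the one-element-at-a-time applications of dominance and strict independence that drive the reduction: for instance, the gadget chain $\{y_j^\mathrm{t}\}\prec\{y_j^\mathrm{t},d_j^k\}\prec\{y_j^\mathrm{t},d_j^k,x^\mathrm{f}_{i_k}\}\prec\{d_j^k,x^\mathrm{f}_{i_k}\}\prec\{x^\mathrm{f}_{i_k}\}$ turns into a sequence whose consecutive members differ in two elements, so neither dominance nor strict independence forces anything any more. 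Bridges inserted only at the former singleton chains (your $\{x_i^{\mathrm{f},1},x_i^{\mathrm{f},2}\},\{x_i^{\mathrm{f},1},x_i^{\mathrm{f},2},x_i^{\mathrm{t},1}\},\dots$ insertion) do not repair this: you need intermediate sets containing exactly one of the two copies all along the $d_j^k$-, $e_j^k$-, $z$- and $A_j/\overline{A_j}$-chains. This is precisely what the paper's proof does -- it adds sets such as $\{y_j^{\mathrm{t},2},d_j^k\}$, $\{d_j^k,x_i^{\cdot,1}\}$, $\{x_i^{\cdot,2},e_j^k\}$ and $\{z_2,y_j^{\cdot,1}\}$, and in the long chains of the second part it inserts the two copies of each doubled element one after the other, so every step is again a legal single-element application.

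A second divergence: you also double $u$ and $v$. The paper deliberately does not -- $\{u\}$ and $\{v\}$ remain the only singletons, and since $\{u,v\}\in\cX$, dominance plus transitivity already yields $\{u\}\prec\{v\}$ whenever $u<v$ (and the reverse otherwise), so the extension rule is satisfied with no further modification. Doubling $u$ and $v$ is not wrong in principle, but it is extra work your sketch does not carry out: in Theorem~\ref{partialCoNP} the endpoint preferences of the contradiction cycle are obtained by a \emph{single} application of strict independence that adds $v$ (resp.\ $u$) to $\{u\}$ and to the long set; with pairs you need two applications, hence additional intermediate sets such as $\{u^1,u^2,v^1\}$ and the corresponding long set with only $v^1$, and all of these must then be re-verified in the ``$\phi$ is a tautology'' direction to make sure they do not force new comparisons that close the cycle. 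So either adopt the paper's shortcut for $u,v$ or spell out these intermediates explicitly; as the proposal stands, properties (1) and (4) of the cycle are not established for the transformed family.
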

\end{minipage}
\end{mdframed}
\end{center}

\begin{proof}
\textsc{Strong $DI^SE$-PO-Orderability} is in \coNP{}
by the same argument as \textsc{Strong $DI^S$-PO-Orderability}.
We modify the reduction used to prove Theorem~\ref{partialCoNP}
to show that it is also \coNP-hard.
All singletons appearing in that reduction are of the form 
$\{x_i^\mathrm{t}\}$, $\{x_i^\mathrm{f}\}$, $\{y_i^\mathrm{t}\}$, $\{y_i^\mathrm{f}\}$
$\{u\}$ or $\{v\}$ for some $i$. We change the reduction in a way that only 
singletons of the form $\{u\}$ or $\{v\}$ appear.
As before, we can achieve this by doubling the elements of the form
$x_i^a$ and $y_i^a$ for $a \in \{\mathrm{t},\mathrm{f}\}$, that means we replace 
$x_i^a$ by two elements $x_i^{a,1}$ and $x_i^{a,2}$,
and replace $y_i^a$ by $y_i^{a,1}$ and $y_i^{a,2}$. 
As in the proof of Corollary~\ref{Cly:DI^SE-Pi-comp}, 
we add the sets 
$\{x_i^{\mathrm{t},1}, x_i^{\mathrm{t},2}, x_i^{\mathrm{f},1}\}$,
$\{x_i^{\mathrm{t},2}, x_i^{\mathrm{f},1}\}$ and $\{x_i^{\mathrm{t},2},x_i^{\mathrm{f},1}, x_i^{\mathrm{f},2}\}$.
Then $x_i^{\mathrm{t},1} < x_i^{\mathrm{t},2} < 
x_i^{\mathrm{f},1} < x_i^{\mathrm{f},2}$ implies 
$\{x_i^{\mathrm{t},1}, x_i^{\mathrm{t},2}\} \prec \{x_i^{\mathrm{f},1}, x_i^{\mathrm{f},2}\}$.
and $x_i^{\mathrm{t},1} > x_i^{\mathrm{t},2} >
x_i^{\mathrm{f},1} > x_i^{\mathrm{f},2}$ implies 
$\{x_i^{\mathrm{t},1}, x_i^{\mathrm{t},2}\} \succ \{x_i^{\mathrm{f},1}, x_i^{\mathrm{f},2}\}$.
Furthermore, we add $\{y_j^{\mathrm{t},2},d_j^k\}$, $\{d_j^k, x_i^{\mathrm{t},1}\}$,
$\{x_i^{\mathrm{f},2},e_j^k\}$ and $\{z_2, y_j^{\mathrm{t},1}\}$
if variable $X_i$ appears positively in disjunct $j$. If $X_i$ appears 
negatively, we switch $\mathrm{t}$ and $\mathrm{f}$.
Then, $\{x_i^{\mathrm{t},1}, x_i^{\mathrm{t},2}\} \prec \{x_i^{\mathrm{f},1}, x_i^{\mathrm{f},2}\}$
implies $\{y_j^{\mathrm{t},1}, y_j^{\mathrm{t},2}\} \prec \{y_j^{\mathrm{f},1}, x_i^{\mathrm{f},2}\}$ 
as desired.

The second part of the construction must be modified by always adding first 
$y_j^{\mathrm{t},1}$ and then $y_j^{\mathrm{t},2}$ instead of $y_j^{\mathrm{t}}$ and 
similarly 
$y_j^{\mathrm{f},1}$ and then $y_j^{\mathrm{f},2}$ instead of $y_j^{\mathrm{f}}$.
It can be checked that this suffices to force $\{u,v\} \prec \{u,v\}$
whenever no disjunct is satisfied. 

Now, constructing a partial order that satisfies dominance and strict independence 
works as before. Furthermore, $\{u\}$ and $\{v\}$ are the only singletons in the family $\cX$.
As $\{u,v\} \in \cX$, any partial order that satisfies dominance also satisfies the extension rule.
\end{proof}

As mentioned before, independence and strict independence coincide 
in the case of partial orders.
This may be problematic if one wishes to use lifted orders in an application
where pre-orders can not be used, for example, if one wants to combine lifted 
orders with a voting rule that takes partial orders as input.
If in such a situation strict independence either turns out to be not satisfiable
in conjunction with dominance or it is considered to restrictive,
it may seem at first that the order lifting approach can not be used at all. 
However, it is possible to define a weaker 
version of independence that provides a different interpretation
of the monotonicity idea in the case of partial orders.

\begin{Axm}[Weak independence]
For all $A, B \in \mathcal{X}$ and for all $x \in X \setminus (A \cup B)$
such that $A \cup \{x\}, B \cup \{x\} \in \mathcal{X}$:
\[A \prec B  \text{ implies } B \cup \{x\} \not\prec A \cup \{x\}.\]
\end{Axm}

Clearly, any total relation satisfies weak independence if and only if it
satisfies independence. However, 
in contrast to strict independence and independence 
it is always possible to find a partial order
that satisfies dominance and weak independence.

\begin{Prop}
\LetX. Furthermore, let $\preceq$ be a preorder on $\cX$ that satisfies dominance
and independence. Then, the corresponding strict order is an antisymmetric, irreflexive
and transitive binary relation that satisfies dominance and weak independence.
\end{Prop}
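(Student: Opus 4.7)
The plan is to verify each required property in turn, most of which follow essentially by unpacking the definition of the strict part $\prec$ and invoking the assumed properties of $\preceq$. Recall $A \prec B$ iff $A \preceq B$ and $B \not\preceq A$.

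First I would deal with the three structural properties. Antisymmetry of $\prec$ and irreflexivity are essentially vacuous: $A \prec A$ would require $A \not\preceq A$, contradicting reflexivity of $\preceq$; and $A \prec B$ together with $B \prec A$ would simultaneously assert $A \preceq B$ and $A \not\preceq B$. For transitivity, assume $A \prec B$ and $B \prec C$. Transitivity of $\preceq$ gives $A \preceq C$, and if we had $C \preceq A$ then combining with $A \preceq B$ would yield $C \preceq B$, contradicting $C \not\preceq B$; hence $A \prec C$.

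Next I would verify that $\prec$ satisfies dominance. But this is immediate: dominance is already formulated in terms of $\prec$ (the strict relation derived from the underlying preorder $\preceq$), so the assumption that $\preceq$ satisfies dominance literally says that $\prec$ has the required behavior when adding a minimal or maximal element to $A$.

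The remaining step is weak independence, which is the only place where independence of $\preceq$ is used. Suppose $A, B, A\cup\{x\}, B\cup\{x\} \in \cX$ with $x \notin A\cup B$ and $A \prec B$. For contradiction, suppose $B\cup\{x\} \prec A\cup\{x\}$, i.e.\ $A\cup\{x\} \not\preceq B\cup\{x\}$. Independence of $\preceq$ applied to $A \prec B$ directly yields $A\cup\{x\} \preceq B\cup\{x\}$, contradicting the previous line. Hence $B\cup\{x\} \not\prec A\cup\{x\}$, which is exactly weak independence.

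No step here is genuinely difficult; the only subtlety is making sure one distinguishes the derived strict relation $\prec$ from $\preceq$ carefully when applying independence (whose conclusion is a $\preceq$-statement, not a $\prec$-statement) — this is precisely why the weaker conclusion (weak independence instead of independence) is the appropriate one for strict orders extracted from preorders.
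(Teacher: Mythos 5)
Your proof is correct and follows essentially the same route as the paper: the structural properties come from unpacking the strict part of a preorder, dominance transfers because it is stated in terms of the strict relation, and weak independence follows from applying independence of $\preceq$ to obtain $A\cup\{x\} \preceq B\cup\{x\}$, which rules out a strict preference in the other direction. The only cosmetic difference is that you phrase the last step as a contradiction while the paper does a short case analysis ($\sim$ or $\prec$), which is the same argument.
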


\begin{proof}
By definition the corresponding strict order $\prec_s$
of $\preceq$ is an antisymmetric, irreflexive, transitive relation.
Furthermore, by the definition of dominance, $\preceq$ 
satisfies dominance if and only if $\prec_s$
satisfies dominance. Now assume $A,B, A \cup \{x\}, B \cup \{x\} \in \cX$ and
$A \prec_s B$. Then also $A \prec B$ must hold. Hence, by independence
either $A \cup \{x\} \sim B \cup \{x\}$ or $A \cup \{x\} \prec B \cup \{x\}$.
In both cases we have $B \cup \{x\} \not\prec_s A \cup \{x\}$.
Therefore, $\prec_s$ satisfies weak independence.
\end{proof}

If we add to $\prec_s$ the preference $A \sim A$ for all $A \in \cX$
we get a partial order that satisfies dominance and weak independence.

\subsection{Succinct Domain Restrictions}\label{sec:succ}

The results in the previous sections assume that the family of sets is given explicitly.
However, in many applications, the family of sets is instead only given implicitly,
via some condition that has to be satisfied by the sets in order to be admissible.
In this section, we consider a specific, well studied succinct representation -- 
families represented by boolean circuits -- and show that it can lead to a massive
blow up in complexity.
While boolean circuits are not commonly used for succinct representation
in practice, they are a useful tool to determine the effect that succinct representation can have in the worst case
as they are very expressive. In particular, boolean circuits
can simulate many succinct representation that are commonly used in practice, like propositional formulas or knowledge compilation languages such as NNFs \citep{Darwiche02}.
This worst case behavior is especially interesting in the light of the results of \citet{JAIR}
which show that many of the studied problems become polynomial time solvable if the families of sets is represented
as the set of connected subgraphs of a graph.
Now, how can boolean circuits be used for succinct representation?
Consider for example the following, simple family:
\[\{1,2\}, \{1,2,4\}\]
Assume that we know that we have a family of sets over four elements.
Then the family above can represented by the boolean string $11001101$.
This string in turn can be represented by a boolean circuit,
such that on the input of the number $i-1$ (in binary) the circuit 
outputs the $i$-th bit of the string. The circuit pictured in Figure~\ref{fig:booleanCirc}
is an example for a circuit representing the string $11001101$.

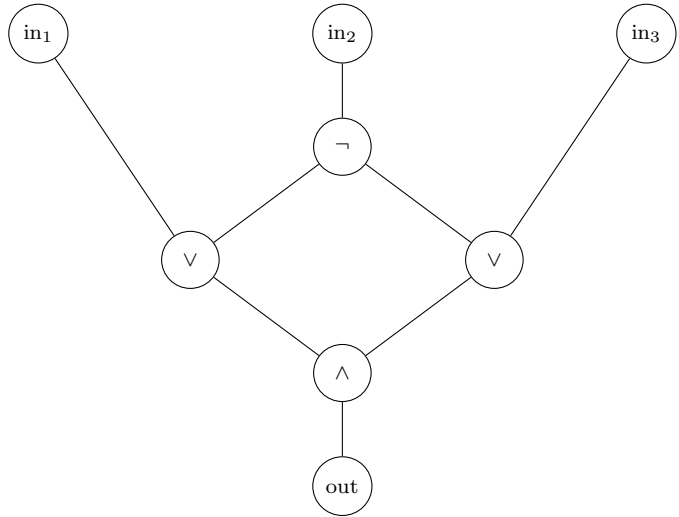
\begin{figure}
\begin{center}
\begin{tikzpicture}[font={\footnotesize}]
\node[draw,circle,minimum width = 5ex] (in1) at (0,0) {in$_1$};
\node[draw,circle,minimum width = 5ex] (in2) at (4,0) {in$_2$};
\node[draw,circle,minimum width = 5ex]  (in3) at (8,0) {in$_3$};
\node[draw,circle,minimum width = 5ex]  (neg) at (4,-1.5) {$\neg$};
\node[draw,circle,minimum width = 5ex] (vee1) at (2,-3) {$\vee$};
\node[draw,circle,minimum width = 5ex] (vee2) at (6,-3) {$\vee$};
\node[draw,circle,minimum width = 5ex] (and)  at (4,-4.5) {$\land$};
\node[draw,circle,minimum width = 5ex] (out)  at (4,-6) {out};
\draw (in1) -- (vee1) ;
\draw (in2) -- (neg) ;
\draw (in3) -- (vee2) ;
\draw (neg) -- (vee1) ;
\draw (neg) -- (vee2) ;
\draw (vee1) -- (and) ;
\draw (vee2) -- (and) ;
\draw (and) -- (out) ;
\end{tikzpicture}  
\end{center}
\caption{A boolean circuit representing the string $11001101$}
\label{fig:booleanCirc}
\end{figure}

In general, this representation can be exponentially smaller than the string 
(and hence the family of sets), as we only need $\log(n)$ many input gates to
represent a string of length~$n$. However, while every string of length~$n$ can be represented
by a boolean circuit with $n$ gates only strings with a enough ``logical structure'' can be succinctly represented
using only $O(\log(n))$ logical gates in addition to the $\log(n)$ input gates.

Before we show our results, we quickly review the 
basic results on succinctly represented problems from the literature
and recall the definitions and lemmas we need.
The study of succinct problems goes back to
\citet{Wigderson1983} for graphs that are succinctly represented by a boolean circuit.
Later this approach was extended by \citet{balcazar1992} to arbitrary problems that are 
succinctly represented by boolean circuits
in the following way.

\begin{Def}
We say a boolean circuit $C_w$ with two output gates represents a binary string $w$
if for every input of a binary number $i$ the following holds:
\begin{itemize}
\item the first output is $1$ if and only if $i \leq |w|$
\item if the first output is $1$ then the second output equals the $i$-th bit of $w$.
\end{itemize}

The succinct version $Q_S$ of a problem $Q$ is: Given a boolean circuit $C_w$
representing a boolean string $w$ decide whether $w \in Q$.
\end{Def}

For example, \textsc{Succinct Sat} -- the succinct version of \textsc{Sat} --
can be defined as follows:

\begin{problem}
  \problemtitle{Succinct Sat}
  \probleminput{A boolean circuit $C_w$ representing a word $w$.}
  \problemquestion{Is the 3-CNF represented by $w$ satisfiable?\footnotemark}
\end{problem}
\footnotetext{It is not important what specific encoding is used as long as
the number of variables and clauses as well as the i-th variable in the j-th clause
can be read in \texttt{polylog-time}. Any reasonable encoding will satisfy this requirement.}

\textsc{Succinct Sat} has been shown to be \NEXP-complete \citep{papadimitriou1986,papadimitriou1994computational}.
Hence, \textsc{Succinct Taut} is \coNEXP-complete.
Succinct versions of the problems considered
in this paper be can defined similarly.
The main tool to determine the complexity of succinct problems are so-called Conversion Lemmas. 
We use the Conversion Lemma by \citet{balcazar1992}.
Stronger versions of this lemma 
exist, for example 
by \citet{veith1998}.
However, the Conversion Lemma of \citet{balcazar1992} 
suffices for our purposes and
has the advantage 
that only comparably simple reductions are used,
namely \texttt{ptime} reductions and \texttt{polylog-time} reductions.
\texttt{polylog-time} reductions are reductions that --
given random access to the input --
need only $O(\log^c(n))$-time to output an arbitrary bit of the output.
The following definition is taken from \citep{Murray2017}.

\begin{Def}
An algorithm $R:\{0,1\}^*\times \{0,1\}^* \to \{0,1,\star\}$ is a 
\texttt{polylog-time} reduction from $L$ to $L'$ if there are constants $c \geq 1$ and $k \geq 1$
such that for all $x \in \{0,1\}^*$,
\begin{itemize}
\item $R(x,i)$ has random access to $x$
\item $R(x,i)$ runs in $O((\log(|x|))^k)$ time for all
$i \in \{0,1\}^{\lceil 2c\log(|x|)\rceil}$
\item there is an $l_x \leq |x|^c+c$ such that $R(x,i) \in \{0,1\}$ and
for all $i\leq l_x$ and $R(x,i) = \star$ for all $i>l_x$.
\item $x \in L$ iff $R(x,1)\cdot R(x,2)\cdots R(x,l_x)\in L'$.
\end{itemize}
\end{Def}

Here $\cdot$ is the string concatenation and $\star$ is the out of bounds character 
that marks the end of a string.
Now, we can formulate the Conversion Lemma of \citet{balcazar1992}.

\begin{Lem}[Conversion Lemma]\label{Conversion}
Let $Q$ and $Q^*$ be decision problems.
If there is a \texttt{polylog-time} reduction from $Q$ to $Q^*$ then
there is a \texttt{ptime} reduction from $Q_S$ to $Q_S^*$.
\end{Lem}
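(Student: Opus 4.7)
The plan is to show that, given a boolean circuit $C_w$ representing a string $w$ (an instance of $Q_S$), one can construct in polynomial time a boolean circuit $C_{w'}$ that represents the string $w' := R(w,1) \cdot R(w,2) \cdots R(w, l_w)$. By the last clause in the definition of a \texttt{polylog-time} reduction, $w \in Q$ iff $w' \in Q^*$, so $C_{w'}$ is the desired instance of $Q_S^*$, and the mapping $C_w \mapsto C_{w'}$ witnesses a \texttt{ptime} reduction from $Q_S$ to $Q_S^*$.

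The circuit $C_{w'}$ is built as follows. It takes as input an index $i$, encoded in binary with $\lceil 2c \log|w|\rceil$ bits (sufficient since $l_w \leq |w|^c + c$), and must output (a) whether $i \leq l_w$ and (b) the $i$-th bit of $w'$ when this is meaningful. Internally, $C_{w'}$ simulates the computation $R(x,i)$ with $x$ accessed via $C_w$: because $R$ runs in time $O((\log|w|)^k)$ with random access to $x$, it can be realised by a boolean circuit of size $\mathrm{poly}(\log|w|)$ via the standard conversion of random-access Turing machine computations into circuits. Every random-access query that $R$ makes to a bit of $x$ is replaced by a fresh copy of $C_w$, whose input lines are wired to the bits of the query index maintained by the simulation; the two output gates of $C_w$ supply both the queried bit and the in-range signal. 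The first output gate of $C_{w'}$ is set to $1$ iff the simulation does not output the symbol $\star$, and the second output gate is the value returned by the simulation of $R$.

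For the size and time analysis, the simulation uses $\mathrm{poly}(\log|w|)$ gates plus one copy of $C_w$ per random-access query, of which there are at most $\mathrm{poly}(\log|w|)$. Hence the total size of $C_{w'}$ is at most $|C_w| \cdot \mathrm{poly}(\log|w|)$; since $\log|w| \leq |C_w|$ (any circuit representing a string of length $n$ has at least $\log n$ input gates), this bound is polynomial in $|C_w|$. Moreover, the construction consists in instantiating a fixed circuit template derived once and for all from $R$ and then substituting copies of $C_w$ at the query points, which can evidently be carried out in time polynomial in $|C_w|$.

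The main technical obstacle is the simulation step: turning an arbitrary \texttt{polylog-time} random-access algorithm $R$ into a polylog-size circuit, wiring its query ports to copies of $C_w$, and propagating the $\star$ symbol cleanly into the first output gate. This is a standard but careful construction; once in place, correctness is immediate from the correctness of $R$ and the semantics of $C_w$.
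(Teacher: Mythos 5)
Your argument is correct, but note that the paper does not prove this lemma at all: it is quoted as a known result of \citet{balcazar1992}, and your construction is essentially the standard proof from that literature (simulate the \texttt{polylog-time} reduction $R$ by a circuit of size $\mathrm{poly}(\log|w|)$ and answer each random-access query to $w$ by splicing in a copy of $C_w$, using the $\star$-detection to drive the in-range output gate). The only detail you gloss over is that the template for simulating $R$ may need the value of $|w|$ (the machine's time bound and index length are stated in terms of $|x|$); this is harmless, since the reduction can recover $|w|$ exactly in time polynomial in $|C_w|$ by binary search on the first output gate of $C_w$ and hard-wire it into $C_{w'}$. With that observation your size and correctness analysis goes through, so the proposal matches the intended (cited) proof rather than offering a genuinely different route.
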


We can use the Conversion Lemma 
to prove the following theorem.

\begin{center}
\begin{mdframed}[style=mystyle]
\begin{minipage}{\textwidth}

\medskip
\begin{Thm}\label{Succinct1}
\textsc{Succinct $DI^S$-LO-Orderability} is \NEXP-complete.
\textsc{Succinct strong $DI^{S}$-LO-Orderability} 
is \NEXP-hard.
The same holds if we add the extension rule.
\end{Thm}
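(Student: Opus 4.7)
The plan is to apply the Conversion Lemma (Lemma~\ref{Conversion}) to the reductions already developed in the previous subsection, together with a straightforward guess-and-check argument for \NEXP-membership of the $\leq$-orderability version.

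For \NEXP-membership of \textsc{Succinct $DI^S$-LO-Orderability}, I would observe that a boolean circuit $C_w$ of size $n$ represents an instance $(X,\cX,\leq)$ whose size is at most exponential in $n$. A nondeterministic algorithm with exponential time can explicitly construct $(X,\cX,\leq)$ by evaluating $C_w$ on all inputs, guess a linear order $\preceq$ on $\cX$, and verify in exponential time that $\preceq$ satisfies dominance and strict independence. The same works with the extension rule, so \textsc{Succinct $DI^SE$-LO-Orderability} is also in \NEXP.

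For \NEXP-hardness, I would reduce from \textsc{Succinct Sat}. By the Conversion Lemma, it suffices to exhibit a \texttt{polylog-time} reduction from \textsc{Sat} to \textsc{$DI^S$-LO-Orderability}, and for that I would use the reduction of Proposition~\ref{DI^S-NP-hard} (respectively its extension-rule variant from Corollary~\ref{Cly:StrictExt}). The reduction of Proposition~\ref{DI^S-NP-hard} is highly regular: every element of $X$ and every set in $\cX$ is indexed in a canonical way by a small tuple (a variable index $i\le n$, a clause index $j\le m$, a class label in $\{1,2,3\}$, and a constant amount of further data such as a truth value or a position in one of the chains $A_1,\dots,A_l$ or $B_1,\dots,B_p$). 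The critical linear order $\leq$ is equally regular. Given random access to the input 3-CNF $\phi$ (with the number of variables, number of clauses, and the $k$-th literal of the $j$-th clause readable in \texttt{polylog-time}), an arbitrary bit of the encoding of $(X,\cX,\leq)$ can be computed by (i) decoding from the bit-index which element, set, or order comparison is being queried, which is a constant number of integer decodings, and (ii) looking up the corresponding data in $\phi$ using at most $O(1)$ random-access probes and $O(\mathrm{polylog})$ arithmetic. Hence the reduction is \texttt{polylog-time}, and the Conversion Lemma yields a \texttt{ptime} reduction from \textsc{Succinct Sat} to \textsc{Succinct $DI^S$-LO-Orderability}, giving \NEXP-hardness. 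Since the reduction already produces a fixed $\leq$ for which the answer depends only on the satisfiability of $\phi$, and since $(X,\cX)$ is strongly orderable iff $(X,\cX,\leq^*)$ is $\leq^*$-orderable for \emph{every} $\leq^*$, the very same reduction also shows \NEXP-hardness of \textsc{Succinct strong $DI^S$-LO-Orderability}: the instance $(X,\cX)$ is strongly orderable iff $\phi$ is satisfiable, because the ``if'' direction was proved in Proposition~\ref{DI^S-NP-hard} for arbitrary $\leq'$, while the ``only if'' direction uses the specific $\leq$.

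The extension-rule variants follow by plugging in the singleton-doubling trick of Corollary~\ref{Cly:StrictExt} instead; this modification is itself local (it replaces each singleton with a fixed constant-size block of sets) so it preserves the \texttt{polylog-time} property. The main obstacle in this whole plan is the careful bookkeeping of the bit-level encoding in step (ii): one must settle on a concrete, regular encoding of $X$, $\cX$ and $\leq$ (e.g.\ fixed-width tuples for each class of element and set) so that decoding a bit-index really needs only \texttt{polylog} work and a constant number of random-access probes to the circuit representing $\phi$. Once that encoding is fixed, the rest of the proof is routine.
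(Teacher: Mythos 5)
Your proposal is correct and follows essentially the same route as the paper: exponential expand-guess-check for \NEXP-membership, and \NEXP-hardness by verifying that the reduction of Proposition~\ref{DI^S-NP-hard} (and its singleton-doubling variant for the extension rule) is a \texttt{polylog-time} reduction from \textsc{Sat}, so the Conversion Lemma transfers hardness from \textsc{Succinct Sat}, with the strong-orderability claim following because the construction is orderable for every linear order exactly when $\phi$ is satisfiable. The bookkeeping you flag (fixing a regular bit-level encoding of $X$, $\cX$, $\leq$ and a reasonable encoding of the 3-CNF) is precisely what the paper's proof spells out, and your treatment of it is adequate.
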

\end{minipage}
\end{mdframed}
\end{center}

\begin{proof}
\textsc{Succinct $DI^{S}$-LO-Orderability} can be solved in \NEXP-time 
by explicitly computing the family $\cX$ and then solving the (exponentially larger)
explicit problem in \NP-time.

For the hardness, we only have to check that the presented reduction is 
computable in \texttt{polylog-time}. Then, by the Conversion Lemma, 
there is a ptime reduction from \textsc{Succinct Sat} to 
the problems mentioned above.
The \NEXP-hardness of both problems then follows 
as \textsc{Succinct Sat} is known to be \NEXP-complete. 
We have to show that we can compute a single bit of the output 
in \texttt{polylog-time} if we have random access to the input.
For this, we have to take the binary representation of \textsc{Sat}
into account. Unfortunately, neither \citet{papadimitriou1986} nor
\citet{papadimitriou1994computational}
specify a binary representation for the \NEXP-hardness proof.
However, the proof of \NEXP-hardness is not sensitive to the representation
as long as it is reasonable. The same is true for this proof.
Reasonable means in our context that it is possible to determine 
the number of variables $n$ and clauses $m$ in \texttt{polylog-time}.
For any sensible encoding of 3-CNF
this is either explicitly encoded or can be determined via binary search.
Furthermore, we assume that one only needs \texttt{polylog-time} 
to read the i-th variable in the j-th clause.
This is trivially true if we assume that 
every clause is encoded by the same amount of bits.
An example of an encoding that is reasonable in our sense
is the following encoding: The encoding starts with the number of variables $n$
encoded in binary. Then, the formula is encoded as follows
\begin{itemize}
\item Each variable is encoded by a unique string of length $\log(n) +1$.
\item Furthermore, let $s_0$ be a string of length $\log(n) +1$ that does not
encode a variable.
\item Each block of $3(\log(n) + 1) + 3$ bits encodes one clause $C_j$,
the first bit determines whether the first variable appears positively
or negatively in $C_j$ and the next $\log(n) +1$ equal the encoding of the first
variable. The remaining literals are encoded similarly. If the clause contains
less than $3$ literals, then the empty slots are filled with $s_0$.
\end{itemize}
It is easy to see that the proof by \citet{papadimitriou1994computational} 
of the \NEXP-hardness of \textsc{Succinct Sat}
works for such an encoding.

Now, we fix a binary representation for 
instances of \textsc{$DI^{S}$-LO-orderability} resp.\ \textsc{Strong $DI^{S}$-LO-orderability}.
First, we encode the number of elements $k$ of $X$ in binary.
Then, the family $\cX$ is encoded as a series of strings of 
length $k$, where a $1$ in position $l$ means the $l$-th element
of $X$ is in the set and a $0$ in position $l$ means the $l$-th element
is not in the set. For an instance of \textsc{$DI^{S}$-LO-orderability},
the linear order $\leq$ is given by
the natural order on these positions. 

First, observe that the size of $X$ is $4n +12m +3$
and the size of $\cX$ is $p(n,m)$ for some polynomial $p(x,y)$.
Therefore, we can determine it in \texttt{polylog-time}.
Now, assume we want to decide whether the $i$-th bit of the output is 
$0$ or $1$. It is clear that this can be done in \texttt{polylog-time}
if the $i$-th bit is part of the representation of the size of $X$.
Assume that the $i$-th bit determines 
if the $l$-th element $x$ is part of a $k$-th set $A$. 
We can assume that we fixed an order in which we generate the sets
in $\cX$ such that we can compute from $m$, $n$ and $i$
which set $A$ is supposed to be.
Say we first compute the Class 1 sets. For these, 
we only need to know the number of variables in the \textsc{Sat}
instance. Next, we compute the Class 2 sets. Observe that,
for a fixed number of variables and clauses, the
only elements that depends on the \textsc{Sat} instance 
are the elements of the form $x_j^+$ or $x_j^-$.
For these elements, it suffices to know 
if $v_j$ occurs (positively or negatively) in a specific clause in the right position.
In our exemplary encoding of \textsc{Sat} it suffices to read
$\log(n) +2$ bits to determine this. The position of these bits only depends
on which position in which clause we want to check.
Finally, the Class 3 sets are all either of the form $\{z_i^a\}$,
$\{z_i^a, \min_i^a\}$ resp. $\{z_i^a, \min_i^a\}$ or they equal
a Class 2 set with a specific number of elements removed. 
In the first case, the elements of of the set again only depend
on the number of variables and clauses and in the second case
the elements can be computed the same way as a Class 2,
as the removed elements also only depend on on the number of variables and clauses.
\end{proof}

We observe that the argument above does not use any properties of
the reduction that are unique to $DI^S$-LO-orderability. Therefore,
it is straightforward to check that the hardness 
of the other strong and $\leq$-orderability properties can be lifted in the same way.
Moreover, $\NEXP$-membership follows for all $\leq$-orderability problems
by the same argument as above.

\begin{center}
\begin{mdframed}[style=mystyle]
\begin{minipage}{\textwidth}

\medskip
\begin{Cly}\label{Succinct3}
\textsc{Succinct $DI$-LO-orderability},
\textsc{Succinct $DI$-WO-Orderability},
\textsc{Succinct $DI^{S}$-WO-Orderability} are \NEXP-complete.
\textsc{Succinct strong $DI$-LO-orderability}, \textsc{Succinct strong $DI$-WO-Orderability} 
\textsc{Succinct strong $DI^{S}$-WO-Orderability} 
are \NEXP-hard.
The same holds if we additionally add the extension rule.
\end{Cly}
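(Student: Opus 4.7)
The plan is to mirror the strategy used in Theorem~\ref{Succinct1} and simply replay it on each of the listed variants. For the \NEXP{} upper bounds on the three $\leq$-orderability problems, I would first unfold the given circuit $C_w$ into its explicit string $w$ (of size at most $2^{|C_w|}$) and then run the \NP{} algorithms that witness membership of \textsc{$DI^{S}$-WO-Orderability}, \textsc{$DI$-WO-Orderability} and \textsc{$DI$-LO-Orderability} (from Theorem~\ref{DI^S-NP-comp}, Theorem~\ref{PI2} and Corollary~\ref{PI2-LO}); together this gives a \NEXP{} procedure. The extension-rule variants are handled identically because verifying the extension rule on an explicit instance is still polynomial. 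For the strong variants no matching upper bound is claimed, as the non-succinct versions are already $\Pi_2^p$-complete, so only \NEXP-hardness is asserted.

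For the \NEXP{}-hardness claims I would appeal to the Conversion Lemma (Lemma~\ref{Conversion}). It suffices to verify that each of the underlying \NP-hardness reductions is in fact a \texttt{polylog}-time reduction from \textsc{Sat}. The reductions in question are: for \textsc{$DI$-LO-Orderability} and \textsc{$DI$-WO-Orderability}, the variant established in Theorem~\ref{PI2} and Corollary~\ref{PI2-LO}; for \textsc{$DI^{S}$-WO-Orderability}, the one used in Theorem~\ref{DI^S-NP-comp}; and for the strong variants, the reductions in Proposition~\ref{Prop:StronpNP}, Theorem~\ref{PI2} and Corollary~\ref{PI2-LO}, each of which already maps unsatisfiable instances of \textsc{Sat} to families that fail to be strongly orderable while mapping satisfiable instances to strongly orderable families. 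All of these reductions are small local modifications of the construction in the proof of Proposition~\ref{DI^S-NP-hard}, so the same bit-by-bit indexing argument given inside the proof of Theorem~\ref{Succinct1} carries over: computing any single bit of the output requires only $O(\log(n+m))$-size arithmetic on variable and clause indices, plus a constant number of $O(\log n)$-bit lookups into the input encoding of $\phi$ to determine which variable appears in which position of which clause. Applying Lemma~\ref{Conversion} then turns \textsc{Succinct Sat} into a \texttt{ptime} reduction to each succinct problem, proving \NEXP-hardness.

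The main obstacle is bookkeeping: one must check this polylog-time property uniformly for \emph{all} of the modified reductions, in particular the doubling-of-singletons trick used for the extension rule (Corollaries \ref{Cly:StrictExt}, \ref{Cly:DI^SE-NP-comp}, \ref{Cly:DI^SE-Pi-comp}, \ref{DIE-PI-comp}) and the additional sequences $A_1,\dots,A_l$, $B_1,\dots,B_p$ and the Class~4 gadgets introduced in Proposition~\ref{Prop:StronpNP}. Each of these adds only a bounded number of sets per variable or per clause, and the content of each new set is a deterministic function of a constant number of indices into $\phi$; hence the $i$-th bit of the output can be produced from the binary representation of $i$ by constant-depth arithmetic and a constant number of $O(\log(n+m))$-bit input probes. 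Once this has been verified uniformly, the conclusion for every row in the ``succ.'' block of Table~\ref{tab:1} follows immediately from Lemma~\ref{Conversion} and the unfolding argument above.
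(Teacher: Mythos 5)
Your proposal matches the paper's own argument: the paper proves this corollary exactly by observing that the polylog-time computability check in Theorem~\ref{Succinct1} uses nothing specific to the $DI^S$-LO reduction, so the Conversion Lemma lifts all the SAT-based hardness reductions (including the extension-rule and independence variants), while \NEXP-membership for the $\leq$-orderability problems follows by explicitly unfolding the circuit and running the corresponding \NP{} algorithm. Your additional bookkeeping about the doubling-of-singletons and Class~4 gadgets only spells out what the paper calls ``straightforward to check,'' so the approach is essentially identical and correct.
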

\end{minipage}
\end{mdframed}
\end{center}

Moreover, we note that the Conversion Lemma can also be applied the same way
to the reduction from \textsc{Taut} to \textsc{Strong partial $DI^S$-Orderability}.

\begin{center}
\begin{mdframed}[style=mystyle]
\begin{minipage}{\textwidth}

\medskip
\begin{Thm}\label{Succinct2}
\textsc{Succinct strong $DI^S$-PO-Orderability} 
is \coNEXP-complete.
\end{Thm}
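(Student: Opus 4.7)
The plan is to mirror the argument used for Theorem~\ref{Succinct1}, but with the reduction from \textsc{Taut} employed in Theorem~\ref{partialCoNP} in place of the reduction from \textsc{Sat}. For membership in \coNEXP, note that given a boolean circuit $C_w$ representing an instance $(X,\cX)$ of \textsc{Strong $DI^S$-PO-Orderability}, we can expand the circuit to write $(X,\cX)$ explicitly in exponential time, and then invoke the \coNP-algorithm guaranteed by Theorem~\ref{partialCoNP}; this yields a \coNEXP-algorithm overall.

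For \coNEXP-hardness I would reduce from \textsc{Succinct Taut}, which is \coNEXP-complete. The crucial step is to verify that the reduction from \textsc{Taut} to \textsc{Strong $DI^S$-PO-Orderability} given in the proof of Theorem~\ref{partialCoNP} can be implemented as a \texttt{polylog-time} reduction in the sense of the Conversion Lemma (Lemma~\ref{Conversion}); once this is done, the Conversion Lemma immediately produces a \texttt{ptime} reduction from \textsc{Succinct Taut} to \textsc{Succinct strong $DI^S$-PO-Orderability}.

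To check the polylog-time bound, I would fix the same reasonable binary encoding of 3-DNFs as in the proof of Theorem~\ref{Succinct1}, and the same binary encoding of instances of \textsc{Strong $DI^S$-PO-Orderability} (the size of $X$ in binary, followed by the characteristic vectors of the sets in $\cX$). The total size of $X$ depends only on the number $n$ of variables and the number $m$ of disjuncts of the input DNF, so it can be computed in polylog-time. Each set produced by the reduction falls into one of a small number of syntactic classes (the sets $\{x_i^a\}$, $\{x_i^\mathrm{f},x_i^\mathrm{t}\}$, the sets of the form $\{y_j^\mathrm{t},d_j^k,x^{a}_{i_k}\}$, the sets involving $e_j^k,z_1,z_2,y_j^\mathrm{f}$, and the long sequence $A_1,\overline{A_1},\dots,A_m,\overline{A_m}$ together with their connecting chains), and from the output index $i$ one can decode in polylog-time which class the relevant set belongs to, which indices $(j,k)$ parameterize it, and which element position inside the characteristic vector is being queried. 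The only class-specific information one needs from the input DNF is the identity and sign of the literal occurring in a given position of a given disjunct, which is polylog-time readable under the chosen encoding.

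The main (and only nontrivial) obstacle is this bookkeeping: one must lay out the sets in $\cX$ in a fixed, indexable order so that, given $i$, both the class of set being addressed and the coordinate within that set can be recovered using only $O(\log(n+m))$ bit operations, just as was done for Theorem~\ref{Succinct1}. Once the layout is fixed, the verification that every query is answerable in polylog-time is routine, and the Conversion Lemma finishes the \coNEXP-hardness argument.
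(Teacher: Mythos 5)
Your proposal is correct and matches the paper's approach: the paper proves this result precisely by observing that the Conversion Lemma applies in the same way to the \textsc{Taut} reduction of Theorem~\ref{partialCoNP} (with membership following from expanding the circuit and solving the exponentially larger explicit instance), which is exactly what you do, only spelled out in somewhat more detail regarding the polylog-time indexing of the constructed family.
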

\end{minipage}
\end{mdframed}
\end{center}

This analysis still leaves some gaps. It can be shown that
\textsc{Succinct Strong $DI^{S}$-WO-Orderability} is in $\Pi_2^E$, the second
level of the exponential hierarchy, by a similar argument as the one used to show that 
\textsc{Succinct $DI^{S}$-WO-Orderability} is in \NEXP.
It seems very likely that this upper bound is tight and that 
\textsc{Succinct Strong $DI^{S}$-WO-Orderability} is indeed $\Pi_2^E$-complete. However, 
as the succinct version of $\Pi^2$-\textsc{Sat} is, to the best of our knowledge, not known
to be $\Pi_2^E$-hard, the Conversion Lemma does not suffice to show this. Closing this gap is,
therefore, left to future work.
The same holds for the other problems regarding strong orderability
where the lifted order needs to be a linear or weak order.

On the other hand, we do not provide a lower bound for 
\textsc{Succinct $DI^{S}(E)$-PO-Orderability},
because we do not have a lower bound on the complexity of 
\textsc{$DI^{S}(E)$-PO-Orderability} even in the non-succinct case.
We would conjecture that \textsc{$DI^{S}(E)$-PO-Orderability} is \P-complete,
but leave a proof of this to future work.

\subsection{Weak Orderability}\label{sec:weak}

In this section, we consider weak orderability.
We will restrict our attention to 
weak orderability with respect to dominance and strict independence,
resp.\ with respect to dominance, strict independence and the extension
rule. The question if these results also hold for regular independence
is left for future work.
First, we show that \textsc{Weak $DI^S$-WO-Orderability}
is \NP-complete. This requires a completely different construction.
We reduce this time from \textsc{Betweenness}
which was shown to be \NP-hard by \citet{opatrny1979}.

\begin{problem}
\problemtitle{Betweenness}
\probleminput{A set $V = \{v_1, v_2, \dots, v_n\}$ and a set of triples $R\subseteq V^3$.}
\problemquestion{Does there exist a linear order on $V$ such that 
$a<b<c$ or $a>b>c$ holds for all $(a,b,c)\in R$?}
\end{problem}

The idea of the following reduction is as follows:
From a \textsc{Betweenness} instance $(V,R)$
we construct an
\textsc{Weak $DI^S$-WO-Orderability} instance $(X,\cX)$. 
The set $X$ is constructed by adding some 
auxiliary variables to $V$. Then we use these
auxiliary variables to build for every triple $(a,b,c)$
in $R$ a collection of sets that are not $\leq$-$DI^S$-orderable
for any linear order on $X$ that violates the betweenness condition
for $a,b,c$, i.e., if $b < a,c$ or $a,c < b$
holds. The union of these collections for every triple in $R$
will be the family $\cX$. Then we have to show that 
$\cX$ is $\leq$-$DI^S$-orderable if 
$a < b< c$ or $a> b> c$ holds for all triple $(a,b,c)$ in $R$.

\begin{center}
\begin{mdframed}[style=mystyle]
\begin{minipage}{\textwidth}

\medskip
\begin{Thm}\label{WeakNP}
\textsc{Weak $DI^S$-WO-Orderability} is \NP-complete.
\end{Thm}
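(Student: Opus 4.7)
\textbf{Membership.} \NP-membership is straightforward: guess a linear order $\leq$ on $X$ together with a weak order $\preceq$ on $\cX$, and verify in polynomial time that $\preceq$ satisfies dominance and strict independence with respect to $\leq$. (Alternatively, guess $\leq$ alone and appeal to Theorem~\ref{DI^S-NP-comp}.)

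\textbf{Hardness.} I would reduce from \textsc{Betweenness}. Given an instance $(V,R)$, I would build $(X,\cX)$ as follows. Let $X = V \cup \bigcup_{t \in R} A_t$, where the sets $A_t$ are pairwise disjoint collections of fresh auxiliary elements, one bundle per triple. For each triple $t=(a,b,c)\in R$, I would introduce a local gadget $\cG_t \subseteq \mathcal{P}(\{a,b,c\} \cup A_t)\setminus\{\emptyset\}$ and then set $\cX = \bigcup_{t \in R} \cG_t$. Because the $A_t$ are disjoint, any two sets from different gadgets differ in at least two elements, so dominance and strict independence never interact across gadgets; any weak order that orders each $\cG_t$ compatibly can be combined into a weak order on $\cX$ (fixing an arbitrary ordering of the gadget-blocks).

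\textbf{The gadget.} The crux is to design $\cG_t$ so that it admits a weak order satisfying dominance and strict independence with respect to a linear order $\leq$ restricted to $\{a,b,c\} \cup A_t$ if and only if $b$ lies strictly between $a$ and $c$ in $\leq$. The skeleton of $\cG_t$ would include the sets $\{a,b\},\{b,c\},\{a,c\},\{a,b,c\}$, augmented with ``ladder'' subsets built from the auxiliary elements in $A_t$ exactly as in the enforcement construction from the proof of Proposition~\ref{DI^S-NP-hard} (a decreasing chain $A_1,A_2,\ldots,A_l$ whose strict-independence consequences force a targeted pairwise preference). I would use these ladders to turn dominance chains into cross-preferences between $\{a,b\}$, $\{a,c\}$, $\{b,c\}$, and $\{a,b,c\}$ that clash precisely when $b$ is extremal in $\{a,b,c\}$: when $b$ is the minimum or maximum of $\{a,b,c\}$, the dominance constraints on $\{a,b,c\}$ versus $\{a,c\}$ combine with the enforced preferences to produce a cycle, whereas when $b$ is in the middle the only dominance constraints involving $\{a,b,c\}$ are those relating it to $\{a,b\}$ and $\{b,c\}$, leaving enough freedom to extend to a weak order on $\cG_t$.

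\textbf{Correctness.} For the forward direction, given a betweenness-witness linear order $\leq^*$ on $V$, I would extend it to $X$ by placing each bundle $A_t$ in arbitrary positions, construct a gadget-internal weak order for each $\cG_t$ (using the satisfiable-case analysis of the gadget), and concatenate. For the backward direction, if $\cX$ is weakly $DI^S$-orderable via $\leq$, then for every triple $t \in R$ the restriction of $\preceq$ to $\cG_t$ witnesses gadget-orderability, so by design $b$ must be between $a$ and $c$ under $\leq$; hence the restriction of $\leq$ to $V$ is a betweenness solution.

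\textbf{Main obstacle.} The delicate part is verifying the gadget property in both directions. Designing the auxiliary ladders so they enforce exactly the right preferences, and checking that no application of dominance or strict independence is overlooked in the middle case (nor defused in the extreme case), will require a careful case analysis analogous to the enforcement bookkeeping in the proof of Proposition~\ref{DI^S-NP-hard}. The disjointness of the $A_t$ takes care of inter-gadget interactions, but the intra-gadget analysis is the substantive work.
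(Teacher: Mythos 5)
Your membership argument (guess $\leq$ and the lifted order together and verify) is fine, and reducing from \textsc{Betweenness} is indeed the route the paper takes; but the hardness part has two genuine gaps. First, the heart of the reduction -- a family over $\{a,b,c\}$ plus fresh auxiliaries that is weakly $DI^S$-orderable with respect to (the restriction of) $\leq$ exactly when $b$ lies strictly between $a$ and $c$ -- is never actually constructed. Your plan to build it out of the ``ladder'' enforcement machinery of Proposition~\ref{DI^S-NP-hard} does not transfer: those ladders enforce a targeted preference only relative to a \emph{fixed, known} critical linear order on $X$ (the chains remove minima/maxima of that specific order), whereas in weak orderability the order on $X$ is existentially quantified, so you cannot ``enforce'' $A \prec B$ uniformly; you need families that are unorderable precisely for the bad placements of $a,b,c$ and orderable for \emph{every} good placement of $a,b,c$ and the auxiliaries. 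This is why the paper abandons the \Sat-style enforcement entirely and instead uses Barber\`a--Pattanaik-type impossibility gadgets: $A(x,y,z)=\powerset{\{x,y,z\}}\setminus\{\emptyset,\{x\}\}$, which is unorderable exactly when $x$ is between $y$ and $z$, and a second gadget $B(x,x',y,y')$ that is unorderable when the pair $x,x'$ lies entirely above or entirely below $y,y'$; several such gadgets per triple, sharing auxiliary elements $y,y',z,z'$, are combined to pin the auxiliaries between the $V$-elements and derive a contradiction when betweenness fails.

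Second, your modularity claim -- that because the auxiliary bundles $A_t$ are disjoint, ``dominance and strict independence never interact across gadgets'' -- is false. Your own gadget skeleton contains sets built purely from the shared elements of $V$ (e.g.\ $\{a,b\}$, $\{a,b,c\}$), and two triples sharing elements produce sets from different gadgets that differ in a single element (so dominance applies across gadgets) or that admit a common one-element extension inside $\cX$ (so strict independence applies across gadgets). Consequently the forward direction cannot be finished by ordering each $\mathcal{G}_t$ internally and ``concatenating the gadget-blocks'': a single global weak order must be exhibited and all cross-triple applications of dominance and strict independence checked. This global construction and case analysis (including the cases where the added element is an original $v_i$ rather than an auxiliary) is precisely the substantive second half of the paper's proof of Theorem~\ref{WeakNP}, and it is missing here. (Minor point: your parenthetical alternative for membership -- guess $\leq$ alone and ``appeal to Theorem~\ref{DI^S-NP-comp}'' -- is not an \NP{} algorithm as stated, since deciding $\leq$-orderability is itself \NP-complete; your primary argument of guessing both objects is the correct one.)
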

\end{minipage}
\end{mdframed}
\end{center}

\begin{proof}
It is clear that \textsc{Weak $DI^S$-WO-Orderability} is in \NP{}
because we can guess a linear order $\leq$ on $X$
and a linear order $\preceq$ on $\cX$
at the same time and then check in polynomial
time if $\preceq$ satisfies dominance and 
strict independence with respect to $\leq$.

It remains to show that it is also \NP-hard.
We do this by a reduction from \textsc{Betweenness}.
For the reduction we need two gadgets.
First, we need a gadget that guarantees for three elements $a,b,c$
that either $a < b,c$ or $a > b,c$ has to hold.
This can be done using a similar idea as for the impossibility result
of \citet{barbera1984}.
The second gadget, for elements $x,x',y,y'$, leads to a
contradiction if $x,x' < y,y'$ or $y,y' < x,x'$ holds.

\paragraph{The gadget $A(x,y,z)$:}
Let $X = \{x,y,z\}$. Then, we write $A(x,y,z) := \powerset{X} \setminus \{x\}$.
We claim that $A(x,y,z)$ is not $DI^S$-orderable with respect to $\leq$
for $y < x < z$ or $z < x < y$. We assume $y < x < z$. The other case can be treated analogously.
Assume for the sake of contradiction that there is an order $\preceq$ on $A(x,y,z)$ that satisfies
dominance and strict independence. Then, $\{y\} \prec \{y,x\}$ by dominance  
and hence $\{y,z\} \prec \{y,x,z\}$ by strict independence.
On the other hand, $\{x,z\} \prec \{z\}$ by dominance and hence
$\{y,x,z\} \prec \{y,z\}$ by strict independence. A contradiction.

\paragraph{The gadget $B(x,x',y,y')$:}
We write $B(x,x',y,y')$ for the set
\[
\{\{x,y,y'\},\{x',y,y'\},\{x,x'\},\{x,x',y\},\{x,x',y'\},
\{x,x',y,y'\}, \{y\},\{y'\},\{y,y'\}\}.\]
We claim that $B(x,x',y,y')$ is not $DI^S$-orderable 
with respect to $\leq$ if $x,x' < y,y'$ or $y,y' < x,x'$.
We assume $x,x' < y < y'$. The other cases follow by symmetry.
Then, $\{x,x'\} \prec \{x,x',y\}$ by dominance and hence $\{x,x',y'\} \prec \{x,x',y,y'\}$
by strict independence. On the other hand $\{y,y'\}\prec \{y'\}$ by dominance and
hence by strict independence first $\{x',y,y'\} \prec \{x',y'\}$ and second 
$\{x,x',y,y'\} \prec \{x,x',y'\}$, a contradiction.

\paragraph{The reduction:}
Given an instance of \textsc{Betweenness}
\[(V = \{v_1,\dots,v_n\}, T = \{(v_i,v_j,v_k), \dots ,(v_{i'},v_{j'},v_{k'})\})\]
we build an instance $(X,\cX)$ of \textsc{Weak $DI^S$-WO-Orderability}. 
First, we add for every $v_i \in V$ an element $v_i$ to $X$.
Furthermore, we add for every triple $(v_i,v_j,v_k)$
new elements $y_{ijk},y'_{ijk},z_{ijk}$ and $z'_{ijk}$.
If no ambiguity arises, we omit the index $ijk$.
Finally, for every triple $(v_i,v_j,v_k)$ we add sets 
\begin{multline*}
A(v_i,y_{ijk},y_{ijk}'), A(v_j,y_{ijk},y_{ijk}'), A(v_i,z_{ijk},z_{ijk}'), A(v_j,z_{ijk},z_{ijk}'),\\
A(v_k,z_{ijk},z_{ijk}'), A(v_i,y_{ijk},z_{ijk}), A(v_k,y_{ijk},z_{ijk}),
B(v_i,v_j,y_{ijk},y_{ijk}'), \\
B(v_i,v_k,z_{ijk},z_{ijk}') \mbox{ and } B(v_j,v_k,z_{ijk},z_{ijk}') 
\end{multline*}
to $\cX$.
We claim that $\cX$ is $DI^S$-orderable with respect to $\leq$
if and only if the projection of $\leq$ to $V$ is a positive solution 
to the given \textsc{Betweenness} instance.
The idea is, roughly, that none of these gadgets leads to a contradiction if we 
set either \[v_i < y < y' < v_j < z < z' < v_k\] or 
\[v_i > y > y' > v_j > z > z' > v_k,\]
hence there is a way to avoid a contradiction if 
either $v_i < v_j < v_k$ or $v_i > v_j > v_k$ holds.
On the other hand, we can show that there is no way to extend an 
order that sets either $v_j < v_i,v_k$ or $v_i,v_k <v_j$ without running 
into a contradiction.

First, we show that $\cX$ is not $DI^S$-orderable with respect to $\leq$
if the projection of $\leq$ is not a positive instance of \textsc{Betweenness}.
Let $(v_i,v_j,v_k)$ be a triple that is violated by $\leq$,
i.e., either $v_j < v_i,v_k$ or $v_i,v_k <v_j$.
We assume $v_j < v_i,v_k$. The other case can be treated analogously.
Assume for the sake of contradiction that $\cX$ is $DI^S$-orderable 
with respect to $\leq$. Then observe that $A(v_i,y,y')$ and $A(v_j,y,y')$
imply that $v_i$ and $v_j$ cannot lie between $y$ and $y'$ in $\leq$.
Therefore, we must have either $v_i,v_j<y,y'$, $y,y' <v_i,v_j$,
$v_i,<y,y'<v_j$ or $v_j,<y,y'<v_i$. 
However, the first two cases are ruled out by $B(v_i,v_j,y,y')$
and the third case is ruled out by $v_j < v_i$, hence we know $v_j < y,y' < v_i$.
Similarly, $A(v_j, z,z'), A(v_k,z,z')$ and $B(v_j,v_k,z,z')$ imply $v_j < z,z' < v_k$.
By $A(v_i,y,z)$ and $A(v_k,y,z)$ we know that $v_i$ and $v_k$ cannot lie 
between $y$ and $z$, hence we must have $v_j < y,z < v_i,v_k$.
Now, $A(v_j, z,z'), A(v_k,z,z')$ imply that neither $v_i$ nor $v_k$ can 
lie between $z$ and $z'$. Hence we must have  $v_j < y,z, z' < v_i,v_k$.
However, this is ruled out by $B(v_i,v_k,z,z')$. A contradiction.

Now, assume $\leq$ is a positive instance of \textsc{Betweenness}.
We extend $\leq$ to an order on $\cX$ by setting for all triples 
$(v_i,v_j,v_k)$ the order $v_i < y < y' < v_j < z < z' < v_k$ 
if $v_i < v_j < v_k$ and $v_i > y > y' > v_j > z > z' > v_k$ otherwise.
We can do this in a way such that there is no element between $y$ and $y'$
and $z$ and $z'$.
Now, we can order the sets made up by the new elements $y,y',z,z'$
with an order $\preceq$ satisfying 
dominance and strict independence: We lift $\leq$ to the singletons and place
the sets of the form $\{y,y'\}$ between $\{y\}$ and $\{y'\}$ and 
sets of the form $\{z,z'\}$ between $\{z\}$ and $\{z'\}$.
Finally we place sets of the form $\{y',z\}$ right after $\{y'\}$.

For every triple $(v_i,v_j,v_k)$ with auxiliary elements $y,y',z,z'$ 
such that $v_i < v_j < v_k$ holds, we add
the sets 
\[\{v_i,y\}, \{v_i,y,y'\}, \{v_i,y'\}, \{v_i, y, v_j\},
 \{v_i, y,y',v_j\}, \{v_i,v_j\}, \{v_i,y',v_j\}\]
in this order just below $\{y\}$.
For every of these sets $A$, we have $\min(A) = v_i$.
Furthermore, for all $A \prec B$ we have $\max(A) \leq \max(B)$.
Hence, this sequence satisfies dominance. 
Next we add
\[\{y,v_j\}, \{y,y',v_j\}, \{y',v_j\}, \{v_i, y',z\}, \{y',z\}, \{y',z, v_k\}\] 
in this order above $\{y'\}$.
It can be checked that this also satisfies dominance.
Furthermore, we add
\begin{multline*}
\{v_i,z\}, \{v_i,z,z'\}, \{v_i,z'\}, \{v_i, z, v_k\}, \{v_i, z,z',v_k\},\\
 \{v_i,v_k\}, \{v_i,z',v_k\}, \{v_j,z\}, \{v_j,z,z'\}, \{v_j,z'\},\\
 \{v_j, z, v_k\}, \{v_j, z,z',v_k\}, \{v_j,v_k\}, \{v_j,z',v_k\}
\end{multline*}
in this order just below $\{z\}$.
For the first half, we have again $\min(A) = v_i$
and $A \prec B$ implies $\max(A) \leq \max(B)$.
For the second half, the same holds with 
$\min(A) = v_j$. Therefore, this block satisfies dominance.
Furthermore, we have $\max(A) \geq z$ and in the earlier blocks, 
the only set $B$ with $\max(B) > z$ is $\{y',z, v_k\}$, 
for which dominance does not imply any preferences with sets in this block.
Hence, dominance is also satisfied with in relation to the earlier blocks.
Finally, we have $\min(A) \leq z < \max(A)$ for all sets in the block.
Therefore, we can place $\{z\}$ above the block without violating dominance. 
We conclude the construction by placing the sets
\[\{z,v_k\}, \{z,z',v_k\}, \{z',v_k\}\] in this order above $\{z'\}$. 
Again, this does not contradict dominance.
If $v_i > v_j > v_k$ holds, we produce exactly the reverse order.
In order to see that it satisfies strict independence, we have to distinguish two cases:
Let $A$ and $B$ be sets in $\cX$ such that $A \cup \{x\}$ and $B \cup \{x\}$
are also in $\cX$ and $A \prec B$. First, assume $x = y_{ijk}$ for some triple $(v_i,v_j,v_k)$.
First, assume $A \cap \{v_i,v_j,v_k\} = B \cap \{v_i,v_j,v_k\}$.
Then, we must have $A = \{v_i,v_j\}$ and $B = \{v_i,y',v_j\}$
and hence $A \cup \{x\} \prec B \cup \{x\}$
by definition.
Now, assume $A \cap \{v_i,v_j,v_k\} \neq B \cap \{v_i,v_j,v_k\}$. 
Then, the order of $A$ and $B$ as well as the order of $A \cup \{x\}$ and $B \cup \{x\}$
does not depend on the auxiliary variables.
Hence $A \prec B$ implies $A \cup \{x\} \prec B \cup \{x\}$.
The cases $x \in \{y',x,x'\}$ are similar.

So assume $x = v_i$. Then, observe that if $A \prec B$, because $B$ contains $v_j$ or $v_k$ and $A$ 
doesn't, then $A \cup \{x\} \prec B \cup \{x\}$ for the same reason. Otherwise observe that
the order of $A$ and $B$ can only depend on the auxiliary elements. However, by assumption,
these do not change when we add $x$, hence $A \cup \{x\} \prec B \cup \{x\}$.
\end{proof}

A close inspection of this proof shows that (1) we did not use the
fact that the lifted order needs to be total and (2) that the lifted order
satisfies the extension rule. Hence, the \NP-hardness carries over to 
partial weak $DI^S$-orderability and if we add the extension rule.

\begin{center}
\begin{mdframed}[style=mystyle]
\begin{minipage}{\textwidth}

\medskip
\begin{Cly}\label{Cly:Weak}
\textsc{Weak $DI^SE$-WO-Orderability},
\textsc{Weak $DI^S$-PO-Orderability} and 
\textsc{Weak $DI^SE$-PO-Orderability} are \NP-complete.
\end{Cly}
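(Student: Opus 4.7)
The plan is to observe that the reduction from \textsc{Betweenness} used in the proof of Theorem~\ref{WeakNP} already establishes hardness for all three variants, and to argue \NP-membership by a straightforward ``guess and check''. Specifically, for membership, one guesses simultaneously a linear order $\leq$ on $X$ and a weak (resp.\ partial) order $\preceq$ on $\cX$, and then verifies in polynomial time that $\preceq$ satisfies dominance, strict independence, and (where required) the extension rule. This works uniformly for all three problems, since checking the axioms for a candidate relation is a polynomial-time task.

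For \NP-hardness, I would reuse the reduction from the proof of Theorem~\ref{WeakNP} verbatim and argue the two required properties by inspection of the gadgets $A(x,y,z)$ and $B(x,x',y,y')$ together with the constructive positive direction. The critical observation for the partial-order variants is that both impossibility arguments for the gadgets proceed by deriving, from some hypothetical $\preceq$, a chain of strict preferences $P_1\prec P_2\prec\cdots\prec P_k\prec P_1$ via applications of dominance and strict independence only. Neither totality nor comparability of unrelated sets is ever invoked, so the same derivation refutes the existence of any partial order on $\cX$ satisfying dominance and strict independence whenever $\leq$ violates a betweenness constraint. Hence the ``no'' direction carries over without change.

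For the positive direction, I would observe that the $\preceq$ constructed in the proof of Theorem~\ref{WeakNP} is already a linear order on $\cX$ and therefore serves directly as a witness for the weak- and partial-order variants. To additionally handle the extension rule, it suffices to note that the construction lifts $\leq$ to the singletons of $\cX$ by placing them in the order inherited from $\leq$ on $X$; every singleton $\{a\}$ appearing in $\cX$ is thus comparable to every other singleton $\{b\}\in\cX$ in the same way as $a$ and $b$, so extension is satisfied automatically. Consequently the same reduction proves hardness of \textsc{Weak $DI^SE$-WO-Orderability} and \textsc{Weak $DI^SE$-PO-Orderability} as well.

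The main obstacle, which is really only a matter of careful bookkeeping rather than a genuine mathematical difficulty, is verifying that every application of an axiom in the two halves of the proof of Theorem~\ref{WeakNP} falls within the stricter setting demanded here: that no step in the impossibility arguments tacitly uses totality, and that no step in the constructive positive direction tacitly uses the absence of the extension rule. Once these checks are made, the three completeness statements follow immediately.
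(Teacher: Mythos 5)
Your proposal is correct and follows essentially the same route as the paper: membership by guessing $\leq$ and $\preceq$ together and checking the axioms, and hardness by observing that the Theorem~\ref{WeakNP} reduction's impossibility arguments never use totality of the lifted order, while the constructed witness lifts $\leq$ to the singletons and hence satisfies the extension rule (and, being a linear order, also works for the weak- and partial-order variants). No changes needed.
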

\end{minipage}
\end{mdframed}
\end{center}

\begin{proof}
The fact that all three problems are in \NP{} follows by the same argument
used to show that \textsc{Weak $DI^S$-WO-Orderability} is in \NP.
Now, we claim that the same reduction used above shows without modification that
all three problems are \NP-hard. First of all, the argument that the constructed 
instance $(X,\cX)$ is not weakly $DI^S$-orderable if $(V,R)$ is not a positive
instance of \textsc{Betweenness} does not rely on the fact that 
$\leq$ needs to be total.\footnote{Observe that previous reductions 
also did not explicitly mention the totality of the lifted order,
but used the fact that (strict) independence implies reverse independence,
which only holds for total orders.}
Furthermore, when constructing the order witnessing that $(X,\cX)$ is 
weakly $DI^S$-orderable if $(V,R)$ is a positive instance of
\textsc{Betweenness}, ``we lift $\leq$ to the singletons''.
Therefore, this order satisfies the extension rule by definition.
\end{proof}

Unfortunately, this reduction cannot easily be adapted to weak 
orderability with respect to regular independence.
Proposition~\ref{Cly:DomIndless5} tells us that at least $6$ 
elements are necessary to produce a conflict between dominance and
independence, therefore it seems highly unlikely that one can find
a gadget in the style of $A(a,b,c)$
that enforces a specific preference between three elements for dominance 
and independence.  
Nevertheless, it seems likely that the problem remains NP-complete
when strict independence is replaced with regular independence,
though a completely new reduction may be needed to show this.

\subsection{Strengthenings of Dominance}\label{sec:strengdom}

A possible way to overcome the high complexity of
recognizing orderable families could be to strengthen 
the axioms that we consider. If we consider very strong axioms 
then only very particular families of sets will be orderable
with respect to these axioms. Strict 
independence is already a very strong axiom, therefore,
we focus on strengthenings of dominance.
Furthermore, we will only study the simplest decision
problem, namely $\leq$-LO-orderability.
More concretely, we will show that no reasonable strengthening 
of dominance together with strict independence
makes the $\leq$-LO-orderability problem easier.
Technically, we say an axiom is dominance-like if it extends 
dominance and is implied by a very strong axiom that we call 
maximal dominance, which was first introduced by \citet{Maly2017}.
We show that the $\leq$-LO-orderability problem
with respect to strict independence and any dominance-like axiom is \NP-complete.
First we introduce maximal dominance.

\begin{Axm}[Maximal Dominance]\label{MaxDom}
For all $A,B \in \mathcal{X}$, 
\[\left(\max (A) \leq \max (B) \land \min (A) < \min (B) \right) \text{ implies } A \prec B;\]
\[\left(\max (A) < \max (B) \land  \min (A) \leq \min (B)\right) \text{ implies } A \prec B.\]
\end{Axm}

Now, we can define dominance-like axioms. As the name suggests, these are axioms that
lie between dominance and maximal dominance.

\begin{Def}
We say a axiom is \emph{dominance-like} if it implies dominance and is implied
by maximal dominance. 
\end{Def}

One example of a dominance-like axiom is set-dominance.

\begin{Axm}[Set-Dominance]\label{SetDom}
For all $A,X \in \mathcal{X}$ such that
$A \cup X \in \mathcal{X}$ and $A \cap X = \emptyset$:
\[
y<x \text{ for all } x \in X, y \in A\text{ implies } A \prec A \cup X;
\]
\[
x < y \text{ for all } x \in X, y \in A \text{ implies } A \cup X \prec A.
\]
\end{Axm}

Furthermore, if we consider Fishburn's and G\"ardenfors' extensions 
as axioms, they are both dominance-like.

The following reduction is based on a similar idea as the one used by \citet{Maly2017} but 
is significantly more general and hence proves a much stronger result.
We prove the \NP-hardness of $\leq$-orderability by a reduction from \textsc{Betweenness}, as in the case of weak orderability,
However, this time the \textsc{Betweenness} instance is not encoded in the linear order
on $X$ but in the linear order on $\cX$.

\begin{center}
\begin{mdframed}[style=mystyle]
\begin{minipage}{\textwidth}

\medskip
\begin{Thm}\label{StrengthDom1}
Let $A$ be a dominance-like axiom. Then it is \NP-hard to decide for a
given triple $(X, \cX,\leq)$ if there exists a linear order on
$\cX$ that satisfies axiom $A$ and strict independence.
\end{Thm}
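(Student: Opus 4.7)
The plan is to reduce from \textsc{Betweenness}. Given an instance $(V,R)$ with $V=\{v_1,\dots,v_n\}$, I would build an instance $(X,\cX,\leq)$ of the studied problem in such a way that the reduction works \emph{uniformly} for every dominance-like axiom $A$. The crucial observation is: since any dominance-like $A$ implies dominance and is implied by maximal dominance, it suffices to show (i) if $(V,R)$ is a yes-instance, then there exists a linear order on $\cX$ satisfying \emph{maximal dominance} and strict independence with respect to $\leq$; and (ii) if $(V,R)$ is a no-instance, then no linear order on $\cX$ satisfies even \emph{dominance} and strict independence with respect to $\leq$. Then the ``sandwich'' between dominance and maximal dominance gives NP-hardness for any dominance-like axiom simultaneously.

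For the construction, I would introduce for each $v_i \in V$ a representative set $A_i \in \cX$, chosen so that all $A_i$ share the same $\leq$-minimum and the same $\leq$-maximum (the natural candidate is $A_i = \{\ell, x_i, r\}$, where $\ell$ is globally least, $r$ is globally greatest, and the $x_i$'s are pairwise distinct middle elements). This ensures that maximal dominance imposes no preferences between the $A_i$'s, so that the lifted order is free to encode any linear order on $V$. For every triple $(v_i,v_j,v_k) \in R$ I would add a gadget of auxiliary elements and auxiliary sets whose dominance/strict-independence closure forces $A_j$ to lie strictly between $A_i$ and $A_k$ in any lifted linear order satisfying dominance and strict independence. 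Each gadget is modelled on the local impossibility configurations behind Barber\`a--Pattanaik (Section~\ref{sec:impos}) and on the enforcement gadgets from Maly~(2017) used in Proposition~\ref{DI^S-NP-hard}: one produces, via a chain of intermediate sets together with strict-independence applications, the implications ``$A_i\prec A_j$ and $A_k\prec A_j$ yield a cycle'' and ``$A_j\prec A_i$ and $A_j\prec A_k$ yield a cycle'', leaving only the two admissible betweenness orderings $A_i\prec A_j\prec A_k$ and $A_k\prec A_j\prec A_i$.

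The verification then has two parts. For (ii), given any lifted linear order on $\cX$ satisfying dominance and strict independence, the gadgets directly force, for each triple in $R$, a betweenness condition on $(A_i,A_j,A_k)$; reading off the restriction of $\preceq$ to $\{A_1,\dots,A_n\}$ then yields a solution to the \textsc{Betweenness} instance. For (i), from a linear order $\leq_V$ on $V$ witnessing a betweenness solution, I would order the $A_i$'s by $\leq_V$ and then extend to the auxiliary gadget sets using the same explicit lexicographic/maxmin-style orderings that appear already in Propositions~\ref{DI^S-NP-hard} and \ref{Prop:StronpNP}; one then checks that this extension respects every preference maximal dominance demands, since maximal dominance between gadget sets only fires in directions that the construction has been arranged to allow.

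The main obstacle is making the gadgets robust in both directions at once. On the negative side they must be strong enough that plain dominance plus strict independence forces the betweenness restriction, and on the positive side they must be weak enough that the enforced preferences remain compatible with maximal dominance (the strongest axiom we must satisfy in the yes-case). The hinge on which this hangs is the careful placement in $\leq$ of the auxiliary elements used inside each gadget, so that maximal dominance either agrees with the intended enforced preferences or simply says nothing between the sets involved. I expect this to be achievable by the same kind of construction used in the proof of Proposition~\ref{DI^S-NP-hard}, where minima and maxima associated with enforcement chains are placed, respectively, below and above all variable-encoding elements of $X$, so that dominance and maximal dominance act consistently on the auxiliary sets. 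Once the gadget is in place, the reduction is clearly computable in polynomial time, which finishes the \NP-hardness argument.
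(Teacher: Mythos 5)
Your high-level strategy is exactly the paper's: reduce from \textsc{Betweenness} and exploit the sandwich ``yes-instance $\Rightarrow$ a linear order satisfying \emph{maximal dominance} and strict independence exists; no-instance $\Rightarrow$ not even \emph{dominance} plus strict independence is satisfiable,'' which settles all dominance-like axioms at once. However, the proposal stops short of the actual technical content, and the one concrete design decision you do make is the wrong one. Taking $A_i=\{\ell,x_i,r\}$ with a common global minimum and maximum gives incomparability of the representatives under maximal dominance, but it destroys the gadget machinery you plan to reuse. The conditional betweenness constraints have to be transmitted through \emph{local copies} of the representatives linked to the originals by (reverse) strict independence, i.e.\ sets $A_i\setminus\{k\}$ for elements $k$ shared by all representatives. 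With three-element representatives, removing the unique middle element collapses every copy to the same set $\{\ell,r\}$, while removing $\ell$ (or $r$) produces copies $\{x_i,r\}$ whose minima are the distinct labels $x_i$; maximal dominance then forces $\{x_i,r\}\prec\{x_j,r\}$ whenever $x_i<x_j$, and strict independence lifts this to $A_i\prec A_j$. So in the yes-case the order of the representatives is hard-wired by the arbitrary labels, and you can no longer realize an arbitrary betweenness solution — the positive direction of your reduction fails.

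The paper avoids this by making the representatives large nested intervals $V_i=\{K+i,\dots,N-(K+i)\}$ sharing a long common middle: deleting shared middle elements $k$ yields local copies that are still pairwise incomparable under maximal dominance, and for each triple one adds crossed forced preferences among copies (via enforcement families $\mathcal{S}(A,B)$ whose auxiliary singletons $x_{AB},y_{AB}$ are placed strictly inside the min--max range of the sets involved, and whose extreme elements $z_{AB},z^{*}_{AB}$ lie outside all representatives) so that the forced preferences are invisible to maximal dominance; the no-case cycle then arises exactly when a triple violates betweenness. Supplying such a gadget, and verifying that the yes-case order on all gadget sets can be completed consistently with maximal dominance and strict independence, is the bulk of the proof; asserting that ``the same kind of construction as in Proposition~\ref{DI^S-NP-hard}'' will work does not discharge it, especially since your choice of representatives is incompatible with that construction.
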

\end{minipage}
\end{mdframed}
\end{center}

\begin{proof}
Let $(V,R)$ be an instance of \textsc{Betweenness}
with $V=\{v_1, v_2, \ldots,v_n\}$.
We construct a triple $(X,\mathcal{X},<)$ such that 
\begin{itemize}
\item if $(V,R)$ is a positive instance of \textsc{Betweenness},
then there is a linear order $\preceq$ on $\cX$ that satisfies maximal dominance and strict independence w.r.t.\ $\leq$,
\item if $(V,R)$ is a negative instance of \textsc{Betweenness},
then there exists no linear order $\preceq$ on $\cX$ that satisfies dominance and strict independence w.r.t.\ $\leq$.
\end{itemize}
Then, for any dominance-like axiom $A$ we know that there exists a linear order on $\cX$
that satisfies axiom $A$ and strict independence w.r.t.\ $\leq$ if and only if $(V,R)$ is a positive instance of
\textsc{Betweenness}.

We set $X = \{1, 2, \dots , N\}$ equipped with the usual linear order, for $N$
large enough. We will clarify later what large enough means.
Then, we construct the family $\mathcal{X}$ stepwise.
The family contains for every $v_i \in V$ a set $V_i$ of the following form:
\[V_i := \{K+i , \dots ,N - (K+i)\}.\]
Here, $K$ is a large enough constant. Again we will clarify later what large enough means.

Furthermore, for every triple from $R$ we want to enforce
$A\prec B \prec C$ or $A \succ B \succ C$ by adding two families of sets
as shown in Figure~\ref{fig:1.1} and Figure~\ref{fig:1.2} with $q,x,y,z \in X$.
 The solid arrows represent preferences that are forced through dominance 
and strict independence. The family in Figure~\ref{fig:1.1} makes sure that every total strict order 
satisfying independence that contains $A \prec B$ must also contain $B \prec C$.
Similarly, the family in Figure~\ref{fig:1.2} makes sure that $A \succ B$ leads to $B \succ C$.

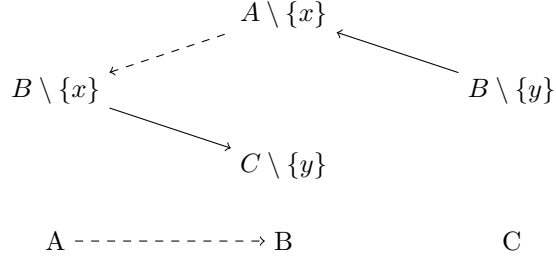
\begin{figure}[t]
\begin{center}
\begin{tikzpicture}[scale=1]
\node (a) at  (0,0) {A} ;
\node (b) at (3,0) {B};
\node (c) at (6,0) {C};
\node (bx) at (0,2) {$B \setminus \{x\}$};
\node (ax) at (3,3) {$A \setminus \{x\}$};
\node (by) at (6,2) {$B \setminus \{y\}$};
\node (cy) at (3,1) {$C \setminus \{y\}$};
\draw[<-] (ax) -- (by);
\draw[->] (bx) -- (cy);
\draw[dashed, ->] (a) -- (b);
\draw[dashed,<-] (bx) -- (ax);
\end{tikzpicture}
\caption{Family that forces that $A \prec B$ leads to $B \prec C$}\label{fig:1.1}
\end{center}
\end{figure}

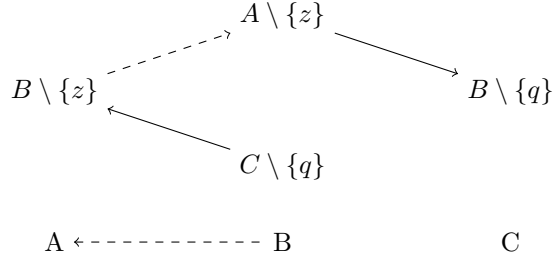
\begin{figure}[t]
\begin{center}
\begin{tikzpicture}[scale=1]
\node (a) at  (8,0) {A} ;
\node (b) at (11,0) {B};
\node (c) at (14,0) {C};
\node (bz) at (8,2) {$B\setminus \{z\}$};
\node (az) at (11,3) {$A \setminus \{z\}$};
\node (bq) at (14,2) {$B\setminus \{q\}$};
\node (cq) at (11,1) {$C\setminus \{q\}$};
\draw[->] (az) -- (bq) ;
\draw[->](cq)--(bz);
\draw[dashed, <-] (a) -- (b);
\draw[dashed, ->] (bz) -- (az);
\end{tikzpicture}
\caption{Family that forces that $A \succ B$ leads to $B \succ C$}\label{fig:1.2}
\end{center}
\end{figure}

We implement this idea for all triples inductively.
For every $1 \leq i \leq |R|$, pick a triple $(v_l,v_j,v_m)\in R$ and
set $k = K + n + 4i$.
Let $(A,B,C) = (V_l, V_j, V_m)$ be the triple of sets encoding
the triple of elements $(v_l, v_j, v_m)$. We add the following sets:
\begin{multline*} A \setminus \{k\}, B \setminus \{k\}, B \setminus \{k+1\}, C \setminus \{k+1\},\\
A \setminus \{k+2\},  B \setminus \{k+2\}, B \setminus \{k+3\}, C \setminus \{k+3\}.
\end{multline*}
We call the sets encoding the elements of $V$ together with the sets added in this step the Class~1 sets.
These sets correspond to the sets $A \setminus \{x\}, B \setminus \{x\}, \dots , C \setminus \{q\}$
in Figure \ref{fig:1.1} and Figure \ref{fig:1.2}.
Observe that the inductive construction guarantees that every constructed set is unique.
We now have to force the preferences 
\begin{multline*}
A \setminus \{k\} \prec B \setminus \{k+1\}, C \setminus \{k+1\} \prec B \setminus \{k\},\\
B \setminus \{k+3\} \prec A \setminus \{k+2\}, B \setminus \{k+2\} \prec C \setminus \{k+3\}. \quad (\star)
\end{multline*}

We define for every pair $A, B \in \mathcal{X}$ 
a family of sets $\mathcal{S}(A,B)$ forcing $A \prec B$.
Assume $\min (B)\leq \min (A)$ and $\max (A) \leq \max (B)$. 
Then, $\mathcal{S}(A,B)$ contains the following sets
\[\{x_{AB}\}, \{x_{AB}, y_{AB}\},\{y_{AB}\}, \{x_{AB}, z_{AB}\}, \{y_{AB}, z_{AB}^*\},
A \cup \{z_{AB}\}, B \cup \{z_{AB}^*\}\]
where $z_{AB}^* < \min (B) < \min (A) <  x_{AB} < y_{AB} < \max (A) < \max (B) < z_{AB}$ holds
(See Figure~\ref{FigSets}).

Additionally, we add sets that enforce $A \cup \{z_{AB}\}\prec \{x_{AB}, z_{AB}\}$  
by dominance:
Let $A = \{a_1, \dots ,a_l\}$ be enumerations of $A$
such that $i< j$ implies $a_i < a_j$.
We add 
\[
\{z_{AB}\}, \{a_l, z_{AB}\},\{a_{l-1}, a_l, z_{AB}\},  \dots, \{a_2, \dots ,z_{AB}\} \mbox{ and }\{a_1, z_{AB}\}
\]
 to $\mathcal{X}$.
This forces $\{a_2, \dots ,z_{AB}\} \prec \{z_{AB}\}$
by dominance and hence by one application 
of strict independence  $A \cup \{z_{AB}\} \prec \{a_1,z_{AB}\}$.
Finally, we add the sets $\{a_1\}, \{a_1, x_{AB}\}$ and $\{a_1, x_{AB}, z_{AB}\}$,
which leads to $\{a_1, z_{AB}\} \prec \{a_1, x_{AB}, z_{AB}\}$.
Then we have
\[
A \cup \{z_{AB}\} \prec \{a_1, z_{AB}\} \prec \{a_1, x_{AB}, z_{AB}\} \prec \{x_{AB}, z_{AB}\} 
\]
hence $A \cup \{z_{AB}\} \prec \{x_{AB}, z_{AB}\}$. 
Therefore, we have $A \preceq\{x_{AB}\}$ by reverse independence.
Analogously, we enforce $\{y_{AB}\} \preceq B$.
Therefore, transitivity implies $A \prec B$ by
$A \preceq \{x_{AB}\} \prec \{y_{AB}\} \preceq B$.
The case 
$\min (A)\leq \min (B)$ and $\max (B) \leq \max (A)$
can be treated analogously.

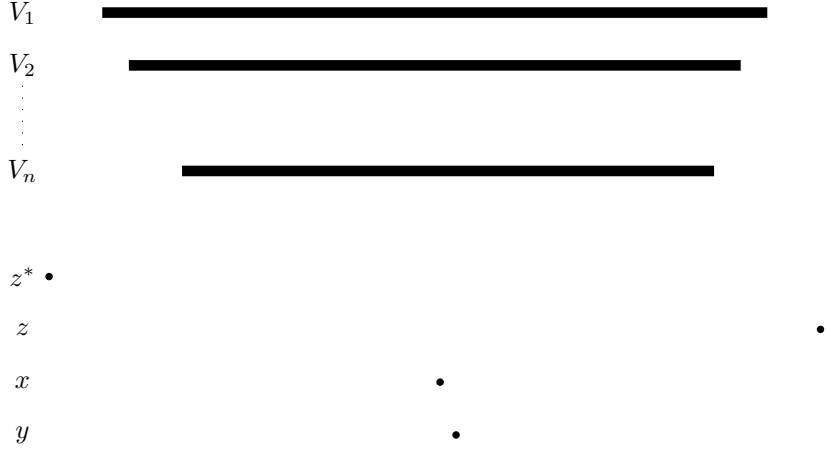
\begin{figure}[t]
\centering
\begin{tikzpicture}[scale=0.35]
\node (x1) at (-2,20) {$V_1$};
\node (x2) at (-2,18) {$V_2$};
\node (xn) at (-2,14) {$V_n$};
\node (x) at (-2,6) {$x$};
\node (y) at (-2,4) {$y$};
\node (z) at (-2,8) {$z$};
\node (z*) at (-2,10) {$z^*$};
\fill (-1,10) circle (4pt); 
\fill (28,8) circle (4pt); 
\fill (13.7,6) circle (4pt); 
\fill (14.3,4) circle (4pt); 
\draw[line width=4pt] (1,20) -- (26,20);
\draw[line width=4pt] (2,18) -- (25,18);
\draw[line width=4pt] (4,14) -- (24,14);
\draw[loosely dotted] (x2) -- (xn);
\end{tikzpicture}
\caption{Sketch of the sets $V_1,\dots ,V_n, x, y, z$ and $z^*$.}
\label{FigSets}
\end{figure}

Now, we add the following families of sets to enforce the desired preferences
\begin{multline*}
\mathcal{S}(A \setminus \{k\}, B \setminus \{k +1\}), \mathcal{S}(C \setminus \{k+1\}, B \setminus \{k\}),\\
\mathcal{S}(B \setminus \{k+3\}, A \setminus \{k+2\}),\mathcal{S}(B \setminus \{k+2\}, C \setminus \{k+3\}).
\end{multline*}
We call the sets added in this step the Class 2 sets.

We repeat this with a new triple $(v_i', v_j', v_m') \in R$ until we treated all triples in $R$.
By this construction every linear order on $\mathcal{X}$
that satisfies dominance and strict independence must set the preferences listed in $(\star)$.
This concludes the construction of $\cX$. We now can determine the necessary sizes for $N$ and $K$.
$N$ and $K$ need to be large enough
such that all $z_{AB}^*$ used in the construction are smaller than $K$, all $z_{AB}$
are larger than $N -K$ and all $x_{AB}$ and $y_{AB}$ are larger than $K +n + 4|R|$.
It is clear that this can be achieved with $N$ and $K$ that are polynomial in $|(V,R)|$.

Now, assume there is a linear order on $\mathcal{X}$ satisfying
dominance and strict independence.
We claim that the relation defined by $v_i \leq v_j$
iff $V_i \preceq V_j$ is a positive witness for $(V,R)$.
By definition this is a linear order.
So assume there is a triple $(a,b, c)$
such that $a > b < c$ or $a< b > c$ holds.
We treat the first case in detail: 
$a > b < c$ implies $A \succ B \prec C$.
This implies by the strictness of $\prec$ and strict dominance
$A \setminus \{k\} \succ B \setminus \{k\}$ and
$B \setminus \{k+1\} \prec  C \setminus \{k+1\}$. However, then 
\[A \setminus \{k\} \succ B \setminus \{k\} \succ C \setminus \{k+1\} \succ
B \setminus \{k+1\} \succ A \setminus \{k\} \]
contradicts the assumption that $\prec$ is transitive and irreflexive.
Similarly, the second case leads to a contradiction.
This shows that if $(V,R)$ is a negative instance of \textsc{Betweenness},
then there is no order on $\cX$ that satisfies dominance and strict independence
with respect to $\leq$.

Now, assume that there is a linear order on $V$ satisfying the restrictions from $R$.
We use this to construct a linear order on $\mathcal{X}$ that satisfies maximal dominance 
and strict independence with respect to $\leq$.
First, we add all preferences implied by maximal dominance.
Observe that no two Class 1 sets are comparable by maximal dominance.
Moreover, we set $V_i \preceq V_j$ iff
$v_i \leq v_j$ holds.
Then, we project this order to all sets of the form $V_i \setminus \{x\}$,
We claim that this order satisfies all applications of strict 
independence between Class 1 sets. If $A = V_i$ for $i \leq n$, then there is no
set $A \cup \{x\}$ in $\mathcal{X}$. If $A = V_i \setminus \{x\}$ for some $i \leq n$ 
and $x \in X$, then $x$ is the only element of $X$ such that $A \cup \{x\} \in
\mathcal{X}$ holds. But then there can only be one other set $B$ with $B \cup \{x\} 
\in \mathcal{X}$ and $B = V_j \setminus \{x\}$ hence a preference between $A$ and 
$B$ was introduced by reverse strict independence.

Next, we consider the Class 2 sets.
We distinguish three types of Class 2 sets. We say a set $X$ is
\begin{itemize}
\item type 1 if $z^*_{AB} \in X$,
\item type 2 if $z_{AB},z^*_{AB} \not\in X$,
\item type 3 if $z_{AB} \in X$.
\end{itemize}
Then, we set $X \prec Y$ if $\type(X) < \type (Y)$.
Furthermore, for all type 1 sets $X,Y$ we set $X \prec Y$ if 
\begin{itemize}
\item $z^*_{AB} \in X, z^*_{CD} \in Y$ and $z^*_{AB} < z^*_{CD}$,
\item $z^*_{AB} \in X,Y$ and $x_{AB} \not\in X, x_{AB} \in Y$,
\item $z^*_{AB}, x_{AB} \in X,Y$ and $\max(X \triangle Y) \in Y$,
\item $z^*_{AB} \in X,Y x_{AB} \not\in X,Y$ and $\max(X \triangle Y) \in Y$.
\end{itemize}

It is straightforward to check that this order satisfies strict independence and is compatible 
with maximal dominance on all type 1 sets.
Similarly, we set for all type 3 sets 
\begin{itemize}
\item $z_{AB} \in X, z_{CD} \in Y$ and $z_{AB} < z_{CD}$,
\item $z_{AB} \in X,Y$ and $x_{AB} \not\in X, x_{AB} \in Y$,
\item $z_{AB}, x_{AB} \in X,Y$ and $\min(X \triangle Y) \in X$,
\item $z_{AB} \in X,Y x_{AB} \not\in X,Y$ and $\min(X \triangle Y) \in X$.
\end{itemize}
Again, it is straightforward to check that this order satisfies strict independence and is compatible 
with maximal dominance on all type 3 sets.

Now, we covered all possible applications of strict independence on Class 1 sets 
and on type 1 and 3 sets. The only possible application of strict independence 
that includes Class 1 and Class 2 sets is adding $z_{AB}$ or $z_{AB}^*$
to a Class 1 set and a Class 2 set. Then, by construction both resulting sets
are type 1 resp.\ 3 sets. Therefore, we can apply reverse strict independence. 
By construction, this does not lead to a cycle if any only if we started with an
positive instance of \textsc{Betweenness}.

It remains to consider applications of strict independence that include type 2 sets.
All type 2 sets are of one of the following forms:
\[\{x_{AB}\},\{y_{AB}\}, \{x_{AB},y_{AB}\}, \{a_1\}, \{a_1, x_{AB}\}, \{b_l\}, \{b_l,y_{AB}\}\]
Now, any application of strict independence where the same element is added to two singletons 
is clearly satisfied by any order that satisfies maximal dominance.
This leaves the case that $z_{AB}$ or $z_{AB}^*$ is added to two different 
type 2 sets. Now, by construction, we have $\{a_1\} \prec \{a_1,x_{AB}\} \prec \{x_{AB}\}$
and $\{a_1, z_{AB}\} \prec \{a_1,x_{AB}, z_{AB}\} \prec \{x_{AB}, z_{AB}\}$.
Therefore, the case where $z_{AB}$ is added is satisfied. The case that $z_{AB}^*$
is added is similar.

It can be checked that this covers all possible applications of strict independence.
Finally, we can extend this order to a weak order because extensions do not produce new
instances of strict independence. 
\end{proof}

\section{Discussion}\label{sec:diss}

Lifting a preference order on elements of some universe to a preference
order on subsets of this universe 
respecting certain axioms is a fundamental problem, but
impossibility results by Kannai and Peleg and by Barber{\`a} and Pattanaik 
pose severe limits on when
such liftings exist if
\emph{all} non-empty subsets of the universe have to be ordered.
We observed that
these impossibility results may be avoided
if \emph{not all} non-empty subsets of the universe have to be ordered.
This raises the questions how hard it is to recognize 
families for which dominance and (strict) independence are jointly 
satisfiable.

Our results show that, in general, we cannot easily recognize 
families of sets on which we can avoid 
Kannai and Peleg's or Barbera and Pattanaik's impossibility results.
While these results are largely negative, we observe that 
determining if the family of sets is strongly orderable is 
important in many applications but not always time-sensitive.
Therefore, we believe that the order lifting approach studied here
may be useful for applications where partial orders are
acceptable, as we can construct a partial order that satisfies dominance and strict 
independence in polynomial time, whenever such an order exists.
Moreover, we have seen that we can
always find a partial order that satisfies dominance and weak independence.
This provides a practical solution if partial orders are acceptable
and the considered family of sets turns out not to be strongly
$DI^S(E)$-orderable or if the high complexity means that checking for 
strong orderability with respect to dominance and strict independence
is not feasible.

Additionally, we observe that an important implication
of the presented hardness results is that the characterization 
results from \citet{JAIR} can not easily be generalized to arbitrary
families of sets. Indeed, any property that characterizes strong
orderability on arbitrary families of sets must be $\Pi^p_2$-complete
to check. This precludes nearly all simple properties of 
families of sets from characterizing strong orderability.
On the other hand, we observe that hypergraph
colorability is NP-hard to check. Hence, a generalization of
the characterization of weak orderability in terms of 
colorability by \citet{JAIR} is not necessarily ruled out by the NP-completeness 
of checking weak $DI^S$-orderability.

There are some remaining gaps in our results that are left to future work.
First of all, the complexity of weak $DI$- and $DIE$-orderability
is left open. Furthermore, it remains open
if strengthening strict independence 
influences the complexity of the studied problems.
Moreover, our research opens several new directions for future studies.

First of all, 
Kannai and Peleg's or Barbera and Pattanaik's impossibility results
are the most prominent but certainly not the only impossibility results.
Other interesting impossibility results where for example proven by \citet{GeistE11}
and \citet{jones1982}. These other results also assume that the whole power set needs to be ordered.
Therefore, one could study the questions raised in this thesis the same way also for these
other impossibility results.

Moreover, the representation of families by boolean circuits is 
extremely powerful and therefore leads to an exponential blow up in complexity.
On the other hand, the  representation of sets by the connectivity condition on graphs
used by \citet{JAIR} can decrease the complexity of the studied problems,
but is very restrictive.
Therefore, future research is needed to identify
succinct representations that are as expressive as possible without
an exponential blow up in complexity.
Implicit representations of families of sets 
that appear in important applications would be especially interesting to study.
Knowledge representation often uses logic formalisms 
towards this end. For instance, formulas 
can be viewed as concise representations of the families of their models. 
Together with an order of the atoms in the formulas, it is natural to 
ask how to rank these models and for which classes of formulas 
such a lifting respects certain criteria.

We note that the presented reductions all require relatively large sets of objects $X$.
However, in many practical applications the number of objects is often rather small.
For example, in many voting scenarios the number of candidates seldomly exceeds 10.
Therefore, it would be worth exploring the parametrized complexity of the discussed problems 
in particular with regards to the parameter $|X|$.

Furthermore, while we studied the effect of allowing the lifted order to be incomplete,
we always assumed that the order on $X$ is a linear order.
However, there are important applications, for example in argumentation \citep{BeirlaenHPS18,AAAI21},
where it is necessary to lift partial orders.
Hence, it would be interesting to study the effect of allowing the order on $X$ to be incomplete
and, in particular, to explore whether the studied problems become 
easier if only few comparisons between objects are specified.
This could either be achieved by means parametrized complexity theory
or by aiming for classification results in the style of \citet{JAIR}.

Finally, several interesting questions arise when applying the order lifting approach
in specific settings. For example, if lifted orders are used in voting or (ordinal) allocation
to generate the input to a voting rule or allocation method,
then any axiom essentially represents a domain restriction in that it forces a specific structure 
on the lifted orders. For example, if the lifted order is a linear order that satisfies strict independence 
and we have four alternatives $A,B,C,D$ such that $C = A \cup \{x\}$ and $D = B \cup \{x\}$,
then the lifted orders of all voters must order $A$ and $B$ the same as $C$ and $D$.
This added structure might influence which voting rule or allocation method satisfies the most
desirable axioms.
Therefore, it would be highly interesting to study 
the interplay between lifting procedures and axioms on the one hand and voting rules and other 
social choice mechanisms on the other hand.

\section*{Acknowledgments}

The author wants to thank the anonymous reviewers for their valuable feedback and helpful suggestions.
Furthermore, the author would like to thank the Austrian Science Fund (FWF) which supported this work under Grant number P31890 and J4581.
This paper is an extension of an AAAI 2020 paper \citep{AAAIpaper}.
Moreover, the results presented in this
paper also appeared in the PhD thesis of the author \citep{thesis}.

\bibliographystyle{plainnat}
\bibliography{literature}

\end{document}